\newcommand{\M}{\mathcal{M}}
\newcommand{\depth}{\mathrm{depth}}
\newtheorem{definition}{Definition}
\newtheorem{theorem}{Theorem}
\newtheorem{lemma}{Lemma}
\newtheorem{remark}{Remark}
\newtheorem{example}{Example}
\newtheorem{proposition}{Proposition}
\newtheorem{convention}{Convention}
\definecolor{dark_red_color}{RGB}{201,0,0.3}
\def\longversion{1} % 1 for long version, 0 for short version.
\def\BibTeX{{\rm B\kern-.05em{\sc i\kern-.025em b}\kern-.08em
    T\kern-.1667em\lower.7ex\hbox{E}\kern-.125emX}}
\begin{document}
\title{Efficient Planning in Large-scale Systems Using \\ Hierarchical Finite State Machines}
%Modular Large-scale Control Systems as Hierarchical Finite State Machines: \\ Formalism and Efficient Planning} 
%Modular Large-scale Control Systems: Formalism and Efficient Planning
\author{Elis Stefansson and Karl H. Johansson %\IEEEmembership{Member, IEEE}
\thanks{This work was supported in part by Swedish Research Council Distinguished Professor Grant 2017-01078, Knut and Alice Wallenberg Foundation Wallenberg Scholar Grant, and Swedish Strategic Research Foundation FUSS SUCCESS Grant. } 
%This work was partially funded by the Swedish Foundation for Strategic Research, the Swedish Research Council, and the Knut and Alice Wallenberg foundation.
%Manuscript received 5 Mars 2024; revised 29 May 2024; accepted 13 Sep 2024. Date of publication 23 Oct 2024; date of current version 9 Nov 2024.
\thanks{E. Stefansson and K.H. Johansson are with the School of Electrical Engineering and Computer Science, KTH Royal Institute of Technology, 11428, Stockholm Sweden (e-mail: \{elisst,kallej\}@kth.se). The authors are also affiliated with Digital Futures.}}

\maketitle

\begin{abstract}
We consider optimal planning in a large-scale system formalised as a hierarchical finite state machine (HFSM). A planning algorithm is proposed computing an optimal plan between any two states in the HFSM, consisting of two steps: A pre-processing step that computes optimal exit costs of the machines in the HFSM, with time complexity scaling with the number of machines; and a query step that efficiently computes an optimal plan by removing irrelevant subtrees of the HFSM using the optimal exit costs. The algorithm is reconfigurable in the sense that changes in the HFSM are handled with ease, where the pre-processing step recomputes only the optimal exit costs affected by the change. The algorithm can also exploit compact representations that groups together identical machines in the HFSM, where the algorithm only needs to compute the optimal exit costs for one of the identical machines within each group, thereby avoid unnecessary recomputations. We validate the algorithm on large systems with millions of states and a robotic application. It is shown that our approach outperforms Dijkstra's algorithm, Bidirectional Dijkstra and Contraction~Hierarchies.
%The algorithm can also handle duplicates of machines in the HFSM efficiently by reusing optimal exit costs between copies. 

\end{abstract}

\begin{IEEEkeywords}
Efficient planning, hierarchical finite state machine, large-scale system, modularity
\end{IEEEkeywords}

\section{Introduction}
\subsection{Motivation}
Large-scale systems are ubiquitous with examples such as power grid and intelligent transportation systems. To manage such large-scale systems, a common approach is to exploit their modularity by breaking up the system into components, separately analyse these components and then aggregate the~results. One theory that formalises modularity is the notion of a hierarchical finite state machine (HFSM). An HFSM is simply a collection of finite state machines (FSMs) nested into a hierarchy, where the hierarchy specifies which FSM is nested into which. Originally proposed by \cite{harel1987statecharts}, the intention was to model reactive systems, such as hand-watches. In this research, we are instead interested in using HFSM towards a formal modular theory for large-scale \emph{control systems} (where we additionally have to choose control inputs), and to design control algorithms that efficiently compute optimal plans for such systems by exploiting their modularity.

%\cite{harel1987statecharts} and robotic policies \cite{schillinger2016human}

%Manually controlling such large systems is impractical, 

%leading to the need of automatic control algorithms. These algorithms should be ideally efficient scaling well with the size of the large system.

\subsection{An illustrative example}\label{illustrative_example} %from \cite{stefansson2023efficient}
Consider the robot application, illustrated in Fig. \ref{fig:robot_example_detailed_overview}. Here, a robot is moving inside a house. At each location in the house, the robot can perform a task, in this case work at a lab desk. The robot can also go outside the house and move to other houses. This control system (where we control the robot) can be naturally modelled as an HFSM with three layers corresponding to the houses, the locations inside a house, and the lab desk. The first research question is \emph{how to exploit the hierarchical structure of such systems to design control algorithms that compute optimal plans efficiently.}

%\textbf{Reconfigurable} 
Consider now that one of the houses (House 2) gets some locations blocked due to maintenance work, as seen in Fig.~\ref{fig:robot_example_change_overview}. In this case, the robot might need to replan its route to account for the change. However, only a part of the system has changed and therefore, only a part of the planning should be updated. The second research question is \emph{how to construct a reconfigurable planning algorithm that handles such changes~efficiently.}

%\textbf{Reusable} 
Finally, we note that there might be several houses that are identical. In this case, the identical houses should ideally be grouped together into a more compact representation, keeping just one house as a reference. Furthermore, this compact representation should be exploited by the planning algorithm to avoid unnecessary recomputations. The third research question is therefore \emph{how planning algorithms can exploit such a compact representation to obtain faster computations of optimal~plans.}

\begin{figure}
	\centering
  \includegraphics[width=0.4\textwidth]{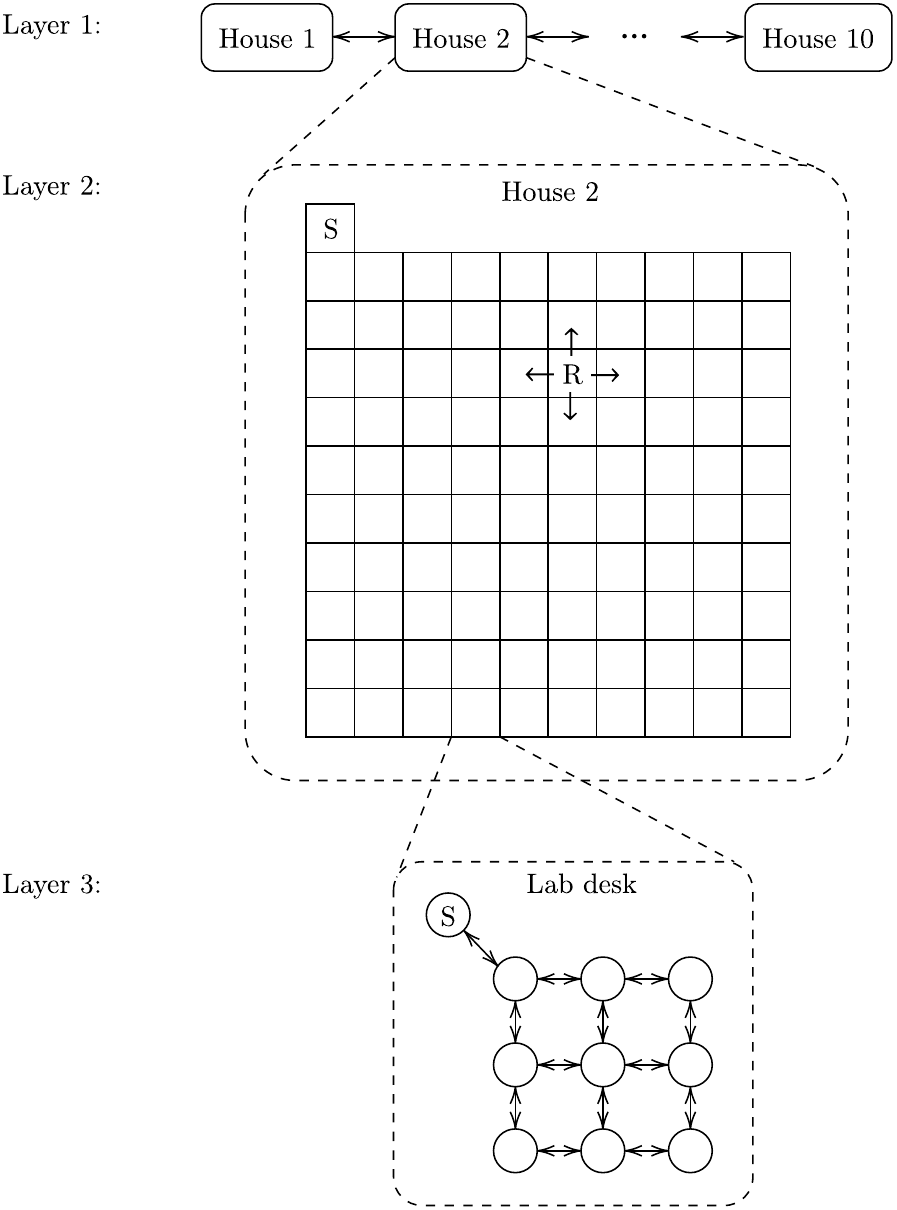}
  \caption{Robot application modelled as an HFSM with three~layers.}  %from \cite{stefansson2023efficient} 
  %In each warehouse, the robot can move around to different locations, and at each location it can do different tasks, with decision costs given by some cost~functional.
  \label{fig:robot_example_detailed_overview}
\end{figure}

\subsection{Contribution}
The main contribution of this paper is to formalise modular large-scale systems as hierarchical finite state machines and develop a planning algorithm that efficiently computes optimal plans for such systems. More precisely, our contribution is~three-fold:

First, we extend the HFSM theory in \cite{biggar2021modular} to machines with costs, i.e., Mealy machines (MMs). We call the resulting hierarchical machines for hierarchical Mealy machines (HiMMs). We justify this construction by proving that an HiMM is modular in a strict mathematical sense, extending results from~\cite{biggar2021modular}. We then introduce modifications in such systems and extend the setup to the case when machines may be reused at several parts of the HiMM. Finally, we show how the theory can be extended to compactly represent identical machines in the HiMM by grouping them together and saving just one machine as a~reference.

Second, we propose a planning algorithm for computing an optimal plan between any two states in an HiMM. The algorithm consists of two steps. The first step is a preprocessing step, the Optimal Exit Computer, that computes optimal exits costs of each MM in the HiMM, with time complexity $O(|X|)$ where $|X|$ is the number of MMs in the HiMM. This step needs to be computed only once for a fixed HiMM. The second step is a query step, the Online Planner, that efficiently computes an optimal plan between any two states by removing irrelevant subtrees of the HFSM using the optimal exit costs, with time complexity $O(\mathrm{depth}(Z) \log ( \mathrm{depth}(Z)))$ (compared to Dijkstra's algorithm with time complexity scaling exponentially with $\mathrm{depth}(Z)$ in the worst case), where $\mathrm{depth}(Z)$ is the number of layers in the HiMM. The algorithm is reconfigurable in the sense that a modification in the HiMM is handled with ease, because the Optimal Exit Computer only needs to update affected optimal exit costs in time $O(\mathrm{depth}(Z))$. We also extend the algorithm to exploit identical machines grouped together. Here, significant speed-ups can be achieved by the Optimal Exit Computer, in the best case reducing the time complexity from $O(|X|)$ to $O(\mathrm{depth}(Z))$. These time complexity expressions are simplified for brevity, see the remaining paper for the full expressions.

Third, we consider numerical evaluations with systems up to millions of states and a robot application. We compare our planning algorithm with Dijkstra's algorithm, Bidirectional Dijkstra and Contraction Hierarchies. Our planning algorithm is several times faster (at query time) than Dijkstra's algorithm and Bidirectional Dijkstra (but with a slower preprocessing step), while Contraction Hierarchies computes an optimal plan several times faster than ours but at the expense of a preprocessing step being several magnitudes slower than ours. Thus, our planning algorithm obtains a good balance between preprocessing time and query time. Furthermore, grouping together identical machines, we can compute optimal plans for enormous systems (such as an HiMM with equivalent MM having $10^{150}$ states), where the other methods become intractable.

%\elis{perhaps elaborate on how we outperform them or save till later?}

Early versions of this paper have been presented at conferences as \cite{stefansson2023ecc,stefansson2023cdc}. This paper extends the conference papers considerably by considering HiMMs with identical machines with a corresponding hierarchical planning algorithm, and justifying the definition of an HiMM by proving it is modular in a strict mathematical sense. We also provide additional~simulations. 

%More precisely, we extend the result found in \cite{stefansson2023ecc,stefansson2023efficient} by considering HiMMs with duplicates, justifying the definition of an HiMM by proving it is modular in a strict mathematical sense. We also provide extended discussions and additional~simulations.

%The algorithm is reconfigurable in the sense that changes in the HFSM are handled with ease, where the pre-processing step recomputes only the optimal exits costs affected by the change. The algorithm can also handle copies of machines efficiently by computing optimal exits costs for only one of the copies.

\begin{figure}
	\centering
  \includegraphics[width=0.49\textwidth]{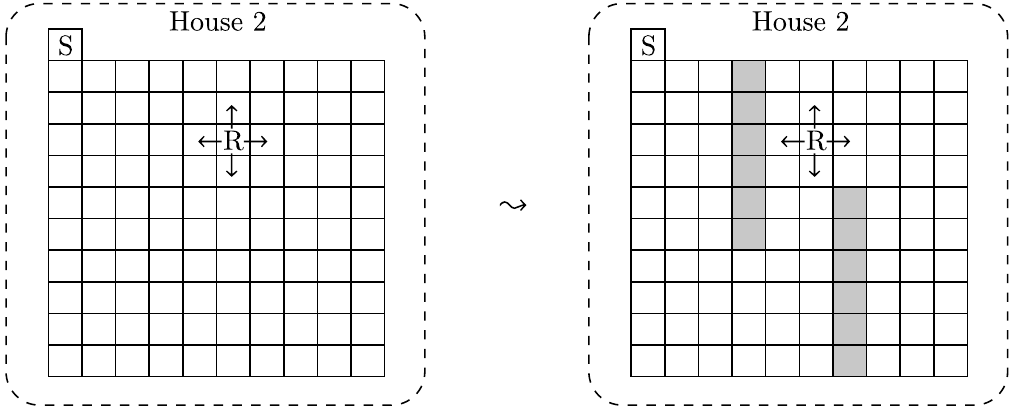}
  \caption{Change in House 2 in the HiMM given by Fig. \ref{fig:robot_example_detailed_overview}. Grey areas depict blocked locations.} 
  %In each warehouse, the robot can move around to different locations, and at each location it can do different tasks, with decision costs given by some cost~functional.
  \label{fig:robot_example_change_overview}
\end{figure}

\subsection{Related work}
There is an extensive amount of work related to mitigating the complexity of decision-making in large-scale, high-dimensional and complex systems in a broader context, with notable examples including behavioral trees \cite{ogren2022behavior,pezzato2023active,biggar2021expressiveness,biggar2022modularity}, hierarchical reinforcement learning \cite{pateria2021hierarchical,lee2023hierarchical}, hierarchical temporal logic \cite{luo2024decomposition,wei2025hierarchical} and symbolic abstraction \cite{mallik2019compositional}. The general goal is to reduce the decision space to lower the computational complexity without losing significant performance.

The use of HFSMs is one solution to lower the computational complexity of large-scale systems and have mainly been used to model reactive systems that transition according to non-chosen external inputs (e.g., hand-watch moving the minute-watch one click after 60 seconds has passed), see e.g., \cite{harel1987statecharts,schillinger2016human,millington2018artificial}, with applications including model checking \cite{alur1998model,alur2000efficient,alur2005analysis,laster1998modular}, modelling non-human agents in games \cite{millington2018artificial}, and implementing high-level controllers in robotics \cite{schillinger2016human} together with related reactive control architectures such as behaviour trees \cite{ogren2022behavior,pezzato2023active} and similar architectures \cite{biggar2021expressiveness,biggar2022modularity}. HFSMs are also widely used tool in the Unified Modelling Language \cite{booch2005unified}. Here, we instead use HFSMs to model control systems, i.e., systems where an agent can choose control inputs to alter the system. In discrete-event systems \cite{cassandras2008introduction}, hierarchical systems have been used extensively with applications such as fault diagnosis \cite{de2020hierarchical}, safety checking \cite{perk2006hierarchical,ma2006nonblocking,wang2020real}, hierarchical coordination and consistency \cite{da2002assume,zhong1990consistency}, and designing hierarchical controllers \cite{hubbard2002dynamical,wong1996hierarchical}, just to mention a few. In particular, state tree structures have been used as an alternative HFSM formalisation to compute safe executions  \cite{ma2006nonblocking,wang2020real}. We consider a different formalism and focus instead on an optimal planning setup minimising~cost.

% Planning algorithms.
% Dijkstras algorithm
% pre-processing step
% hierarchical approaches
% however: hides structure, less suitable for reconfig and possible exponential blow-up when expanding.

There is a great variety of optimal planning algorithms for graphs \cite{bast2016route} such as Dijkstra's algorithm \cite{Dijkstra1959,DijkstraFibonacci}. Several algorithms use a preprocessing step to obtain optimal plans faster at run-time \cite{bast2016route}. In particular, hierarchical approaches \cite{geisberger2012exact,dibbelt2016customizable,mohring2007partitioning,10.1145/1498698.1564502,delling2017customizable,storandt2022customizable,wan2025parallel} reminiscent our approach the most. However, these methods typically rely on heuristics to speed up the query time, which in the worst-case perform no better than Dijkstra's algorithm. Moreover, to use them, one needs to flatten the HFSM to an equivalent FSM, a process which hides the hierarchical structure making it less suitable for, for instance, changes in the HFSM. The flattening process could also be exponential in time (with respect to the depth of the HFSM) for HFSMs that compactly represent identical machines (as in this paper), and could thus be intractable. The work \cite{timo2014reachability} considers a variation of an HFSM formalism seeking a plan that minimises the length between two configurations. We consider the more general non-negative costs, where the minimial length objective can be recovered by setting all costs equal to one. %Worth mentioning is also classical AI planning such as PDDL with best-first heuristic search \cite{}, where the latter exploits propositional logic rather than hierarchical structure and focus more on satisfisability than optimality. 

The work \cite{biggar2021modular} considers how to decompose an FSM without costs into an equivalent HFSM, using analogies with the modular decomposition for graphs. No optimal planning is considered. In this work, we instead consider HFSM with costs and how to optimally plan in such~systems.

Finally, closest to this work is our recent paper \cite{stefansson2025faster}, which also exploits hierarchical structure for shortest-path computation. In contrast, \cite{stefansson2025faster} considers single-source shortest paths on weighted graphs and computes a hierarchy of nested graphs for preprocessing (the acyclic-connected tree), whereas this paper assumes a given hierarchy and address the more general all-pairs setting on HiMMs with modifications and identical machines.

%Finally, this paper is an extension of the previous work \cite{stefansson2023ecc,stefansson2023efficient}. More precisely, we extend the result found in \cite{stefansson2023ecc,stefansson2023efficient} by considering HiMMs with duplicates, justifying the definition of an HiMM by proving it is modular in a strict mathematical sense. We also provide extended discussions and additional~simulations.

%and improving the planning algorithm using Bidirectional Dijkstra
\subsection{Outline}
The remaining of the paper is structured as follows. Section~\ref{problem_formulation} formalises HiMMs, their modifications, how identical machines can be compactly represented, and formulates the problem statement. Section~\ref{efficient_hier_planning} proposes the planning algorithm, Section~\ref{reconfig_hier_planning} extends the algorithm to be reconfigurable, and Section~\ref{duplicate_hier_planning} extends the algorithm to efficiently handle identical machines. Section~\ref{numerical_evaluations} presents numerical evaluations. Finally, Section~\ref{hier:conclusion} concludes the paper. \if\longversion0 An extended version of this paper can be found at \cite{stefansson2024modular} that contains all the proofs in Appendix.\else
All proofs are found in Appendix.
\fi

\subsection{Notation}
We let $f: A \rightharpoonup B$ denote a partial function between sets $A$ and $B$, where we use the short-hand notation $f(a) = \emptyset$ to mean that $f(a)$ is not defined at $a \in A$. For a labelled directed graph $G$, we use $u \xrightarrow{a} v$ to denote an arc from $u$ to $v$ labelled with $a$, and $(u \xrightarrow{a} v) \in G$ to mean that the arc $u \xrightarrow{a} v$ exists in $G$. Finally, $|A|$ denotes the cardinality of a set $A$.

\section{Problem Formulation}\label{problem_formulation}
In this section, we introduce HiMMs, present changes given as modifications, describe how identical MMs can be compactly represented, and formalise the problem statement.

\subsection{Hierarchical Mealy Machines} %also found in \cite{stefansson2023ecc}
We formalise HiMMs, extending the setup in \cite{biggar2021modular} to machines with transition costs. To this end, we first define an~MM.

\begin{definition}[Mealy Machine]\label{def:mealy_machine}
An MM is a tuple $M = (Q,\Sigma,\delta,\gamma,s)$ where $Q$ is the finite set of states, $\Sigma$ is the finite set of inputs, $\delta: Q \times \Sigma \rightharpoonup Q$ is the transition function, $\gamma: Q \times \Sigma \rightharpoonup \mathbb{R}^+ = [0,\infty)$ is the cost function, and $s \in Q$ is the start state. We sometimes use subscript to stress that e.g., $\delta_M$ is the transition function of the MM $M$.
\end{definition}

The intuition is as follows. An MM $M$ starts in the start state $s \in Q$. Given current state $q \in Q$ and input $a \in \Sigma$ the machine transits to $\delta(q,a) \in Q$ and receives cost $\gamma(q,a) \in \mathbb{R}^+$ if $\delta(s,a) \neq \emptyset$ (i.e., the input $a$ is supported at $q$), otherwise $M$ stops. Repeating this procedure, we obtain a trajectory:

\begin{definition}
A sequence $(q_i,a_i)_{i=1}^N \in (Q \times \Sigma)^{N}$ is a trajectory of an MM $M = (Q,\Sigma,\delta,\gamma,s)$ if $q_{i+1} = \delta(q_i,a_i) \in Q$ for $i \in \{1,\dots,N-1\}$.
\end{definition}

In our case, we assume that an agent (e.g., a robot) can \emph{choose} the inputs. We therefore also define a plan as a sequence of inputs $(a_i)_{i=1}^N$ and the induced trajectory $(q_i,a_i)_{i=1}^N$ from state $q_1 \in Q$ to be the resulting trajectory by sequentially applying $(a_i)_{i=1}^N$ starting at $q_1$ (provided the trajectory exists, that is, the machine does not stop due to unsupported input). Next, we define an~HiMM, motivated first with a simple example followed by the formal definition.

\begin{figure}
	\centering
  \includegraphics[width=0.20\textwidth]{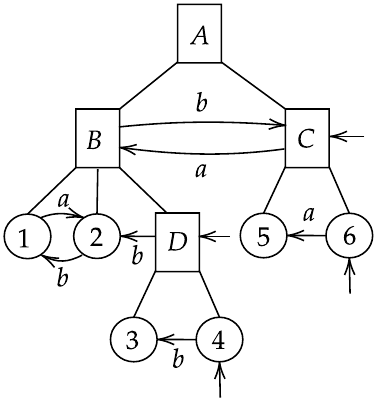} % Was width=0.20\textwidth
  \caption{Example of an HiMM depicted in a tree-form.} %Picture from~\cite{stefansson2023efficient}
  %In each warehouse, the robot can move around to different locations, and at each location it can do different tasks, with decision costs given by some cost~functional.
  \label{fig:Himm_transition_example_v2}
\end{figure}

%\footnote{This is inspired by recursive function calls in programming languages, where each MM acts as a local function and ``returns'' if there is an unsupported input, passing the input to a function higher up in the hierarchy.}

\begin{example} \label{ex_intuition_before_HiMM_def}
    Consider MMs $A$, $B$, $C$ and $D$ in Fig. \ref{fig:Himm_transition_example_v2}, nested as indicated by lines in the tree. For example, state $4$ belongs to $D$, which in turn is nested as a state of $A$ (cf. Fig. \ref{fig:robot_example_detailed_overview}). Transitions in each MM are given by labeled arrows, while the start state is depicted with a small arrow. For brevity, we do not depict costs. The six leaves $1$ to $6$ are the states of the hierarchy. We illustrate how a transition is defined in this hierarchy by applying input $a$ at state $5$. Here, $C$ does not have an $a$-transition from $5$. To handle this, we instead move up in the hierarchy until we find an MM that supports $a$ and move accordingly, in this case, from $C$ to $B$. Then we go downwards in the hierarchy recursively following the start states, from $D$ and then $4$. Thus, applying $a$ at $5$ transits the nested machine to $4$. Moreover, the cost of this transition is where the $a$-transition occurs, i.e., from $C$ to $B$ in the MM $A$. The intuition of this design is to ensure \emph{modularity}. That is, from $A$'s perspective, what happens inside $C$ is invisible (cf. Fig. \ref{fig:robot_example_detailed_overview} with the robot moving inside a house). Instead, $A$ only sees that $C$ took an $a$-exit and charges the corresponding cost (cf. Fig. \ref{fig:robot_example_detailed_overview} with the robot moving from one house to the next). For this abstraction to be consistent with optimal planning, all $a$-exits out of $C$ must have the same cost at the $A$-level (otherwise $A$ would need to distinguish between trajectories inside $C$). This becomes the formal condition that makes each MM a \emph{module} in the hierarchy, proved in Section \ref{HiMM_are_modular}.
    
    %This is inspired by recursive function calls, where each MM acts as a function and ``returns'' whenever there is an unsupported input, passing the input to a function higher up. The cost of this transition is where the $a$-transition occurs, i.e., from $C$ to $B$ in the MM $A$. 
\end{example}

\begin{remark}[What the input set $\Sigma$ represents]
We stress that although all MMs in Example \ref{ex_intuition_before_HiMM_def} share the same input set $\Sigma = \{a,b\}$, the \emph{same input label may represent different operations in different MMs}, e.g., action $a$ could be a local action inside $C$ and a different action at the $A$-level. Thus, ascending the hierarchy occurs only when action $a$ is genuinely unsupported at the current state and we want it to ascend up in the hierarchy, not by default. This is true in general for HiMMs. For example, in the robot application in Section~\ref{illustrative_example}, the robot can only move to another house when it is at the entrance $S$; for all other states inside the house, the robot instead moves to a neighbouring location or remain idle if it cannot move, i.e., blocked by a wall (the idle action is the general way to treat actions that are not applicable at a state but also undesired to ascend up in the hierarchy). Thus, the interesting quantity is really the cardinality $|\Sigma|$ of $\Sigma$ since it encodes how many actions we can have in each MM.
\end{remark}

With the ideas from Example \ref{ex_intuition_before_HiMM_def}, we now formally define a HiMM using three objects: a start function (descent through start states), a hierarchical transition function $\psi$ (ascend-then-descend), and a hierarchical cost function $\chi$ (the cost charged in the process).
 
%if it is a trajectory. 
 
 %\footnote{That is, the induced trajectory $(q_i,x_i)_{i=1}^N$ is the trajectory you get if you start at $q_1 \in Q$ and apply the inputs from $(x_i)_{i=1}^N$ sequentially. It is easy to see that there is at most one such trajectory, and the only case when such a trajectory does not exits is if $M$ happens to stop before all inputs are applied (i.e., at some state, we applied an input not supported at that state).}
\begin{definition}[Hierarchical Mealy Machine]\label{HiMM_def}
An HiMM $Z = (X,T)$ consists of a set of MMs $X$ with common input set $\Sigma$, and a tree $T$ with nodes being MMs from $X$ (specifying how the machines in $X$ are nested). More precisely, a node $M \in X$ of $T$ has $|Q_M|$ outgoing arcs on the form $M \xrightarrow{q} N_q$ where either $N_q \in X$, specifying that state $q \in Q_M$ corresponds to the MM $N_q$ one layer down in the hierarchy, or $N_q = \emptyset$, specifying that $q$ has no further refinement and is simply just a state. We call $Q_Z := \cup_{M \in X} Q_M$ the augmented states of $Z$ and $S_Z := Q_Z \cap \{q: N_q = \emptyset \}$ the states of $Z$.\footnote{We implicitly assume that all state sets $Q_M$ are distinct.} We have corresponding notions for start state, transition function and cost function:
\begin{enumerate}[(i)]
\item Start state ${\mathrm{start}: X \rightarrow S_Z}$ is the function recursively defined as ($s_M$ is the start state of $M$)
\begin{equation*}
\mathrm{start}(M) =
\begin{cases}
\mathrm{start}(N), &  \textrm{$(M \xrightarrow{s_M} N) \in T$, $N \in X$} \\
s_M, & \textrm{otherwise.}
\end{cases}
\end{equation*}
\item Let $q \in Q_M$ with $M \in X$ and $v=\delta_{M}(q,a)$. The hierarchical transition function $\psi: Q_Z \times \Sigma \rightharpoonup S_Z$ is recursively defined as
\begin{equation*}
\psi(q,a) =
\begin{cases} 
\mathrm{start}(Y), & v \neq \emptyset, (M \xrightarrow{v} Y) \in T, Y \in X \\ % Had Y \in X Here too
v, & v \neq \emptyset, \mathrm{otherwise} \\
\psi(w,a), & v = \emptyset, (W \xrightarrow{w} M) \in T, W \in X \\ % Had W \in X Here too
\emptyset, & v = \emptyset, \mathrm{otherwise,}
\end{cases}
\end{equation*}
and the hierarchical cost function $\chi: Q_Z \times \Sigma \rightharpoonup \mathbb{R}^+$ is recursively defined as
\begin{equation*}
\chi(q,a) =
\begin{cases} 
\gamma_{M}(q,a), \hspace{-2pt} & v \neq \emptyset \\
\chi(w,a), & v = \emptyset, (W \xrightarrow{w} M) \in T, W \in X \\ % Had W \in X Here too
\emptyset, & v = \emptyset, \mathrm{otherwise.}
\end{cases}
\end{equation*} 
\end{enumerate}
\end{definition}
An HiMM $Z = (X,T)$ works similarly to an MM. The HiMM starts in $\mathrm{start}(M_0)$ where $M_0$ is the root MM of $T$. Given current state $q \in S_Z$ and input $a \in \Sigma$, we transit to $\psi(q,a) \in S_Z$ with cost $\chi(q,a) \in \mathbb{R}^+$ if $\psi(q,a) \neq \emptyset$, otherwise $Z$ stops. A trajectory, plan and induced trajectory are defined analogously to the MM case.

%by the input making the transition
%\subsubsection{Intuition}
%\begin{example}
%For intuition of Definition \ref{fig:Himm_transition_example_v2}, consider the HiMM $Z = (X,T)$ illustrated in Fig. \ref{fig:Himm_transition_example_v2}. Here, $Z$ has six states numbered 1 to 6 and four MMs $A, B, C$ and $D$, where the dependency between them are given by lines (e.g., $4$ is a state of the MM $D$ which in turn corresponds to a state of the MM $B$ higher up in the hierarchy). We have also depicted the transitions of each machine with arcs labelled by corresponding inputs (e.g., applying $b$ at state $4$ makes the HiMM transit to state $3$), as well as start states given by small arrows (e.g., 4 is the start state of $D$). For brevity, we do not depict the costs. To understand the hierarchical transition function, assume we are in state 5 applying input $a$. If there is an $a$-transition from state 5, then we move accordingly. When this is not the case, we instead go upwards in the hierarchy until we find an $a$-transition. In our case, we move to $C$ from which there exits an $a$-transition, so we move to $B$. Then, we go downwards in the hierarchy recursively following the start states, from $D$ and then to $4$. Hence, $\psi(5,a) = 4$. Furthermore, the cost $\chi(5,a)$ is equal to the cost where the $a$-transition happens, in this case, going from $C$ to $B$ in the MM $A$. This construction of an HiMM is to make it modular, where we care about that we got to $C$ with input $a$, but not exactly how. This intuition has a formal justification given next.
%\end{example} 
%\end{remark}

\subsubsection{HiMMs are Modular}\label{HiMM_are_modular}
We now justify the definition of an HiMM $Z=(X,T)$, by proving that $Z$ is modular in a mathematical sense, extending the modular theory for FSMs from \cite{biggar2021modular}. The idea is that a subset of states $S$ in an MM $M$ is a \emph{module} if it can be replaced with an abstract state without loosing any information of $M$. That is, if we contract $S$ to an point and later replace this point with $S$, we get back $M$ (cf. Fig \ref{fig:robot_example_detailed_overview} with $S$ being one of the houses).

To put this idea into a mathematical context, we need three operations: Contraction $M/S$ contracts a subset of states $S$ to a single state (also denoted $S$); Restriction $M[S]$ keeps only the states in $S$; Expansion $M \cdot_v S$ replaces state $v$ in $M$ with $S$ \if\longversion0{(see \cite{stefansson2024modular} for formal definitions).}\else{(see Appendix for formal definitions).}\fi With these operations, we can define a module:
\begin{definition}[MM Modules]\label{def:module_text_new}
A non-empty subset $S \subseteq Q_M$ of states of an MM $M$ is a \emph{module} if $M/S \cdot_S M[S] = M$.
\end{definition}
Similar to the FSM case \cite{biggar2021modular}, there is an equivalent  definition that is easier to verify, given by the following result:
\begin{proposition}\label{prop:module_text}
A non-empty subset of states $S \subseteq Q_M$ of an MM $M$ is a module if and only if it has at most one entrance\footnote{An entrance $e$ of $S$ is such that $e \in S$ and there is a $v \in Q_M \backslash S$ that can transit to $e$ via some input.} and for each $a \in \Sigma_M$, if $S$ has an $a$-exit\footnote{An $a$-exit to $S$ is a state $u \notin S$ such that some $v \in S$ transits to $u$ via some input.} then (i) that $a$-exit is unique (ii) every state in $S$ has an $a$-transition and (iii) all $a$-transitions out of $S$ have equal costs.
\end{proposition}
\begin{proof}[Sketch of proof]
Conditions (i)-(iii) are exactly what is needed for the operations contraction, restriction and expansion with respect to $S$ to be well-defined with equality $M/S \cdot_S M[S] = M$. Conversely, any violation of the conditions (i)-(iii) results in some of the operations to be ill-defined.\if\longversion0{ See \cite{stefansson2024modular} for a full proof.}\else{ See Appendix for a full proof.}\fi
\end{proof}
The theory of modules can easily be extended to HiMMs by saying that a subset of augmented states $S$ of an HiMM $Z$ is a \emph{module} if the states they contain is a module with respect to the equivalent flat MM $Z^F$, where $Z^F$ is formed by collapsing the hierarchy of $Z$ to one big MM with identical states and transitions.\footnote{Formally, $Z^F$ of an HiMM $Z$ is the MM given by $Z^F = (Q,\Sigma,\delta,\gamma,s) := (S_Z,\Sigma,\psi,\chi,\mathrm{start}(M_0))$ (with notation as in Definition~\ref{HiMM_def}).} With this definition, we can show that an HiMM is naturally modular:
\begin{proposition}\label{prop:himms_are_modular}
An HiMM $Z=(X,T)$ is modular in the sense that each MM $M \in X$ is a module of $Z$.
\end{proposition}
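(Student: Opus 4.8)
The plan is to reduce the statement to the already-proven MM-level characterization. By definition, a subset $S$ of augmented states of $Z$ is a module of $Z$ precisely when the set of states of $Z$ nested within $S$ forms a module of the flattened machine $Z^F$. So to show that a fixed $M \in X$ is a module of $Z$, I would let $S_M \subseteq S_Z$ denote the set of all (non-refined) states nested below $M$ in the tree $T$, and verify that $S_M$ satisfies the three conditions of Definition~\ref{def:module_text} relative to $Z^F$: at most one entrance, and for every input $a$ with an $a$-exit, that exit is unique, every state of $S_M$ has an $a$-arc, and all $a$-arcs leaving $S_M$ carry equal cost. Equivalently (and this is the cleaner route for writing) I would invoke the MM-version of Proposition~\ref{prop:modular_equivalence_text} and show the independence identity $Z^F/S_M \cdot_{S_M} Z^F[S_M] = Z^F$; but checking the three structural conditions directly is likely less bookkeeping.

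The key steps, in order, are as follows. First, unwind the recursion defining $\psi$ and $\chi$ to get an explicit description of the transitions of $Z^F$ that originate inside $S_M$: a transition from $q \in S_M$ on input $a$ stays inside $S_M$ whenever, walking up from the leaf of $M$ containing $q$, the first machine along the path (including $M$ itself) that supports $a$ is strictly below $M$ or is $M$ but with $\delta(\cdot,a)$ landing on a state still refined within $M$; otherwise the transition exits $S_M$, and in that case the $a$-transition "fires" at some ancestor machine $W$ of $M$ (or at $M$) on the state $m$ of $W$ that corresponds to $M$. Second, observe that the exit behaviour out of $S_M$ on input $a$ is therefore governed entirely by a single arc — the $a$-arc out of the state $m$ representing $M$ in its parent (or the first supporting ancestor) — which immediately gives conditions (i) uniqueness of the $a$-exit and (iii) equality of exit costs, since there is literally one arc doing all the work and $\chi$ copies its cost. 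Third, for condition (ii), the fact that the $a$-exit exists means $\delta_W(m,a)\neq\emptyset$ for the relevant ancestor; then for every leaf state $q\in S_M$, following the $\psi$-recursion upward reaches that same supporting machine, so $\psi(q,a)\neq\emptyset$, i.e. every state of $S_M$ has an $a$-arc in $Z^F$. Fourth, the "at most one entrance" condition: any transition of $Z^F$ entering $S_M$ from outside must, by the definition of $\psi$, land on $\mathrm{start}$ of the topmost relevant machine, and unwinding $\mathrm{start}$ shows this is always the single state $\mathrm{start}(M)$; hence $S_M$ has a unique entrance.

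The main obstacle I expect is the first step: carefully formalizing the flattening $Z^F$ and proving the "transition factors through a single ancestor arc" claim by induction on the depth of $M$ in $T$ (or on the height of the recursion in $\psi$). One has to be precise that the recursive "go up until an $a$-transition is found, then go down following start states" procedure in $\psi$, when restricted to states below $M$, either never leaves the subtree rooted at $M$ or leaves it through the unique arc $M \xrightarrow{v} \cdot$ in $T$'s parent structure — and that "either/or" is uniform across all leaves of $M$ because it is decided by machines at level $M$ and above, not by which leaf one started in. Once that structural lemma is in hand, conditions (i)–(iii) and the single-entrance condition fall out essentially by inspection, and the equivalence with the independence property (the MM analogue of Proposition~\ref{prop:modular_equivalence_text}, which we may assume) finishes the proof.
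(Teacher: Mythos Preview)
Your proposal is correct and follows essentially the same route as the paper: fix $M$, let $S_M$ be the set of states of $Z$ nested below $M$, and verify directly that $S_M$ satisfies the module conditions of Definition~\ref{def:module_text} in the flattened machine $Z^F$ by analysing the chain of nested MMs from a leaf up through $M$. The paper's proof is exactly this chain argument --- showing that any $a$-exit forces $\delta$ to be undefined at every level up to and including $M$, so that $\psi(u,a)=\psi(m,a)$ and $\chi(u,a)=\chi(m,a)$ for the state $m$ representing $M$ in its parent $W$, which immediately gives uniqueness of the exit and equality of exit costs; the entrance is $\mathrm{start}(M)$; and every state has an $a$-arc because the recursion either resolves inside $S_M$ or reaches the same $\psi(m,a)$.

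One small imprecision to clean up: in your second step you write that the exit transition ``fires at some ancestor machine $W$ of $M$ (or at $M$)''. It cannot fire at $M$ itself, since $\delta_M$ has codomain $Q_M$ and any $a$-transition resolved at $M$ keeps you inside $S_M$; the exit case is precisely when $\delta$ is undefined at every level up through $M$, so the firing happens strictly above $M$. This is exactly the observation the paper uses, and once stated cleanly your ``structural lemma'' needs no separate induction --- the chain $\emptyset \xleftarrow{u} K_1 \xleftarrow{k_1} \cdots \xleftarrow{k_n} M$ together with the recursive definition of $\psi$ gives $\psi(u,a)=\psi(m,a)$ in one line.
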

\begin{proof}[Sketch of proof]
We prove this using Proposition~\ref{prop:module_text}. For any MM $M$ in $Z$, the states nested within $M$ have a unique entrance ($\mathrm{start}(M)$) by construction. Moreover, conditions (i) to (iii) follow because the hierarchical transition function $\psi$ and cost function $\chi$ resolve unsupported inputs by ascending to the parent. Thus, all $a$-exits from $M$ are forced to share both their destination (whichever ancestor first supports $a$) and their cost (the cost charged at that ancestor).\if\longversion0{ See \cite{stefansson2024modular} for a full proof.}\else{ See Appendix for a full proof.}\fi
\end{proof}
Proposition \ref{prop:himms_are_modular} is the theoretical backbone of our optimal planning algorithm since it allows us to compute shortest paths in a modular fashion, by solving local shortest path problem in each MM in the HiMM. The remaining paper do no use these modularity results explicitly, but it is nonetheless the underlying reason why the theory works, and thus important to~highlight.\footnote{The analysis up to this point is not restricted to the codomain of the (hierarchical) cost function. In fact, the cost function in Definition \ref{def:mealy_machine} and the hierarchical cost function in Definition \ref{HiMM_def} could have any set $\Lambda$ as codomain. We made the restriction since the remaining paper needs $\Lambda$ to be $\mathbb{R}^+$.}

%\begin{remark}
%The analysis up to this point is not restricted to the codomain of the (hierarchical) cost function. In fact, the cost function in Definition \ref{def:mealy_machine} and the hierarchical cost function in Definition \ref{HiMM_def} could have any set $\Lambda$ as codomain. We made the restriction since the remaining paper needs $\Lambda$ to be $\mathbb{R}^+$.
%\end{remark}
 %The key takeaway is that the construction of an HiMM can be proved to be modular in an intuitive universal sense, justifying the~construction.
%\elis{Perhaps I should shorten this? It feels like it is a bit out of context even though it is nice. I could move almost all of the reasoning to Appendix since I still feel this is a good contribution.}

\subsubsection{Optimal Planning} In this paper, we consider optimal planning in an HiMM $Z$. To this end, the cost of a trajectory $z = (q_i,a_i)_{i=1}^N$ is the cumulative cost $C(z):=\sum_{i=1}^N \chi(q_i,a_i)$ if all $\chi(q_i,a_i) \neq \emptyset$ for all $i$ and $C(z) = \infty$ otherwise.\footnote{Note that $\chi(q_i,a_i) \neq \emptyset$ for all $i$ is equivalent to $\chi(q_N,a_N) \neq \emptyset$ since $(q_i,a_i)_{i=1}^N$ is a trajectory. We set $C(z) = \infty$ when this is not fulfilled since such a trajectory makes the HiMM stop (due to unsupported input), which is~undesirable.} In this setting, for a given initial state $s_{\mathrm{init}} \in S_Z$ and goal state $s_{\mathrm{goal}} \in S_Z$, we want to find a plan $u = (a_i)_{i=1}^N$ that takes us from $s_{\mathrm{init}}$ to $s_{\mathrm{goal}}$ with minimal cumulative cost. That is, if $U(s_{\mathrm{init}},s_{\mathrm{goal}})$ denotes the set of plans $u = (a_i)_{i=1}^N$ such that the induced trajectory $z = (q_i,a_i)_{i=1}^N$ with $q_1 = s_{\mathrm{init}}$ satisfies $\psi(q_N,a_N) = s_{\mathrm{goal}}$ (namely, we reach $s_{\mathrm{goal}}$), then we want to find a plan solving
%The output set $\Lambda \subset \RR^+$ is a subset of the non-negative reals and $\chi(q,a)\in \RR^+$ is the cost for applying input $a \in \Sigma$ at state $q \in S_Z$. 
\begin{equation}\label{planning_objective}
\min_{u \in U(s_{\mathrm{init}},s_{\mathrm{goal}})} C(z),
\end{equation}
where $z$ is the induced trajectory of $u$. We call a plan $u$ solving \eqref{planning_objective} for an optimal plan to the planning objective $(Z,s_{\mathrm{init}},s_{\mathrm{goal}})$ and the induced trajectory $z$ for an optimal trajectory. In Section \ref{efficient_hier_planning}, we present our planning algorithm that computes optimal plans efficiently.
%\elis{I am not sure if one should introduce the costs etc here or if one should have limited the attention to costs from the start (i.e., define MM but just for costs). Also I think it might be better to change notation: $x$ should be states, $a$ should be actions (so change from calling it inputs). Or at least we can have $q$ for states and $a$ for actions.}

\subsection{Modifications}\label{modifications}

\begin{figure}
	\centering
  \includegraphics[width=0.45\textwidth]{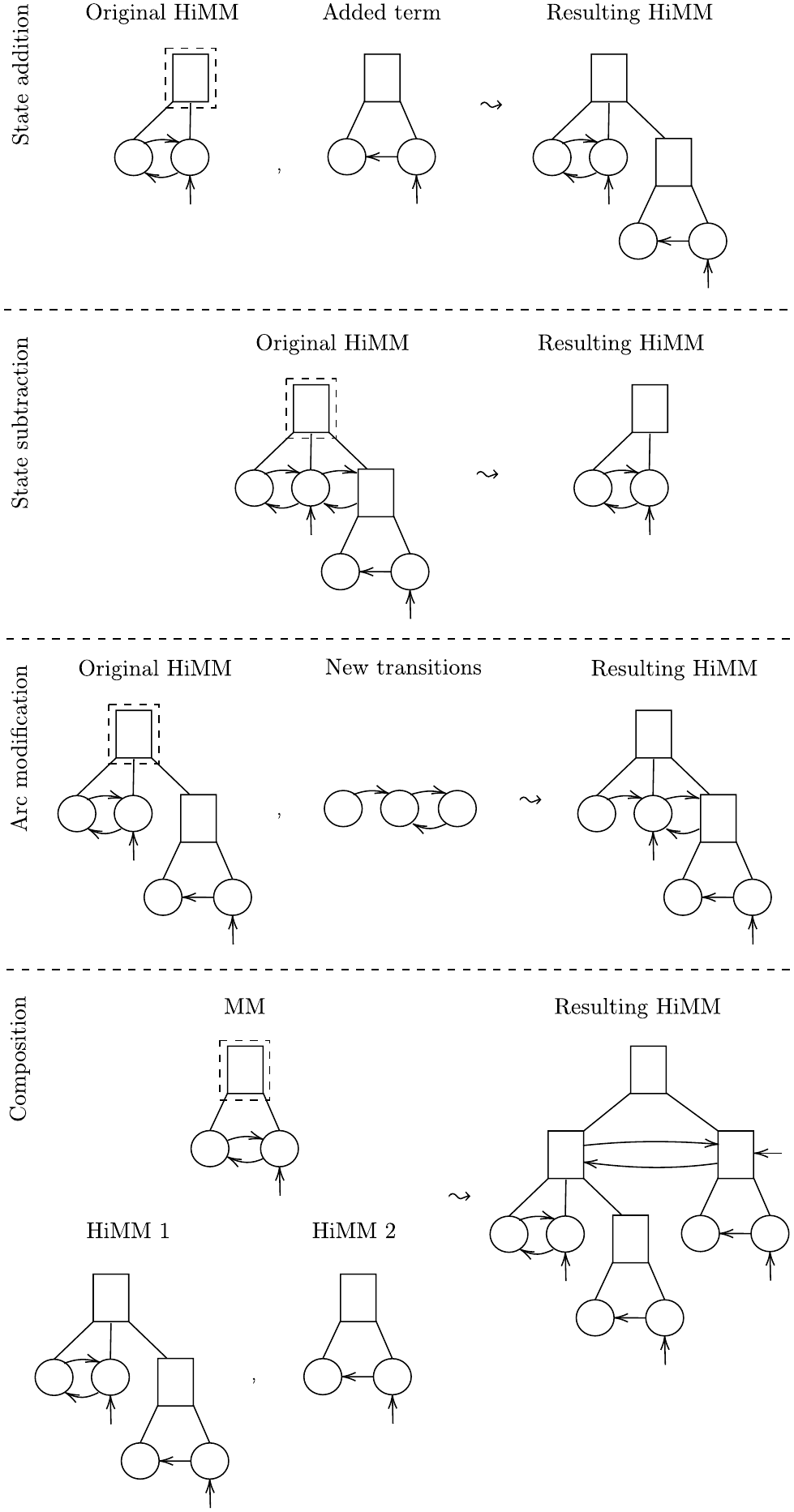}
  \caption{The four modifications of an HiMM depicted. The dashed rectangle specifies the MM $M$ subject to change.} 
  %In each warehouse, the robot can move around to different locations, and at each location it can do different tasks, with decision costs given by some cost~functional.
  \label{fig:all_operations}
\end{figure}

In this section, we introduce changes of an HiMM $Z$, formalised by modifications. The four types of modifications we consider are state addition, state subtraction, arc modification and composition, depicted in Fig.~\ref{fig:all_operations}. Intuitively, state addition adds a state or an HiMM $Z_{\mathrm{add}}$ to an MM $M$ in a given HiMM $Z$, where the root of $Z_{\mathrm{add}}$ becomes $M$'s new state; state subtraction is the reverse operation removing a state from an MM; arc modification changes the transition function of an MM in the HiMM; and, composition attaches several HiMMs to a new MM as states. The formal definitions are given by Definitions \ref{def:state_addition} to \ref{def:composition}.

\begin{definition}[State Addition]\label{def:state_addition}
Let HiMM $Z = (X,T)$ be given and MM $M \in X$. We say that an HiMM $Z_{\mathrm{add}} = (X_{\mathrm{add}},T_{\mathrm{add}})$ is added to $M$ in $Z$ forming a new HiMM $Z'$ with tree $T'$ equal to $T$ except that an arc $M \xrightarrow{q} N$ has been added in $T'$ from $M$ to the root MM $N$ of $T_{\mathrm{add}}$ (connecting $Z_{\mathrm{add}}$ to $M$), where $q$ is a new arbitrary (distinct) state of $M$.\footnote{Hence, the new state set of $M$ is $Q_M \cup \{q \}$, where no transitions in $M$ are going from or to $q$, see Fig. \ref{fig:all_operations} for an example.} Adding a state to $M$ is identical except that the added arc is $M \xrightarrow{q} \emptyset$ instead.
\end{definition}

\begin{definition}[State Subtraction]\label{def:state_subtraction}
Let HiMM $Z = (X,T)$ be given and MM $M \in X$. We say that we subtract a state $q \in Q_M$ giving the new HiMM $Z'$ having tree $T'$ equal to $T$ except that the $q$-arc from $M$ is removed.\footnote{Hence, $q$ is no longer a state of $M$ and all the previous transitions to and from $q$ in $M$ are removed, see Fig. \ref{fig:all_operations} for an example. Furthermore, for simplicity, we prohibit the start state of $M$ to be removed. This can be achieved instead by applying an arc modification first, changing the start state, and then remove the previous start state using a state subtraction.}
\end{definition}

\begin{definition}[Arc Modification]\label{def:arc_modification}
Let HiMM $Z = (X,T)$ be given and MM $M \in X$. Let $\delta': Q_M \times \Sigma_M \rightharpoonup Q_M$, $\gamma': Q_M \times \Sigma_M \rightarrow \mathbb{R}^+$ and $s' \in Q_M$ be a new transition function, cost function and start state and form $M' = (Q_M, \Sigma_M, \delta',\gamma',s')$. Replace $M$ with $M'$ in $Z$ to form a new HiMM $Z'$. This modification is called an arc modification.
\end{definition}

\begin{definition}[Composition]\label{def:composition}
Let $M$ be an MM with states $Q = \{q_1,\dots,q_{|Q|}\}$ and $Z_{\mathrm{seq}} = \{ Z_1,\dots,Z_{n} \} $ be a set of HiMMs such that $n \leq |Q|$. The composition of $Z_{\mathrm{seq}}$ with respect to $M$ is the HiMM $Z'$ with tree $T'$ having root $M$ and arcs $M \xrightarrow{q_i} R_i$ for $i \leq n$ and $M \xrightarrow{q_i} \emptyset$ for $i>n$, where $R_i$ is the root MM of the tree of $Z_i$.\footnote{Intuitively, we connect all subtrees given by $Z_1,\dots,Z_{n}$ to the $n$ first states of $M$, while the remaining states of $M$ are just states without any refinements.}
\end{definition}
These four modifications form the atomic changes of an HiMM, where a sequence of modifications can change any HiMM $Z = (X,T)$ to any other HiMM $Z' = (X',T')$. This can be done for example by adding all subtrees of the root $M'$ of $T'$ to the root $M$ of $T$ (using state addition), use an arc modification to change $M$ so that it is identical to $M'$ on the added subtrees (including changing the start state), and then remove the old subtrees from $M$. Then $Z$ has been changed to $Z'$. However, the primary intention of modifications is not to completely change one HiMM into another, but instead to keep track and account for more minor changes, e.g., removing just some states. Such minor changes can then be integrated efficiently into the planning algorithm to avoid recomputing everything from scratch whenever an HiMM has changed, done in Section \ref{reconfig_hier_planning}. Finally, practical examples of modifications are given in Section \ref{numerical_eval_robot_warehouse} for the robot application from Section \ref{illustrative_example}.

\iffalse % Just commented this out.
\subsubsection{Reconfigurability}\label{reconfigurability}
We say that a planning algorithm is reconfigurable if it can handle any of the four types of  modifications with low time complexity (assuming the algorithm knows what has been changed). In Section \ref{reconfig_hier_planning}, we modify the planning algorithm introduced in Section \ref{efficient_hier_planning} to be reconfigurable.
\fi

\subsection{Identical Machines}\label{duplicates}

\begin{figure}
\centering
\includegraphics[width=0.25\linewidth]{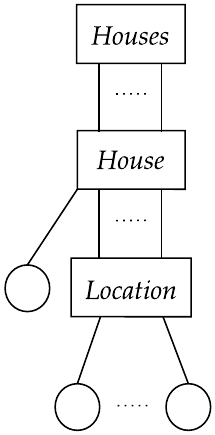}
\caption{The HiMM with identical machines for the robot application from Section \ref{illustrative_example}. The MM Houses consists of several identical houses, depicted by lines from the MM Houses to the MM House (where the number of lines specify the number of identical houses). The MM House consists in turn of one start state and several identical MMs given by Location, where Location is an MM with states having no further~refinement.}
\label{fig:duplicates_example}
\end{figure}

In this section, we extend the HiMM formalism to the case when an HiMM $Z=(X,T)$ can compactly represent identical subtrees in $T$, by grouping them together. To motivate this, consider again the robot application from Section \ref{illustrative_example}, and assume all houses are identical. In this case, the HiMM can be compactly represented as in Fig.~\ref{fig:duplicates_example}. Here, we have lines just as in Fig.~\ref{fig:Himm_transition_example_v2} to represent the hierarchical relationship between MMs and states (we have omitted transition arcs to keep the illustration clean). However, several lines \emph{can now go to the same MM}. For example, all lines from the MM Houses goes to the same MM House. This captures that each house is in fact identical sharing the same subtree with root MM House. Using such identifications, we can more compactly represent an HiMM and may yield (in the best case) a logarithmic reduction in the number of MMs one needs for saving the HiMM. For brevity, we call such a HiMM a HiMM with identical machines.

Mathematically, the key difference with an HiMM $Z=(X,T)$ with identical machines is that $T$ is no longer a tree but instead a directed acyclic graph (DAG) with a unique root (e.g., Fig. \ref{fig:duplicates_example}). The specification of the states are also subtly different. More precisely, when $T$ is a tree, a state $q \in S_Z$ is a leaf of $T$. This is sufficient since there is a unique path from the root of $T$ to the leaf. However, this representation of a state breaks down for the identical machines. More precisely, when $T$ is a DAG with a unique root, there might be several paths from the root to a leaf of $T$, which makes the leaf insufficient for specifying the state of the HiMM. For example, in the robot application, saying that the robot is at the entrance of a house does not specify which house. Instead, one needs to specify the whole path from the root of $T$ to the leaf. Such a path is a state for HiMM with identical machines. For example, state $s =(1,2,3)$ in the robot application specifies that the robot is in House 1, Location 2, and state 3 (at Location~2). We provide a formal~definition:

%That is, a state to $Z$ is on the form $s = (s_1,\dots,s_d)$ where $s_1$ is a state of the root of $Z$, $s_2$ is a state of (the MM corresponding to) $s_1$, $s_3$ is a state of (the MM corresponding to) $s_2$ and so on, until $s_d$ which is a leaf of $T$ having no further refinement.

\begin{definition}\label{HiMM_duplicates_def}[HiMM with identical machines]
An HiMM with identical machines is a tuple $Z = (X,T)$ just as in Definition \ref{HiMM_def} except that multiple arcs in $T$ can now go to the same node, i.e., $T$ is not a tree anymore but instead a DAG with a unique root. Furthermore, a state of $Z$ is a directed path of arcs $s = (s_1,\dots,s_d)$ from the root of $T$ to a leaf $s_d$, and an augmented state is any subpath $(s_1,\dots,s_i)$ of a state $s = (s_1,\dots,s_d)$ ($i \leq d$). Finally, the start function, hierarchical transition and cost function are defined as in Definition \ref{HiMM_def} with straightforward modifications to account for the modified state form, \if\longversion0 see \cite{stefansson2024modular} for derivations and~details.\else
see Appendix for derivations and~details.
\fi
\end{definition}

%We refer to Appendix for technical details concerning Definition \ref{HiMM_duplicates_def}.
%For completeness, we provide a formal definition, where we highlight in dark red differences with the non-duplicate case. In particular, the start state, transition function and output function are modified to account for the new state structure.

\subsubsection{Modifications}
Modifications can be readily extended to a HiMM with identical machines. More precisely, let an HiMM $Z=(X,T)$ with identical machines be given. For state addition, the definition is identical to Definition \ref{def:state_addition}. except that one can also add an additional arc $M \xrightarrow{q} N$ from an MM $M \in X$ to an \emph{already existing child} MM $N$ of $M$ ($q$ is still an arbitrary but distinct state), intuitively corresponding to the copies of the subtree with root $N$. For state subtraction, arc modification and composition, the definitions are identical to the ones in Section \ref{modifications}.
%\subsubsection{Duplicability}\label{duplicability}
%We say that a planning algorithm is duplicable if it can handle duplicates with low time complexity. In Section \ref{duplicate_hier_planning}, we present a duplicable planning algorithm. \elis{I'm not sure but perhaps I should formulate everything under one umbrella. Now I have efficient, then reconfigurable, then duplicable.}

\subsection{Problem Statement}
In this paper, we seek a planning algorithm for HiMMs that:
\begin{enumerate}[(i)]
\item Computes optimal plans between any two states in the HiMM with low time complexity.
\item Handles any of the four types of modifications with low time complexity (assuming the algorithm knows what has been changed).
\item Exploit identical machines to lower the time complexity (assuming the algorithm knows which MMs are~identical).
%\item Is reconfigurable in the sense of Section \ref{reconfigurability};
%\item Is duplicable in the sense of Section \ref{duplicability}.
\end{enumerate}
We address (i), (ii) and (iii) in Sections \ref{efficient_hier_planning}, \ref{reconfig_hier_planning} and \ref{duplicate_hier_planning} respectively, and demonstrate the resulting planning algorithm, given by Algorithm \ref{alg:hierarchical_planning_new}, with numerical evaluations in Section~\ref{numerical_evaluations}.
%\elis{I'm not sure if I like this way of stating the problem statement. Perhaps it is better to not have reconfigurable and duplicable subsections above but instead spell it out here what I want to achieve.}

\begin{algorithm}[t]
\caption{Hierarchical Planning}\label{alg:hierarchical_planning_new}
\begin{algorithmic}[1]
\Require HiMM$ \; Z$, \textcolor{BrickRed}{modifications $m_{seq}$} and states $s_{\mathrm{init}}, s_{\mathrm{goal}}$. % Was BrickRed here before.
\Ensure Optimal plan $u$ to $(Z,s_{\mathrm{init}}, s_{\mathrm{goal}})$.
\State \textbf{Optimal Exit Computer:}
\For{\textcolor{BrickRed}{$m$ in $m_{\mathrm{seq}}$}}
\State \textcolor{BrickRed}{$\mathrm{Modify}(Z,m)$} \Comment{\textcolor{BrickRed}{This line is executed in System}} \label{alg:hierarchical_planning_new:line3}
\State \textcolor{BrickRed}{$\mathrm{Mark}(Z,m)$} \label{alg:hierarchical_planning_new:line4}
\EndFor
\State $(c_a^M,z_a^M)_{a \in \Sigma}^{M \in X} \gets \mathrm{Compute\_optimal\_exits}(Z)$ \label{alg:hierarchical_planning_new:line6}
\State \textbf{Online Planner:}
\State $\bar{Z} \gets \mathrm{Reduce}(Z,s_{\mathrm{init}},s_{\mathrm{goal}},(c_a^M,z_a^M)_{a \in \Sigma}^{M \in X})$ \label{alg:hierarchical_planning_new:line8}
\State $z \gets \mathrm{Solve}(\bar{Z},s_{\mathrm{init}},s_{\mathrm{goal}},(c_a^M,z_a^M)_{a \in \Sigma}^{M \in X})$ \label{alg:hierarchical_planning_new:line9}
\State $u \gets \mathrm{Expand}(z,(z_a^M)_{a \in \Sigma}^{M \in X}, Z)$ \label{alg:hierarchical_planning_new:line10}
\end{algorithmic}
\end{algorithm}

\section{Efficient Hierarchical Planning}\label{efficient_hier_planning}
In this section, we present a planning algorithm computing optimal plans for a given HiMM $Z =(X,T)$,\footnote{Here, we assume an HiMM as in Definition \ref{HiMM_def}. The case for an HiMM with identical machines as in Definition \ref{HiMM_duplicates_def} is considered in Section \ref{duplicate_hier_planning}.} summarised by Algorithm \ref{alg:hierarchical_planning_new} (ignore pseudo-code in red for now). The algorithm consists of two steps. In the first step, an Optimal Exit Computer computes optimal exit costs for each MM $M \in X$ (line \ref{alg:hierarchical_planning_new:line6}). This is a preprocessing step done only once for a fixed HiMM $Z$. In the second step, an Optimal Planner computes an optimal plan $u$ for any initial state $s_{\mathrm{init}}$ and goal state $s_{\mathrm{goal}}$ (line \ref{alg:hierarchical_planning_new:line8}--\ref{alg:hierarchical_planning_new:line10}). This is done by first reducing the HiMM $Z$ to an equivalent reduced HiMM  $\bar{Z}$, removing subtrees not containing $s_{\mathrm{init}}$ or $s_{\mathrm{goal}}$ (line \ref{alg:hierarchical_planning_new:line8}), then computing an optimal trajectory for this equivalent reduced HiMM $\bar{Z}$ (line \ref{alg:hierarchical_planning_new:line9}), and finally expanding this optimal trajectory to an optimal plan for the original HiMM $Z$ (line \ref{alg:hierarchical_planning_new:line10}). We next provide details of the Optimal Exit Computer and the Optimal Planner, starting with the former.

\begin{figure}[t]
\centering
\includegraphics[width=0.9\linewidth]{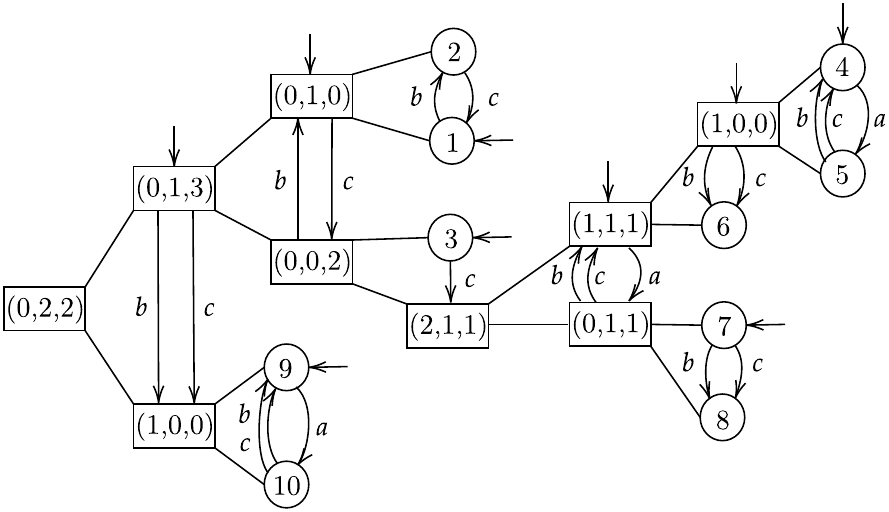}
\caption{HiMM $Z=(X,T)$ with optimal exit costs given as tuples $(c^M_a, c^M_b, c^M_c)$ over each MM $M$.}
\label{fig:hfsm_example_tree_cut}
\end{figure}

\subsection{Optimal Exit Computer}\label{optimal_exit_computer}
For a given HiMM $Z=(X,T)$, the Optimal Exit Computer computes optimal exit costs for each MM $M \in X$, given as a tuple $(c^M_a)_{a \in \Sigma}$, where $c^M_a$ is the minimal cost for exiting the subtree having $M$ as root with input $a$.

%For a given HiMM $Z=(X,T)$, the Optimal Exit Computer computes optimal exit costs for each MM $M \in X$. The optimal exit costs for an MM $M$ is given as a tuple $(c^M_x)_{x \in \Sigma}$, where $c^M_x$ is the minimal cost for exiting the subtree having $M$ as root with input $x$:

\begin{definition}[Optimal Exit Costs]\label{def:optimal_exit_costs}
Let HiMM $Z=(X,T)$ and MM $M \in X$ be given and $S_M \subseteq S_Z$ be the subset of states nested within $M$.\footnote{The set of states contained in the subtree of $Z$ having $M$ as root.} A trajectory $z =(q_i,a_i)_{i=1}^N$ such that $q_1 = \mathrm{start}(M)$, $q_i \in S_M$ for all $i$, $\psi(q_N,a_N) \notin S_M$ and $a_N = a$ is called an $a$-exit trajectory, with $a$-exit cost $\sum_{i=1}^{N-1} \chi(q_i,a_i)$.\footnote{We exclude the final cost $\chi(q_N,a_N)$ since this cost is happening outside the subtree with root $M$.} An optimal $a$-exit trajectory is one with minimal $a$-exit cost, where this minimal cost is called the optimal exit cost $c^M_a$. 
\end{definition}

The Optimal Exit Computer computes $(c^M_a)_{a \in \Sigma}$ for each MM $M \in X$ recursively over the tree $T$ by first obtaining optimal exit costs for the MMs furthest out in $T$ and then  working itself upwards in the hierarchy. Here, for each MM $M \in X$, the tuple $(c^M_a)_{a \in \Sigma}$ is computed using Dijkstra's algorithm on an augmented MM $\hat{M}$ to account for the MMs nested within $M$.  More precisely, for a state $q \in Q_M$ in $M$, define $c_a^{q} = c_a^{N_q}$ if $q$ corresponds to an MM $N_q$ (further down in the hierarchy) and $c_a^{q} = 0$ otherwise.\footnote{This can be defined since the optimal exit costs $(c^{N_q}_a)_{a \in \Sigma}$ for each $N_q$ (further down in the hierarchy) have already been computed due to the construction of the recursion.} Intuitively, $c_a^{q}$ is the optimal cost for starting in the subtree corresponding to $q$ and exiting it with $a$. The augmented MM $\hat{M}$ is then
\begin{equation*}
\hat{M} := (Q_M \cup \{E_a\}_{a \in \Sigma},\Sigma_M, \hat{\delta},\hat{\gamma},s_M).
\end{equation*}
Here, the transition function $\hat{\delta}(q,a)$ equals $\delta_M(q,a)$ whenever $\delta_M(q,a)$ is defined (transition in $M$ exists), and $\hat{\delta}(q,a) = E_a$ otherwise ($\hat{\delta}(E_a,\cdot)$ is immaterial). In this way, $\hat{\delta}$ book-keeps when we exit $M$ with input $a$ given by a transition to $E_a$.
Moreover, the cost function $\hat{\gamma}(q,a)$ equals $\gamma_M(q,a) + c_a^{q}$ if $\delta_M(q,a)$ is defined (transition in $M$ exists) and otherwise $0 + c_a^{q}$ (we go to $E_a$ with zero cost on the $M$-level since we exit $M$). Here, we add $c_a^{q}$ to augment the cost of exiting the subtree corresponding to $q$ with the transition cost of applying $a$ at $q$ in $M$ (again, $\hat{\gamma}(E_a,\cdot)$ is immaterial). With this, we can search in $\hat{M}$ from $s_M$ using Dijkstra's algorithm until we have reached each $E_a$, where $c_a^M$ then equals the obtained cost to $E_a$. We also get corresponding trajectories $z_a^M$ to each $E_a$ in $M$ for free. We summarise the Optimal Exit Computer in Algorithm \ref{alg:compute_optimal_exits} with correctness given by Proposition \ref{th:compute_optimal_exits_correct} and time complexity given by Proposition \ref{th:compute_optimal_exits_time}. An example is provided by Fig. \ref{fig:hfsm_example_tree_cut}.
%, which will also be important for the Optimal Planner

\begin{algorithm}[t]
\caption{Compute\_optimal\_exits}\label{alg:compute_optimal_exits}
\begin{algorithmic}[1]
\Require HiMM $Z = (X,T)$
\Ensure Computed $(c_a^M,z_a^M)_{a \in \Sigma}$ for each MM $M$ of $Z$
\State $\mathrm{Optimal\_exit}(M_0)$ \Comment{Run from root MM $M_0$ of $T$}
%\State return $(c_x^M,z_x^M)_{x \in \Sigma}$ for each MM of $Z$
\State $\mathrm{Optimal\_exit}(M)$: \Comment{Recursive help function} % $X \gets x$
\For {each state $q$ in $Q_M$ \label{alg:compute_optimal_exits:line3}}
\If{$q \in S_Z$}
\State $(c_a^q)_{a \in \Sigma} \gets 0_{|\Sigma|}$
\Else
\State Let $N_q$ be the MM corresponding to $q$.
\State $(c_a^{q},z_a^{q})_{a \in \Sigma} \gets \mathrm{Optimal\_exit}(N_q)$ \label{alg:compute_optimal_exits:line8}
\EndIf
\EndFor
\State Construct $\hat{M}$ \label{alg:compute_optimal_exits:line11}
\State $(c_a^M,z_a^M)_{a \in \Sigma}$ $\gets$ Dijkstra($s_M,\{E_a\}_{a \in \Sigma},\hat{M}$) \label{alg:compute_optimal_exits:line12}
\State return $(c_a^M,z_a^M)_{a \in \Sigma}$
\end{algorithmic}
\end{algorithm}

\begin{proposition}\label{th:compute_optimal_exits_correct}
$c_a^M \in [0,\infty]$ computed by Algorithm \ref{alg:compute_optimal_exits} equals the optimal exit cost $c_a^M$ in Definition~\ref{def:optimal_exit_costs}.%the optimal cost for exiting node $n$ with $x$ in $Z$.
\end{proposition}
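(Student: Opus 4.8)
The plan is to prove this by induction on the height of the MM $M$ in the tree $T$, i.e., the length of the longest path from $M$ to a leaf, since Algorithm~\ref{alg:compute_optimal_exits} computes $(c_a^M, z_a^M)_{a \in \Sigma}$ recursively bottom-up. The base case is when $M$ has height zero, meaning every state $q \in Q_M$ satisfies $N_q = \emptyset$, so $S_M = Q_M$; then $\hat M$ differs from $M$ only by the bookkeeping exit states $E_a$ and the cost augmentation $c_a^q = 0$, so $\hat M$ simply records ``we left $M$ via an $a$-transition.'' Here an $a$-exit trajectory in the sense of Definition~\ref{def:optimal_exit_costs} corresponds exactly to a path in $\hat M$ from $s_M$ to $E_a$, with $a$-exit cost equal to the path cost (the excluded final cost $\chi(q_N,a_N)$ matches the fact that the arc into $E_a$ carries cost $\gamma(q_N,a_N) + c_a^{q_N} = \gamma(q_N,a_N) \ne 0$ in general; but wait — we must check carefully that the last edge cost is or isn't counted, see below). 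Correctness of Dijkstra then gives $c_a^M$ as claimed, including the convention $c_a^M = \infty$ when $E_a$ is unreachable.

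For the inductive step, assume the claim holds for all MMs of smaller height, so in particular line~\ref{alg:compute_optimal_exits:line8} returns the true optimal exit costs $(c_a^{N_q})_{a\in\Sigma}$ for every child $N_q$ of $M$; hence the quantities $c_a^q$ used to build $\hat M$ are correct. The key step is then a bijection-with-cost-preservation lemma: $a$-exit trajectories of $Z$ rooted at $M$ (in the sense of Definition~\ref{def:optimal_exit_costs}) are in cost-preserving correspondence with paths in $\hat M$ from $s_M$ to $E_a$. The idea is that any $a$-exit trajectory $z = (q_i, a_i)_{i=1}^N$ visiting states of $S_M$ can be ``folded up'' one layer: group the maximal blocks of consecutive steps that stay inside a single child subtree $S_{N_q}$. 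By the hierarchical transition function $\psi$ (Definition~\ref{HiMM_def}), entering such a subtree means entering at $\mathrm{start}(N_q)$, so each block is an $a_j$-exit trajectory of the child $N_{q}$ for the input $a_j$ used to leave it; by Proposition~\ref{prop:himms_are_modular} (modularity) the child behaves as a module, so the cost of traversing the block only depends on the entry point $\mathrm{start}(N_q)$ and the exit input, and the minimal such cost is exactly $c_{a_j}^{N_q} = c_{a_j}^{q}$. Folding each block to a single step at $q$ in $\hat M$ with cost $\gamma_M(q, a_j) + c_{a_j}^{q} = \hat\gamma(q, a_j)$ produces a path in $\hat M$ of the same total cost; conversely, any $\hat M$-path unfolds (using the stored optimal child trajectories) into an $a$-exit trajectory of the same cost. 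Taking minima on both sides and invoking correctness of Dijkstra on $\hat M$ closes the induction.

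The main obstacle I anticipate is the careful accounting of which transition costs are and are not included — Definition~\ref{def:optimal_exit_costs} deliberately excludes the final cost $\chi(q_N,a_N)$ of the step that leaves $S_M$, whereas in $\hat M$ the edge into $E_a$ carries the augmented cost $\hat\gamma(q_N, a_N) = \gamma_M(q_N,a_N) + c_a^{q_N}$; one must check whether Dijkstra is run to the cost \emph{before} or \emph{after} traversing that last arc, and reconcile this with how $\chi$ versus $\gamma_M$ interact at the boundary (the definition of $\chi$ in Definition~\ref{HiMM_def} pushes the cost to wherever the $a$-transition ``physically'' occurs). A second subtlety is handling trajectories that leave and re-enter the same child subtree $S_{N_q}$ multiple times: the block decomposition must allow $q$ to be revisited in $\hat M$, which is fine for Dijkstra since edge costs are nonnegative, but the argument that each visit contributes at least $c_{a}^{q}$ relies on the uniqueness of the entrance guaranteed by the module structure (Definition~\ref{def:module_text}), so this is where Proposition~\ref{prop:himms_are_modular} does the real work. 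Finally one should note the degenerate cases $N=1$ (immediate exit from $\mathrm{start}(M)$) and non-existence of any $a$-exit, which give $c_a^M = \chi(\mathrm{start}(M),a)$ after excluding the final cost, i.e.\ $0$ contribution inside, respectively $c_a^M = \infty$, both handled uniformly by the $\hat M$/Dijkstra formulation.
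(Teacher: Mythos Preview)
Your approach is essentially the paper's: induction over the tree $T$, with the inductive step arguing a cost-preserving correspondence between $a$-exit trajectories of $M$ (partitioned into per-child exit trajectories) and paths from $s_M$ to $E_a$ in $\hat M$. The paper organises the inductive step as two inequalities ($\tilde c_a^M \le c_a^M$ by folding an optimal $a$-exit trajectory into a path in $\hat M$, and $c_a^M \le \tilde c_a^M$ by unfolding any $\hat M$-path into an $a$-exit trajectory), together with a separate contradiction argument for the case $c_a^M = \infty$; your bijection-with-costs framing is equivalent.

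Two points to tighten. First, your cost-accounting worry dissolves once you observe that the arc into $E_a$ occurs precisely when $\delta_M(q_N,a_N) = \emptyset$, and then $\gamma_M(q_N,a_N)$ is undefined as well; the intended reading of $\hat\gamma$ (made explicit in the paper's proof) is that this last arc carries cost $c_{a_N}^{q_N}$ only. With that, the $\hat M$-path cost is exactly $\sum_{i=1}^{N-1}[c_{a_i}^{q_i} + \gamma_M(q_i,a_i)] + c_{a_N}^{q_N}$, which matches the $a$-exit cost of Definition~\ref{def:optimal_exit_costs} (final $\chi$ excluded). Second, you do not need Proposition~\ref{prop:himms_are_modular} here: the fact that every entry into a child subtree lands at $\mathrm{start}(N_q)$ is built directly into the definition of $\psi$ in Definition~\ref{HiMM_def}, so the block decomposition and the independence of each block's optimal cost follow from the HiMM definition itself rather than from the derived modularity statement.
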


\begin{proposition}\label{th:compute_optimal_exits_time}
The time complexity of Algorithm \ref{alg:compute_optimal_exits} is $O(|X| \cdot [b_s |\Sigma|+(b_s+|\Sigma|) \log(b_s+|\Sigma|)])$, where $b_s$ is the maximum number of states in an MM of $Z$, and $|X|$ is the number of MMs in $Z$.
\end{proposition}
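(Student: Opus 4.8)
The plan is to analyze Algorithm \ref{alg:compute_optimal_exits} by counting the total work done across all recursive calls to $\mathrm{Optimal\_exit}$. Since $T$ is a tree with node set $X$, the recursion visits each MM $M \in X$ exactly once, so it suffices to bound the work done in a single call to $\mathrm{Optimal\_exit}(M)$ (excluding the recursive subcalls, which are accounted for separately by the one-visit-per-node argument) and then multiply by $|X|$.

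First I would bound the cost of a single call. The loop on line \ref{alg:compute_optimal_exits:line3} iterates over the states $q \in Q_M$, of which there are at most $b_s$; for each such $q$ the only non-recursive work is either setting $(c_a^q)_{a\in\Sigma}$ to zero or copying the already-computed tuple returned by the subcall, which is $O(|\Sigma|)$ in either case — giving $O(b_s |\Sigma|)$ for the loop. Next, constructing $\hat M$ on line \ref{alg:compute_optimal_exits:line11} requires adding the $|\Sigma|$ sink states $\{E_a\}_{a\in\Sigma}$ and, for each state $q \in Q_M$ and each input $a$, redirecting an undefined transition to $E_a$ and setting $\hat\gamma(q,a) = \gamma(q,a) + c_a^q$; this is again $O(b_s |\Sigma|)$. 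Finally, line \ref{alg:compute_optimal_exits:line12} runs Dijkstra's algorithm on $\hat M$, which has $|Q_M| + |\Sigma| \le b_s + |\Sigma|$ vertices and at most $(b_s+|\Sigma|)|\Sigma|$ arcs (each of the $\le b_s + |\Sigma|$ states has at most $|\Sigma|$ outgoing arcs). With a Fibonacci-heap implementation of Dijkstra, the running time is $O(\#\text{arcs} + \#\text{vertices}\cdot\log(\#\text{vertices}))$, which here is $O(b_s |\Sigma| + (b_s+|\Sigma|)\log(b_s+|\Sigma|))$. Adding the three contributions, a single call costs $O(b_s |\Sigma| + (b_s+|\Sigma|)\log(b_s+|\Sigma|))$, and multiplying by the $|X|$ calls yields the claimed bound.

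The main obstacle — really the only point requiring care — is getting the parameters of the Dijkstra subroutine right: one must observe that $\hat M$ is not the flattened HiMM but only a \emph{local} augmentation of $M$ with $|\Sigma|$ extra sink vertices, so its size depends only on $b_s$ and $|\Sigma|$ and not on the sizes of the nested machines (whose contribution has already been folded into the edge weights $c_a^q$). This is exactly what makes the per-node work independent of depth and lets the total collapse to the stated product with $|X|$. A secondary subtlety is that the costs $c_a^q$ may be $+\infty$ (when a subtree has no $a$-exit), so $\hat\gamma$ takes values in $[0,\infty]$; Dijkstra handles $+\infty$ weights correctly (such arcs are simply never relaxed to finite tentative distances), and this does not affect the asymptotics. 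I would also note in passing that if a simpler binary-heap Dijkstra is used the $\log$ factor multiplies the arc term as well, giving $O(|X|\cdot(b_s+|\Sigma|)|\Sigma|\log(b_s+|\Sigma|))$, which is dominated by the same expression up to the stated simplification; the Fibonacci-heap version gives precisely the bound in the statement.
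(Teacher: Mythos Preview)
Your proposal is correct and follows essentially the same approach as the paper's proof: bound the non-recursive work in a single call to $\mathrm{Optimal\_exit}(M)$ (the state loop, the construction of $\hat M$, and a Fibonacci-heap Dijkstra on $\hat M$), then multiply by $|X|$ since each $M\in X$ is visited exactly once. One small remark: the paper takes the arc count of $\hat M$ to be $b_s|\Sigma|$ rather than $(b_s+|\Sigma|)|\Sigma|$, because the sink states $E_a$ have no outgoing arcs; your looser bound is still a valid upper bound, and you correctly land on the stated expression.
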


%where the states $(E_x)_{x \in \Sigma}$ are added to book-keep exits. More precisely, 

%In words, the optimal exit costs of an MM $M$ is the tuple of values $(c^M_x)_{x \in \Sigma}$ specifying the minimal cumulated cost for exiting the subtree with $M$ as root for each input 

\subsection{Optimal Planner}\label{optimal_planner}

We continue with the Optimal Planner. For a given HiMM $Z =(X,T)$, the Optimal Planner computes  an optimal plan between any initial state $s_{\mathrm{init}}$ and goal state $s_{\mathrm{goal}}$ by truncating away subtrees of the tree $T$ that does not contain $s_{\mathrm{init}}$ or $s_{\mathrm{goal}}$ as nested states, using the optimal exit costs $(c^M_a)_{a \in \Sigma}$. Intuitively, this is possible since each subtree $H$ not containing $s_{\mathrm{init}}$ and $s_{\mathrm{goal}}$ must be exited if entered (by optimality), so we can replace $H$ with just a state (representing $H$) having transition costs augmented with the optimal exit costs $(c^M_a)_{a \in \Sigma}$. This results in an equivalent reduced HiMM $\bar{Z}$ on which we rapidly compute an optimal trajectory ${z}$ from $s_{\mathrm{init}}$ to $s_{\mathrm{goal}}$. Finally, we expand ${z}$ to obtain an optimal plan $u$ for the original HiMM $Z$. The whole procedure is illustrated by Fig. \ref{fig:online_planner_overview_cut}. We next provide details of the steps Reduce, Solve and Expand, finishing with the time complexity of the Optimal~Planner.

%positioned at the root $M$ of $H$

\begin{figure*}[t!]
\centering
\includegraphics[width=1\linewidth]{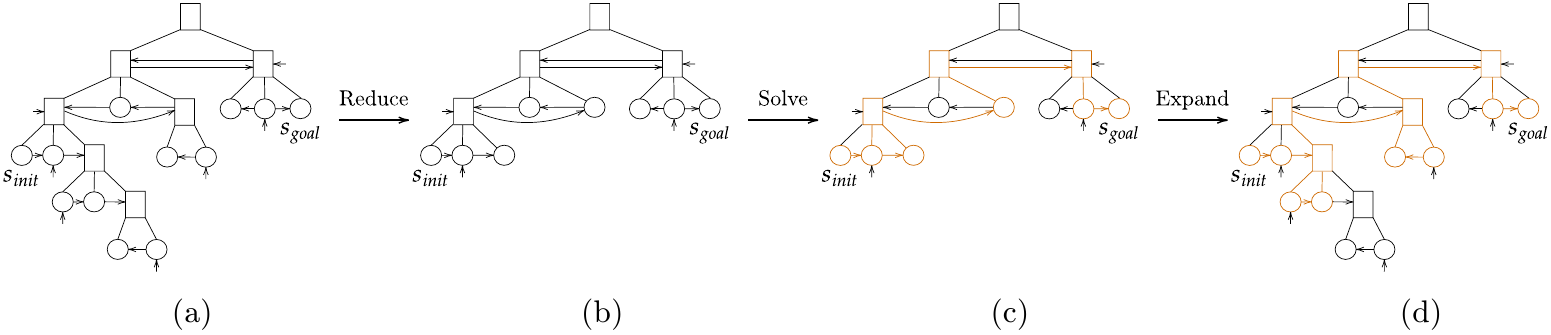}
\caption{Overview of the procedure of the Optimal Planner. (a) The original HiMM $Z$. (b) The reduced HiMM $\bar{Z}$ computed by truncating subtrees of $Z$ not containing $s_{\mathrm{init}}$ or $s_{\mathrm{goal}}$. (c) Computed optimal trajectory $z$ (marked orange) to the reduced HiMM $\bar{Z}$. (d) Optimal plan $u$ (with its optimal trajectory marked orange) to the original HiMM $Z$ computed by expanding $z$.}
\label{fig:online_planner_overview_cut}
\end{figure*}

\subsubsection{Reduce}\label{reduce_step}
We reduce the HiMM $Z =(X,T)$ for given $s_{\mathrm{init}}$ and $s_{\mathrm{goal}}$ as follows. Let $U_1$ be the MM having $s_{\mathrm{init}}$ as state, $U_2$ having $U_1$ as state and so on until $U_n$ equals the root MM of $Z$. This results in a sequence $U_1,\dots,U_n$ of nested MMs from $s_{\mathrm{init}}$ up to the root of $Z$. Similarly, we can define $D_1,\dots,D_m$ as the nested MMs from $s_{\mathrm{goal}}$ to the root. Consider now the subgraph of $T$ consisting of nodes $\M := \{ U_1,\dots,U_n \} \cup \{ D_1, \dots, D_m\}$ and their outgoing arcs representing their states. Note that this subgraph is exactly the MMs we are left with by removing all the subtrees of $T$ that do not contain $s_{\mathrm{init}}$ and $s_{\mathrm{goal}}$ as nested states. Thus, to form $\bar{Z}$, we only need to modify the MMs in $\M$ to account for the subtrees we have removed. We describe this procedure for the MM $U_i$ in $\M$, constructing a modified MM $\bar{U}_i$. More precisely, if $U_i = (Q,\Sigma,\delta,\gamma, s)$, then $\bar{U}_i = (Q,\Sigma, \delta,\bar{\gamma}, s)$ is identical to $U_i$ except with a modified cost function $\bar{\gamma}$ given by
\begin{equation}\label{eq:bar_gamma}
\bar{\gamma}(q,a) =
\begin{cases}
\gamma(q,a), & \parbox[t]{.2\textwidth}{$\delta(q,a) \neq \emptyset, \\ ({U}_i \xrightarrow{q} M) \in {T}, M \in \M$} \\
c_a^{q}+\gamma(q,a), & \delta(q,a) \neq \emptyset, \mathrm{otherwise} \\
0, & \parbox[t]{.2\textwidth}{$\delta(q,a) = \emptyset, \\ ({U}_i \xrightarrow{q} M) \in {T}, M \in \M$} \\
c_a^{q}, & \delta(q,a) = \emptyset, \mathrm{otherwise.}
\end{cases}
\end{equation}
Here, condition $({U}_i \xrightarrow{q} M) \in {T}, M \in \M$ checks if $q$ corresponds to one of the MMs of $\M$. If true, then the MM corresponding to $q$ has not been removed, and we therefore do not need to account for any truncation. Consequently, no exit cost is added in the first and the third condition. Otherwise, we need to account for the fact that taking input $a$ at $q$ actually corresponds to exiting a whole subtree in $Z$ with exit cost $c^q_a$. Thus, in the second and forth condition, we add $c^q_a$. Furthermore, if we move inside $U_i$, then we add $\gamma(q,a)$ to account for the transition (as in the first and second condition), otherwise we add nothing. In particular, the zero in the third condition mainly serve as a middle man for getting the correct cumulated cost in $\bar{Z}$. The procedure for modifying $D_i$ to $\bar{D}_i$ is~analogous.

With the procedure above, we get a modified subgraph $\bar{\M} := \{ \bar{U}_1,\dots,\bar{U}_n \} \cup \{ \bar{D}_1, \dots, \bar{D}_m\}$ identical to $\M$ except that $U_i$ ($D_i$) has been replaced by $\bar{U}_i$ ($\bar{D}_i$). This is our reduced HiMM $\bar{Z}$:

\begin{definition}\label{reduced_himm_def}
Let HiMM $Z$ be given and $s_{\mathrm{init}}$, $s_{\mathrm{goal}}$ be states of $Z$. The reduced HiMM with respect to $s_{\mathrm{init}}$, $s_{\mathrm{goal}}$ is the tuple $\bar{Z} = (\bar{X},\bar{T})$, where $\bar{T} = \bar{\M}$ and $\bar{X}$ consists of the MMs of $\bar{\M}$. The hierarchical transition function $\bar{\psi}$ for $\bar{Z}$ is constructed as in Definition \ref{HiMM_def}, while the hierarchical cost function $\bar{\chi}$ is modified. More precisely, given $q \in Q_M$ with $M \in \bar{\M}$ and $a \in \Sigma$,
\begin{equation*}
\bar{\chi}(q,a) = 
\begin{cases}
\gamma_M(q,a) & \parbox[t]{.2\textwidth}{$\delta_M(q,a) \neq \emptyset$} \\
\gamma_M(q,a)+\bar{\chi}(w,a) & \parbox[t]{.3\textwidth}{$\delta_M(q,a) = \emptyset, \\ (W \xrightarrow{w} M) \in \bar{T}, W \in \bar{X}$} \\
\emptyset & \parbox[t]{.1\textwidth}{otherwise.}
\end{cases}
\end{equation*}
\end{definition}

\begin{remark}\label{remark_reduce}
\if\longversion0 See \cite{stefansson2024modular} for an algorithm constructing the reduced HiMM~$\bar{Z}$.\else
See Appendix for an algorithm constructing the reduced HiMM~$\bar{Z}$.
\fi

\end{remark}

A trajectory and a plan are defined analogously for $\bar{Z}$ as for $Z$. The cumulative cost $\bar{C}(z)$ for a trajectory $z = (q_i,a_i)_{i=1}^N$ is also defined analogously ($\bar{C}(z) = \sum_{i=1}^N \bar{\chi}(q_i,a_i)$ if $\bar{\psi}(q_i,a_i) \neq \emptyset$ for all $i$ and $\bar{C}(z) = \infty$ otherwise), so we can consider optimal plans and optimal trajectories to the planning objective $(\bar{Z},s_{\mathrm{init}},s_{\mathrm{goal}})$. Of course, finding an optimal trajectory to $(\bar{Z},s_{\mathrm{init}},s_{\mathrm{goal}})$ only make sense if it can relate it to an optimal trajectory in $({Z},s_{\mathrm{init}},s_{\mathrm{goal}})$. In fact, the two objectives are equivalent:

\begin{theorem}[Planning equivalence]\label{th:planning_equivalence}
Consider HiMM $Z$ and reduced HiMM $\bar{Z}$ for initial state $s_{\mathrm{init}}$ and goal state $s_{\mathrm{goal}}$. We have:
\begin{enumerate}[(i)]
\item Let $z$ be an optimal trajectory to $({Z},s_{\mathrm{init}},s_{\mathrm{goal}})$. Then, the reduced trajectory $z_R$ is an optimal trajectory to $(\bar{Z},s_{\mathrm{init}},s_{\mathrm{goal}})$.
\item Let $z$ be an optimal trajectory to $(\bar{Z},s_{\mathrm{init}},s_{\mathrm{goal}})$. Then, the optimal expansion $z_E$ is an optimal trajectory to $({Z},s_{\mathrm{init}},s_{\mathrm{goal}})$.
\end{enumerate}
\end{theorem}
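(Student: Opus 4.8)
The plan is to establish both directions by a common bookkeeping lemma: the cumulative cost of any trajectory in $Z$ equals the cumulative cost of its ``reduced'' counterpart in $\bar{Z}$, and conversely the cost of any trajectory in $\bar{Z}$ equals that of its ``optimal expansion'' in $Z$. Once cost is preserved under these two maps (and the maps respect the endpoint conditions $q_1 = s_{\mathrm{init}}$ and reaching $s_{\mathrm{goal}}$), both (i) and (ii) follow immediately: if $z$ were optimal in $Z$ but $z_R$ not optimal in $\bar{Z}$, take an optimal $z'$ in $\bar{Z}$, expand it to $z'_E$ in $Z$ with $C(z'_E) = \bar{C}(z') < \bar{C}(z_R) = C(z)$, contradicting optimality of $z$; symmetrically for the other direction. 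So the real content is the cost-preservation lemma, not the optimization argument.

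First I would make precise the two maps. Given a trajectory $z = (q_i,a_i)_{i=1}^N$ in $Z$ from $s_{\mathrm{init}}$ to $s_{\mathrm{goal}}$, observe that each $q_i \in S_Z$ lies in a unique leaf, hence in a unique subtree hanging off $\M$; the trajectory enters and (by optimality, since it must eventually leave to reach $s_{\mathrm{goal}}$ unless the subtree already contains an endpoint) exits each such subtree. The reduced trajectory $z_R$ is obtained by replacing each maximal sub-block of $z$ that lives strictly inside a truncated subtree $H$ (rooted at some $q$ with $(U_i \xrightarrow{q} M)\in T$, $M \notin \M$) by the single augmented state representing $q$, and charging the block's internal cost to $c^q_a$ via $\bar{\gamma}$. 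I would verify that $z_R$ is a genuine trajectory of $\bar{Z}$ whose transitions use exactly $\bar{\psi}$, and that the telescoping of $\chi$-costs along the block equals $c^q_a$ plus the exit transition cost, which is precisely the second/fourth cases of \eqref{eq:bar_gamma}; the first/third cases handle the steps of $z$ that stay within $\M$. This is where Proposition \ref{th:compute_optimal_exits_correct} is used: the internal cost of an optimal exit block is $c^q_a$ by definition of the optimal exit cost, and for a general (not-yet-optimal) $z$ we only need $\ge$ in one direction and the replacement by an optimal exit block in the other. The optimal expansion $z_E$ is the reverse construction: each augmented state $q$ in $\bar{Z}$ that represents a truncated subtree is replaced by an optimal $a$-exit trajectory $z^{N_q}_a$ (available from Algorithm \ref{alg:compute_optimal_exits}), splicing start-state descents via the $\mathrm{start}$ function so that the result is a valid $Z$-trajectory.

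The main obstacle I expect is the careful handling of the recursive, multi-layer structure: a truncated subtree may itself be many layers deep, and the $\bar{\chi}$ recursion in Definition \ref{reduced_himm_def} climbs upward through $\bar{T}$ accumulating $\gamma_M$ terms when a transition is undefined at the current level. So the correspondence between a single $\bar{\chi}(q,a)$ evaluation and a (possibly long) segment of the $Z$-trajectory is not one step-to-one step; it requires an induction on $\depth(Z)$, or equivalently on the height of the subtree being collapsed, with the inductive hypothesis being exactly the cost-preservation statement for HiMMs of smaller depth. I would set up this induction cleanly first, proving the base case (all subtrees off $\M$ are single states, so $\bar\gamma = \gamma$ and the claim is trivial) and then the inductive step, where collapsing a depth-$k$ subtree to its root uses the already-established optimal exit cost $c^q_a = $ (optimal cost inside that depth-$(k{-}1)$-or-less subtree), invoking Proposition \ref{th:compute_optimal_exits_correct}. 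A secondary subtlety is the $\infty$ case: I must check that $C(z) = \infty$ (trajectory hits an unsupported input) corresponds exactly to $\bar{C}(z_R) = \infty$ and vice versa, so that the equivalence is not vacuous or mismatched on infeasible plans; this follows because $\bar\psi$ is defined precisely when the corresponding $\psi$-path out of the subtree is, and the exit costs $c^q_a$ are $\infty$ exactly when no $a$-exit trajectory exists. Finally I would note that reachability of $s_{\mathrm{goal}}$ is preserved because $s_{\mathrm{goal}}$ lies in $\bar{\M}$ by construction and the last transition's target under $\bar\psi$ equals that under $\psi$.
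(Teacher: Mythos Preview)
Your approach is essentially the paper's: prove two cost-preservation lemmas (reduction and optimal expansion) and combine them via the contradiction argument you sketch. The paper's Lemmas~\ref{lemma:planning_equivalence_1} and~\ref{lemma:planning_equivalence_2} are exactly your two maps, and the final deduction is identical.

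Two small corrections worth noting. First, your opening claim that reduction preserves cost for \emph{any} $Z$-trajectory is false (a sub-optimal block inside a truncated subtree costs strictly more than $c^q_a$); you catch this yourself later, but the paper's Lemma~\ref{lemma:planning_equivalence_1} assumes $z$ optimal precisely so each exit block is optimal, giving equality rather than only $C(z)\ge\bar{C}(z_R)$. Second, the induction on $\depth(Z)$ you anticipate is unnecessary. All the multi-layer content is already encapsulated in the correctness of the exit costs $c^q_a$ (Proposition~\ref{th:compute_optimal_exits_correct}, which \emph{is} proved by tree induction). Once those are in hand, one checks directly that $\bar{\chi}(p,b)=c^p_b+\chi(p,b)$ by unfolding the $\bar{\chi}$-recursion: the $c^p_b$ term appears exactly once at the level containing $p$ (case~2 or~4 of \eqref{eq:bar_gamma}, since $p\notin\M$), and every higher level contributes the same $\gamma$-term as $\chi$ does in $Z$ (case~1 or~3, since those states lie in $\M$). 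So the block-to-step correspondence is a direct computation, not an induction.
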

Theorem \ref{th:planning_equivalence} suggests that we can search for an optimal trajectory in the possibly much smaller reduced HiMM $\bar{Z}$ instead of $Z$, and then just expand it to get an optimal trajectory for $Z$. Here, the reduced trajectory $z_R$ of a trajectory $z$ is intuitively equal to $z$ except that we have removed all transitions happening inside the removed subtrees, while the optimal expansion can be seen as the reverse operator: for a trajectory $z$ of $\bar{Z}$, each station-input pair $(q,a)$ of $z$ is replaced by an optimal $a$-exit trajectory of the subtree that $q$ corresponds to in $Z$, with result $z_E$. See Fig. \ref{fig:reduce_expand_example} for an illustration. \if\longversion0 We refer to \cite{stefansson2024modular} for formal definitions of a reduced trajectory and an optimal expansion.\else
We refer to the Appendix for formal definitions of a reduced trajectory and an optimal expansion.
\fi

\subsubsection{Solve}\label{solve}
With the insight of Theorem \ref{th:planning_equivalence}, we now solve $(\bar{Z},s_{\mathrm{init}},s_{\mathrm{goal}})$, summarised by Algorithm \ref{alg:solve}. For a more streamlined notation, we also include $s_{\mathrm{init}}$ ($s_{\mathrm{goal}}$) in the sequence of $\bar{U}_i$:s ($\bar{D}_i$:s) by setting $\bar{U}_0 = s_{\mathrm{init}}$ ($\bar{D}_0 = s_{\mathrm{goal}}$).  Algorithm \ref{alg:solve} is divided into two parts. In the first part, we find an optimal trajectory from $s_{\mathrm{init}}$ to $\bar{D}_\beta$. 

Here, $\bar{D}_\beta$ is the last $\bar{D}_i$ before the sequences $(\bar{D}_0, \bar{D}_1,\dots, \bar{D}_m)$ and $(\bar{U}_0, \bar{U}_1,\dots, \bar{U}_n)$ become identical (e.g., $\bar{D}_\beta = \bar{C}$ in Fig. \ref{fig:Himm_transition_example_v2} for $s_\mathrm{init} = 3$ and $s_\mathrm{goal} = 6$).
%Here, note that the sequences $(\bar{U}_n,\bar{U}_{n-1},\dots,U_0)$ and $(\bar{D}_m,\bar{D}_{m-1},\dots,D_0)$ (starting from the root MM $\bar{U}_n = \bar{D}_m$ and going downwards in the tree $T$) are first identical but then becomes distinct at some point: let $\bar{D}_\beta$ be the first
%\footnote{Note that $\bar{D}_\beta$ may be $\bar{D}_0 = s_{\mathrm{goal}}$.}
%Here, $\bar{D}_\beta$ is the last $\bar{D}_i$ before the $\bar{U}_i$ and $\bar{D}_i$ sequences becomes identical. 
In the second part, we find an optimal trajectory from $\bar{D}_\beta$ to $s_{\mathrm{goal}}$. Finally, we just patch them together to get an optimal trajectory for $(\bar{Z},s_{\mathrm{init}},s_{\mathrm{goal}})$. This decomposition is possible due to the following elementary~observation:
\begin{lemma}\label{lemma:pass_through}
Any optimal trajectory $z = (q_i,a_i)_{i=1}^N$ to $(\bar{Z},s_\mathrm{init},s_\mathrm{goal})$ has a state-input pair $(q_k,a_k)$ such that $\bar{\psi}(q_k,a_k) = \mathrm{start}(\bar{D}_\beta)$ (informally, $z$ must pass through~$\bar{D}_\beta$).\footnote{Here, we use the natural convention that $\mathrm{start}(q) = q$ if $q$ is a state of $\bar{Z}$ (e.g., $\bar{D}_\beta = s_{\mathrm{goal}}$).}
\end{lemma}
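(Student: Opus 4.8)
The plan is to argue that any trajectory from $s_{\mathrm{init}}$ to $s_{\mathrm{goal}}$ in $\bar{Z}$ must, at some point, be "located" inside the subtree rooted at $\bar{D}_\beta$, and that the first time it enters that subtree it does so precisely at $\mathrm{start}(\bar{D}_\beta)$. The key structural fact is that $\bar{D}_\beta$ is, by construction, a common ancestor of both $s_{\mathrm{init}}$ and $s_{\mathrm{goal}}$ in $\bar{T}$ (it lies on the $\bar{U}$-chain and on the $\bar{D}$-chain), whereas $\bar{D}_{\beta-1}$ is an ancestor of $s_{\mathrm{goal}}$ but \emph{not} of $s_{\mathrm{init}}$. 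Hence $s_{\mathrm{init}}$ is nested within $\bar{D}_\beta$ but not within $\bar{D}_{\beta-1}$, while $s_{\mathrm{goal}}$ is nested within $\bar{D}_{\beta-1}$ (and a fortiori within $\bar{D}_\beta$).

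First I would set up the right notion of "where a trajectory is": for each index $i$, let $M_i$ be the MM of $\bar{Z}$ such that $q_i \in Q_{M_i}$ (using that the state sets are disjoint), and track the chain of ancestors of $q_i$ in $\bar{T}$. I would then use the definition of the hierarchical transition function $\bar{\psi}$ (Definition \ref{HiMM_def}, applied to $\bar{Z}$): when we apply $a_i$ at $q_i$, either $\delta$ is defined locally and we move within $M_i$ or descend to a child via $\mathrm{start}(\cdot)$, or $\delta$ is undefined and we walk \emph{up} the tree until an $a_i$-transition exists, then descend again following start states. The crucial monotonicity observation is that in one step the "ancestor set" of the current state can lose ancestors (going up) and then gain new ones (descending through start states), but it can only \emph{acquire} a new ancestor $N$ by arriving at $\mathrm{start}(N)$ — descent always follows start states. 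Therefore, if at step $1$ the state $s_{\mathrm{init}}$ does not have $\bar{D}_{\beta-1}$ (equivalently, $\mathrm{start}(\bar{D}_\beta)$'s subtree beyond $\bar{D}_\beta$ is not yet entered in the relevant branch) as an ancestor but at the final step $s_{\mathrm{goal}}$ does have $\bar{D}_{\beta-1}$ as an ancestor, there must be a first step $k$ at which $\bar{D}_{\beta-1}$ (or the relevant descendant of $\bar{D}_\beta$) becomes an ancestor, and at that step $\bar{\psi}(q_k,a_k) = \mathrm{start}(\bar{D}_{\beta-1})$, whose own ancestor chain passes through $\mathrm{start}(\bar{D}_\beta) = \mathrm{start}(\bar{D}_{\beta-1})$ in the sense required — more carefully, I would phrase it as: the first step $k$ at which the current state becomes nested strictly deeper than $\bar{D}_\beta$ on the $s_{\mathrm{goal}}$-side satisfies $\bar{\psi}(q_k,a_k) = \mathrm{start}(\bar{D}_\beta)$ when $\bar{D}_\beta$ itself is the entry point. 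I would reconcile the two cases ($s_{\mathrm{init}}$ inside $\bar{D}_\beta$ already versus not) by noting that in $\bar{Z}$ the reduced tree $\bar{\M}$ is a "caterpillar": $\bar{U}_0,\dots,\bar{U}_n=\bar{D}_m,\dots,\bar{D}_\beta$ is one path and $\bar{D}_\beta,\dots,\bar{D}_0$ branches off it, so being nested within $\bar{D}_\beta$ but outside $\bar{D}_{\beta-1}$ forces one to eventually pass through $\mathrm{start}(\bar{D}_\beta)$ to reach the $\bar{D}_{\beta-1}$-branch unless $s_{\mathrm{init}}$ already lies on the $\bar{D}$-branch, which is impossible by maximality of $\beta$.

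Concretely I would carry it out in these steps: (1) record the definitions of the $\bar{U}$- and $\bar{D}$-chains and of $\beta$, and deduce the caterpillar structure of $\bar{\M}$ together with the ancestor relations of $s_{\mathrm{init}}$ and $s_{\mathrm{goal}}$; (2) define, for a trajectory in $\bar{Z}$, the sequence of "current MMs" $M_i$ and prove the one-step lemma that any newly acquired ancestor $N$ is entered at $\mathrm{start}(N)$, by direct inspection of the four cases of $\bar\psi$; (3) apply this to the branch point: since $q_1 = s_{\mathrm{init}}$ is not nested within $\bar D_{\beta-1}$ but $\psi(q_N,a_N) = s_{\mathrm{goal}}$ is, let $k$ be minimal with $\bar\psi(q_k,a_k)$ nested within the $\bar D_{\beta-1}$-branch; step (2) forces this entry to occur at $\mathrm{start}(\bar D_\beta)$ (the unique "gateway" node of that branch in $\bar{\M}$), giving $\bar\psi(q_k,a_k) = \mathrm{start}(\bar D_\beta)$; (4) handle the degenerate convention $\bar D_\beta = s_{\mathrm{goal}}$ (then $\beta = 0$, $\mathrm{start}(\bar D_\beta) = s_{\mathrm{goal}}$, and the claim is just that the trajectory reaches $s_{\mathrm{goal}}$, i.e., the last step).

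\textbf{Main obstacle.} The delicate point is step (2) — making precise the claim that a trajectory can only "enter" a subtree through its start state — because $\bar\psi$ combines an upward walk and a downward walk in a single step, so naively the notion of "which MMs the trajectory visits within one step" is ambiguous. The right fix is to not talk about the intermediate MMs traversed during the upward walk at all, but only about the endpoints: the well-defined data is the ancestor chain of $q_i$ versus that of $q_{i+1} = \bar\psi(q_i,a_i)$, and the downward phase of $\bar\psi$ is by definition a sequence of $\mathrm{start}(\cdot)$ applications, so any ancestor $N$ of $q_{i+1}$ that is not an ancestor of $q_i$ has $q_{i+1}$ sitting at $\mathrm{start}(N)$. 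Getting this statement correctly quantified (and correctly using that in $\bar Z$ no subtree of $\bar D_\beta$ other than the $\bar D_{\beta-1}$-branch and the $\bar U$-branch survives the reduction) is where the real care is needed; the rest is bookkeeping.
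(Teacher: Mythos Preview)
Your core mechanism is right and is exactly what the paper uses: the only way a hierarchical transition can place you inside a subtree you were previously outside of is via the $\mathrm{start}(\cdot)$ chain, so the first time the trajectory lands inside the $\bar D_\beta$-subtree it lands at $\mathrm{start}(\bar D_\beta)$. The paper's proof is just this observation, stated in three sentences: $s_{\mathrm{goal}}$ is nested in the subtree rooted at $\bar D_\beta$, $s_{\mathrm{init}}$ is not, hence some transition of $z$ enters that subtree and must do so at $\mathrm{start}(\bar D_\beta)$.

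Where your proposal goes wrong is the identification of $\bar D_\beta$. You assert that $\bar D_\beta$ is a common ancestor of $s_{\mathrm{init}}$ and $s_{\mathrm{goal}}$, lying on both the $\bar U$- and $\bar D$-chains, and then shift the argument to $\bar D_{\beta-1}$ as the ``first non-common'' ancestor. This is off by one: by definition $\bar D_\beta$ is the \emph{last} $\bar D_i$ \emph{before} the two chains coincide, so $\bar D_\beta$ is already off the $\bar U$-chain (the LCA is $\bar D_{\beta+1}=\bar U_{\alpha+1}$; see the Reduce construction, where $\bar D_\beta$ is attached as a child of $\bar U_{\alpha+1}$). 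Consequently $s_{\mathrm{init}}$ is \emph{not} nested within $\bar D_\beta$, contrary to what you write, and there is no need to bring $\bar D_{\beta-1}$ into the picture at all. Once you correct this, your steps (1)--(3) collapse to the paper's argument: take $k$ minimal with $\bar\psi(q_k,a_k)$ nested in $\bar D_\beta$; your one-step observation (descent in $\bar\psi$ is always through start states) gives $\bar\psi(q_k,a_k)=\mathrm{start}(\bar D_\beta)$ directly. The $\beta=0$ case is then exactly as you describe in step (4).
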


We now provide the details of the first step. To solve $(\bar{Z},s_{\mathrm{init}},\bar{D}_\beta)$, we  construct a graph $G$ with nodes being a subset of the states of the $\bar{U}_i$:s. These nodes are
\begin{itemize}
\item From $\bar{U}_1$: $s_\mathrm{init}$ and $s_{\bar{U}_1}$;
\item From $\bar{U}_i$ with $i>1$: $s_{\bar{U}_i}$ and all states in $\bar{U}_i$ we get to by exiting $\bar{U}_{i-1}$;
\item A node representing $\bar{D}_\beta$.
\end{itemize}
A key feature of $G$ is that when we enter or exit an MM $\bar{U}_i$, we always end up in a node of $G$. Therefore, we can search locally in each $\bar{U}_i$ using Dijkstra's algorithm to obtain optimal trajectories and costs from the nodes of $G$ in $\bar{U}_i$ to the nodes we get to by exiting $\bar{U}_i$ or entering $\bar{U}_{i-1}$ (loosely speaking, the adjacent nodes to the nodes in $\bar{U}_i$).\footnote{In the MM $\bar{U}_i$ that has $\bar{D}_\beta$ as state, then we also search to $\bar{D}_\beta$. Moreover, we do not search from $\bar{D}_\beta$ since we only want to get to $\bar{D}_\beta$.}  These optimal trajectories and costs form the labelled arcs of $G$. Then, we use a bidirectional Dijkstra to find a shortest-path in $G$ from $s_\mathrm{init}$ to $\bar{D}_\beta$.\footnote{We could have used Dijkstra's algorithm here again, but since we only have one destination node ($\bar{D}_\beta$), bidirectional Dijkstra is generally faster \cite{bast2016route}.} By concatenating the optimal trajectories (on the arcs) on this shortest-path, we obtain an optimal trajectory $z_1$ from $s_\mathrm{init}$ to $\bar{D}_\beta$ and thus also to $\mathrm{start}(\bar{D}_\beta)$. The algorithm is summarised by Algorithm \ref{alg:solve}, Part 1. \if\longversion0 We refer to \cite{stefansson2024modular} for additional~details.\else
We refer to the Appendix for additional~details.
\fi

\begin{figure}[t]
\centering
\includegraphics[width=0.5\linewidth]{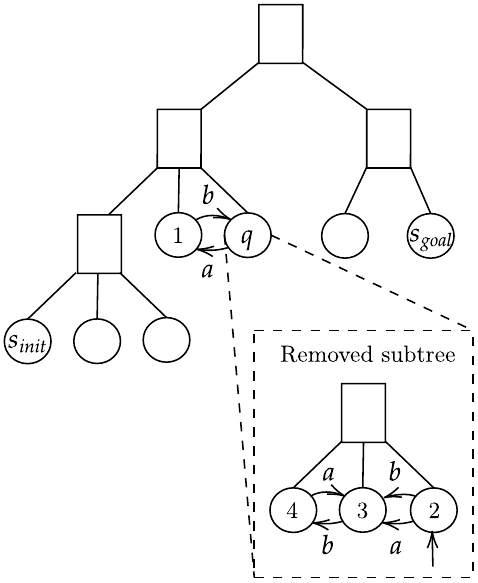}
\caption{Illustration of a reduced trajectory and optimal expansion. Here, $q$ has replaced the depicted subtree in $\bar{Z}$. For the trajectory $z = ( (1,b), (2,b), (3,a) )$ of $Z$, the reduced trajectory is therefore $z_R = ( (1,b),(q,a) )$. Conversely, the optimal expansion of $z = ( (1,b),(q,a) )$ is $z_E = ( (1,b), (2,b), (3,a) )$. For simplicity, we plot just the relevant transitions and assume unit costs.}
\label{fig:reduce_expand_example}
\end{figure}

%reduce_expand_example

%The optimal expansion is the reverse operator filling out a trajectory $z$ of $\bar{Z}$ with details how each removed subtree is exited optimally.

%For completeness, we provide formal definitions:
%\elis{Acutally I think it is better to have definitions in form of algorithms since they are both very algorithmic. Try it!}

%\begin{definition}[Reduction]
%Let HiMM $Z$ be given and $\bar{Z}$ be the reduced HiMM with respect to $s_{\mathrm{init}}$ and $s_{\mathrm{goal}}$. Consider a trajectory $z =(q_i,x_i)_{i=1}^N$ of $Z$. To get the reduced trajectory $z_R$, we repeat the following procedure. (*) Let $q_{k}$ be the first state in $z$ such that it is in a subtree of $Z$ that has been replaced in $\bar{Z}$ by a state, say $p$. If no such $q_k$ exists, then $z_R = z$. Otherwise, we have two cases. In case 1, $z$ never exits the subtree. In this case, $z_R = (q_i,x_i)_{i=1}^{k-1}$. In case 2, $z$ does exit the subtree: let $q_{k'+1} = \psi(q_{k'},x_{k'})$ ($k'>k$) be the first state outside this subtree, and replace the subsequence $(q_i,x_i)_{i=k}^{k'}$ with $(p,x_{k'})$ in $z$. Repeat (*) until we return $z_R$, which is the reduced trajectory of $z$.
%\end{definition}

\begin{algorithm}[t]
%\small
\caption{Solve}\label{alg:solve} % OBS: Combined algorithms.
\begin{algorithmic}[1]
\Require $(\bar{Z},s_\mathrm{init},s_\mathrm{goal})$ and $(c_a^M,z_a^M)_{a \in \Sigma, M \in X}$.
\Ensure An optimal trajectory $z$ to $(\bar{Z},s_\mathrm{init},s_\mathrm{goal})$.
\State \textbf{Part 1: Solve $(\bar{Z},s_\mathrm{init},\mathrm{start}(\bar{D}_\beta))$}
%\State $\bar{Z} \gets \mathrm{Reduce}(Z,s_{\mathrm{init}},s_{\mathrm{goal}},(c_x^M,z_x^M)_{x \in \Sigma, M \in X})$
%\State $\tilde{\psi} \gets$ Compute\_exit\_transitions($\bar{Z}, U_{path}$)
\State Set $G$ to an empty graph \Comment{To be constructed}
\State Add nodes to $G$ %$G \gets$ Add\_nodes($\bar{Z},s_{init}$) \Comment{Add nodes to $G$}
\For {$i =1,\dots,n$} \Comment{Consider $\bar{U}_{i}$}
\State Get states to search from and too, $I$ and $D$, in $\bar{U}_i$.
%\State $(I,D) \gets$ Get\_search\_states($\bar{Z},s_{init},i$) \Comment{States to search from and too}
\For {$s \in I$}
\State $(c_d, z_d)_{d \in D} \gets \mathrm{Dijkstra}(s, D, \bar{U}_{i})$
%\State Get optimal plans from $s$ to all $d \in D$ in $\bar{U}_i$ using Dijkstra
%\State $(c_d, z_d)_{d \in D} \gets \mathrm{Dijkstra}(s, D, \bar{U}_{i})$ %\Comment{Search from $s$ to all $d \in D$ in $\bar{U}_i$}
\State Add arcs corresponding to $(c_d, z_d)_{d \in D}$ in $G$ %Add\_arcs($G,s,\bar{Z},i, (c_d, z_d)_{d \in D},\tilde{\psi}$) \Comment{Add arcs to $G$}
\EndFor
\EndFor
\State $z_1 \gets \mathrm{Bidirectional\_Dijkstra}(s_\mathrm{init}, \bar{D}_\beta, G)$
%\State $(c,z) \gets \mathrm{Dijkstra}(s_{init}, \bar{B}, G)$ \Comment{Get optimal plan $z$ to $(\bar{Z},s_{init},\mathrm{start}(\bar{B}))$} 
%\State \textbf{return} $z, D_{path},\beta$
\State \textbf{Part 2: Solve $(\bar{Z},\mathrm{start}(\bar{D}_\beta),s_\mathrm{goal})$}
\State Let $z_2$ be an empty trajectory \Comment{To be constructed}
\State Let $\bar{D}_{i}$ be the MM with state $\mathrm{start}(\bar{D}_\beta)$
\State $c \gets \mathrm{start}(\bar{D}_\beta)$ %\Comment{Current MM}
\State $g \gets$ state in $\bar{D}_{i}$ corresponding to $\bar{D}_{i-1}$ %\Comment{Go-to state}
\While {$c \neq s_\mathrm{goal}$}
\State $z \gets \mathrm{Bidirectional\_Dijkstra}(c,g,\bar{D}_{i})$ %$z \gets \mathrm{Dijkstra}(c, g, M)$ \Comment{Get optimal plan from $c$ to $g$ in $M$} % Was s here before but changed to c
\State $z_2 \gets z_2 z$ \Comment{Concatenate $z_2$ with $z$}
\State $c \gets \mathrm{start}(\bar{D}_{i-1})$
\If {$c \neq s_\mathrm{goal}$}
\State Let $\bar{D}_{i}$ be the MM with state $c$
\State $g \gets$ corresponding state in $M$ to $\bar{D}_{i-1}$ %$g \gets \bar{D}_{k_i-1}$ 
\EndIf
\EndWhile \Comment{End of Part 2}
\State return $z \gets z_1 z_2$ \Comment{Concatenate $z_1$ with $z_2$}
\end{algorithmic}
\end{algorithm}

%\begin{remark}[Why $G$?]
%Intuitively, we construct $G$ to get a graph whose size scales gracefully with the depth of $\bar{Z}$. More precisely, the number of nodes and arcs are bounded by $O(| \Sigma | \depth(Z))$ and $O(| \Sigma |^2 \depth(Z))$, respectively (see the proof of Lemma \ref{lemma:time_complexity_solve} in Appendix for details). Therefore, searching in $G$ will also scale gracefully with $\depth(Z)$. The catch is the additional local searches done in each $\bar{U}_i$. One might therefore wonder if the total time complexity for doing this two-step-procedure is beneficial, in the sense that the time complexity is less than using just Bidirectional Dijkstra in the graph given by all nodes of all $\bar{U}_i$. The answer is yes provided $b_s \gg |\Sigma|$, i.e., the maximum number of states in an MM is much bigger than the number of inputs, which is a reasonable assumption in large-scale control systems where the number of different decisions one can make is small compared to the number of places one can be in. We refer to the Appendix for additional details. 
%In the numerical evaluations, we also compare with using just Bidirectional search.
%\elis{I need to write in Appendix about this. Also, maybe write about $I$ and $D$ here?}
%\end{remark}

We now continue with the second part solving $(\bar{Z},\mathrm{start}{(\bar{D}_\beta)},s_{\mathrm{goal}})$. In this case, we need to go down in the hierarchy of $\bar{D}_i$:s. Here, we note that if we are in the MM $\bar{D}_i$, then we can only go down in the hierarchy by reaching its state $\bar{D}_{i-1}$. Therefore, we only need to search for an optimal trajectory in $\bar{D}_i$ from our current state to $\bar{D}_{i-1}$, and then repeat this procedure (starting at $\mathrm{start}{(\bar{D}_\beta)}$) until we are down in the MM $\bar{D}_1$, where we only need to search to $s_{\mathrm{goal}}$. Concatenating these optimal trajectories, we get an optimal trajectory $z_2$ to $(\bar{Z},\mathrm{start}{(\bar{D}_\beta)},s_{\mathrm{goal}})$. The procedure is summarised by Algorithm \ref{alg:solve}, Part 2.

Concatenating the optimal trajectory $z_1$ from the first part with the optimal trajectory $z_2$ from the second part, gives an optimal trajectory $z$ to $(\bar{Z},s_{\mathrm{init}},s_{\mathrm{goal}})$, done in the final line of Algorithm \ref{alg:solve}. The correctness of Algorithm \ref{alg:solve} is given by the following result:

\begin{proposition}\label{proposition:solve_correct}
Algorithm \ref{alg:solve} computes an optimal trajectory $z$ to $(\bar{Z},s_{\mathrm{init}},s_{\mathrm{goal}})$.
\end{proposition}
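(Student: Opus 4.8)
The plan is to verify that Algorithm~\ref{alg:solve} correctly solves $(\bar{Z},s_\mathrm{init},s_\mathrm{goal})$ by separately establishing the correctness of Part~1 and Part~2, and then arguing that concatenating their outputs yields a globally optimal trajectory. The decomposition itself is licensed by Lemma~\ref{lemma:pass_through}: every optimal trajectory must pass through $\mathrm{start}(\bar{D}_\beta)$, so an optimal trajectory to $(\bar{Z},s_\mathrm{init},s_\mathrm{goal})$ decomposes as an optimal trajectory to $(\bar{Z},s_\mathrm{init},\mathrm{start}(\bar{D}_\beta))$ followed by an optimal trajectory to $(\bar{Z},\mathrm{start}(\bar{D}_\beta),s_\mathrm{goal})$; conversely, concatenating optimal solutions to the two subproblems gives a valid trajectory whose cost is the sum, hence is optimal for the original. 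I would state this patching argument first as the skeleton, reducing everything to the two claims about the parts.

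For Part~1, the key step is to show that the auxiliary graph $G$ faithfully captures all relevant trajectories in $\bar{Z}$ from $s_\mathrm{init}$ to $\mathrm{start}(\bar{D}_\beta)$. I would argue that any trajectory in $\bar{Z}$ from $s_\mathrm{init}$ to $\mathrm{start}(\bar{D}_\beta)$, when restricted to the augmented states of $\bar{Z}$, only interacts with the MMs $\bar{U}_1,\dots,\bar{U}_n$ (since these are exactly the nested MMs on the path from $s_\mathrm{init}$ to the root, and $\mathrm{start}(\bar{D}_\beta)$ lies within one of them); moreover, the hierarchical transition function forces any such trajectory to enter/exit each $\bar{U}_i$ only through the designated ``port'' states, which are precisely the nodes placed in $G$. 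Thus the trajectory's cost equals the sum of costs of segments each living entirely inside a single $\bar{U}_i$ between consecutive port states, and each such segment's minimal cost is exactly what the local Dijkstra call computes and stores on an arc of $G$. A shortest path in $G$ from $s_\mathrm{init}$ to $\bar{D}_\beta$ therefore corresponds to a minimum-cost trajectory, and concatenating the stored local trajectories $z_d$ along that path reconstructs an actual optimal trajectory $z_1$ in $\bar{Z}$. Here one must be careful that the modified cost function $\bar\chi$ of $\bar{Z}$ (Definition~\ref{reduced_himm_def}) correctly accounts for exit costs through the $\bar\gamma$ of Equation~\eqref{eq:bar_gamma}, so that the local costs in $\bar{U}_i$ genuinely sum to the hierarchical cost in $\bar{Z}$; this is where Theorem~\ref{th:planning_equivalence} and the construction of $\bar{Z}$ are invoked.

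For Part~2, the argument is structurally simpler: descending from $\mathrm{start}(\bar{D}_\beta)$ toward $s_\mathrm{goal}$, the only way to move one layer deeper in the hierarchy of the $\bar{D}_i$'s is to reach the unique state of $\bar{D}_i$ that refines to $\bar{D}_{i-1}$. Hence any trajectory from $\mathrm{start}(\bar{D}_\beta)$ to $s_\mathrm{goal}$ must visit $\mathrm{start}(\bar{D}_{i-1})$ for each $i$ down to $1$, so it decomposes into segments, the $i$-th of which is a trajectory inside $\bar{D}_i$ from its current entry state to the state corresponding to $\bar{D}_{i-1}$ (and the last segment goes to $s_\mathrm{goal}$ inside $\bar{D}_1$). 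An induction on the number of layers, using optimality of each bidirectional Dijkstra call, shows the concatenation $z_2$ is optimal for $(\bar{Z},\mathrm{start}(\bar{D}_\beta),s_\mathrm{goal})$. I would phrase this as a short induction and cite the same cost-decomposition fact used in Part~1.

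I expect the main obstacle to be the bookkeeping in Part~1: precisely identifying the port states of each $\bar{U}_i$, proving that a trajectory in $\bar{Z}$ cannot bypass them (this rests on the recursive structure of $\psi$ and $\mathrm{start}$, i.e.\ that entering a sub-MM always lands on its start state and exiting always proceeds from the parent state that refined to it), and checking the additive decomposition of $\bar\chi$ along these segments so that optimality in $G$ transfers back to optimality in $\bar{Z}$. The subtlety that $\bar{D}_\beta$ may itself be a state (the convention $\mathrm{start}(q)=q$) and the edge case $n=1$ or $\beta=0$ should be handled explicitly but are routine. Everything else reduces to the correctness of Dijkstra's algorithm and bidirectional Dijkstra on finite graphs with non-negative weights, which I would cite rather than reprove.
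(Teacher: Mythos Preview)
Your proposal is correct and follows essentially the same route as the paper: decompose via Lemma~\ref{lemma:pass_through}, then for Part~1 segment an arbitrary optimal trajectory along the port states of the $\bar{U}_i$'s to exhibit it as a path in $G$ (so the bidirectional Dijkstra output $z_1$ beats it), and for Part~2 induct down the $\bar{D}_i$'s using that the only way to descend is through the state refining to $\bar{D}_{i-1}$. One small remark: the cost-additivity of segments in Part~1 comes straight from the definition of $\bar\chi$ in Definition~\ref{reduced_himm_def}, so you do not need to invoke Theorem~\ref{th:planning_equivalence} here (that theorem relates $Z$ to $\bar{Z}$, whereas this proposition lives entirely inside $\bar{Z}$).
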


\subsubsection{Expand}
In the final step, we expand the optimal trajectory $z$ to $(\bar{Z},s_{\mathrm{init}},s_{\mathrm{goal}})$ from the solve step to obtain an optimal plan $u$ to $({Z},s_{\mathrm{init}},s_{\mathrm{goal}})$. Expanding to an optimal plan $u$ is identical to the case where one expands to an optimal trajectory except that one saves just the inputs. We can either expand $z$ sequentially, obtaining the next optimal input from $u$, or obtain the full $u$ at once. The former is more beneficial for real-time executions having a lower time complexity, as seen next.

\subsubsection{Time Complexity}
We conclude with the time complexity of the Optimal Planner:
\begin{proposition}\label{efficient_time_complexity}
The three steps of the Optimal Planner have time complexities:
\begin{itemize}
\item $\mathrm{Reduce}$ has time complexity $\mathrm{t}_\mathrm{Reduce} = O(\depth(Z) \cdot b_s)$.
\item $\mathrm{Solve}$ has time complexity
\begin{align*}
\mathrm{t}_\mathrm{Solve} = \nonumber \\
O \Big ( \depth(Z) |\Sigma| \cdot \Big [ b_s |\Sigma| + (b_s+|\Sigma| ) \log \big ( (b_s+|\Sigma| )\big ) \Big]  \Big ) + \nonumber \\
O\Big (|\Sigma|^2 \depth(Z)+ |\Sigma| \depth(Z) \log \big ( |\Sigma| \depth(Z) \big ) \Big ) + \nonumber \\
O\Big (\depth(Z) \cdot \big ( b_s |\Sigma|+b_s\log(b_s) \big ) \Big ).
\end{align*}
\item $\mathrm{Expand}$ has time complexity $\mathrm{t}_\mathrm{expand} = O(\mathrm{depth}(Z))$ for obtaining the next input $u$ and $\mathrm{t}_\mathrm{expand} = O(\mathrm{depth}(Z) |u|)$ for obtaining the full optimal plan $u$ at once, where $|u|$ is the length of $u$.
\end{itemize}
Therefore, the time complexity of the Optimal Planner is
\begin{equation*}
\mathrm{t}_\mathrm{Reduce}+\mathrm{t}_\mathrm{Solve}+\mathrm{t}_\mathrm{expand}.
\end{equation*}
In particular, with bounds on the state set size and input size of each MM in $Z$, the Optimal planner obtains the next optimal input in time $O(\mathrm{depth}(Z) \log(\mathrm{depth}(Z)))$.
\end{proposition}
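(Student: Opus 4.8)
The plan is to bound each of the three subroutines Reduce, Solve and Expand separately, in every case counting the number of vertices and arcs of the (augmented) machines and auxiliary graphs that actually get searched, and then invoking the standard $O(E+V\log V)$ bound for a Fibonacci-heap implementation of Dijkstra's algorithm, exactly as was done in Proposition~\ref{th:compute_optimal_exits_time}. Throughout I would use $b_s$ for the maximal number of states of an MM in $Z$, together with the two elementary facts that the nested chains $U_1,\dots,U_n$ and $D_1,\dots,D_m$ each have length at most $\depth(Z)$, and that (by Definition~\ref{def:module_text}, applicable since $Z$ is modular by Proposition~\ref{prop:himms_are_modular}) each MM has at most one $a$-exit per input $a\in\Sigma$.

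For Reduce, I would first observe that the two chains, together with the marking that distinguishes, in each $U_i$ (resp.\ $D_i$), the single state corresponding to $U_{i-1}$ (resp.\ $D_{i-1}$) from the states corresponding to removed subtrees, are produced by walking up $T$ from $s_\mathrm{init}$ and from $s_\mathrm{goal}$; this visits $O(\depth(Z))$ machines and touches $O(b_s)$ states in each, giving $O(\depth(Z)\cdot b_s)$. The modified cost function $\bar\gamma$ (equivalently $\bar\chi$) need not be materialised at all: by \eqref{eq:bar_gamma} each of its values is obtained in $O(1)$ from $\gamma$ and a single precomputed optimal exit cost $c_a^q$, so it can be evaluated lazily during Solve without changing the bound.

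For Solve I would handle the two parts of Algorithm~\ref{alg:solve} in turn; correctness of the overall decomposition is already supplied by Lemma~\ref{lemma:pass_through} and Proposition~\ref{proposition:solve_correct}, so only the counting is new. In Part~1, for each $\bar U_i$ the source set $I$ has size $O(|\Sigma|)$ --- it is $\{s_\mathrm{init},s_{\bar U_1}\}$ when $i=1$, and otherwise $s_{\bar U_i}$ together with the at most $|\Sigma|$ states reached by the (unique per input) exits of $\bar U_{i-1}$ --- and each local call runs Dijkstra on the augmentation of $\bar U_i$ by the exit nodes $\{E_a\}_{a\in\Sigma}$, a graph with $O(b_s+|\Sigma|)$ vertices and $O(b_s|\Sigma|)$ arcs; summing over the $O(\depth(Z))$ machines $\bar U_i$ and the $O(|\Sigma|)$ sources yields the first displayed line of $\mathrm t_\mathrm{Solve}$. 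The graph $G$ then has $O(|\Sigma|\,\depth(Z))$ vertices, and since each $\bar U_i$ contributes arcs only between its $O(|\Sigma|)$ source nodes and its $O(|\Sigma|)$ target (exit/entry) nodes, $O(|\Sigma|^2\,\depth(Z))$ arcs, so the bidirectional Dijkstra on $G$ costs the second displayed line. In Part~2 the while-loop runs over the at most $\depth(Z)$ machines $\bar D_i$, each time searching Dijkstra confined to $\bar D_i$ --- no exit nodes are needed because the target is an internal state of $\bar D_i$ (the state corresponding to $\bar D_{i-1}$) or $s_\mathrm{goal}$ --- i.e.\ on $O(b_s)$ vertices and $O(b_s|\Sigma|)$ arcs, giving the third displayed line. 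Adding the three lines gives $\mathrm t_\mathrm{Solve}$.

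For Expand, the optimal expansion replaces each pair $(q,a)$ of $z$ by the stored optimal $a$-exit trajectory $z_a^{N_q}$ of the subtree $q$ represents; producing one further primitive input requires descending the hierarchy of the current subtree once, hence $O(\depth(Z))$ work, while materialising the whole plan performs this descent once per output symbol, hence $O(\depth(Z)\,|u|)$. The total is then $\mathrm t_\mathrm{Reduce}+\mathrm t_\mathrm{Solve}+\mathrm t_\mathrm{expand}$, and substituting constant bounds for $b_s$ and $|\Sigma|$ collapses $\mathrm t_\mathrm{Reduce}$ and $\mathrm t_\mathrm{expand}$ (next input) to $O(\depth(Z))$ and $\mathrm t_\mathrm{Solve}$ to $O\big(\depth(Z)+\depth(Z)\log\depth(Z)\big)=O(\depth(Z)\log\depth(Z))$, which is the stated bound. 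The main obstacle I anticipate is precisely the bookkeeping in Part~1 of Solve: one must argue carefully that $|I|$, the number of exit nodes of each $\bar U_i$, and hence the vertex and arc counts of both the local augmented machines and of $G$, are all $O(|\Sigma|)$ or $O(|\Sigma|\,\depth(Z))$ as appropriate --- this rests on the ``at most one $a$-exit'' property and needs separate verification for the boundary machines $\bar U_1$ and the machine containing $\bar D_\beta$.
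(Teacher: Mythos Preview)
Your proposal is correct and follows essentially the same approach as the paper: bounding Reduce via the $O(\depth(Z))$ chain length and $O(b_s)$ per-machine work with lazy evaluation of $\bar\gamma$, decomposing Solve into the local Dijkstra searches (first line), the global search on $G$ (second line), and the Part~2 descent (third line), and handling Expand via the depth-first descent argument. The only minor divergence is that you invoke modularity to justify $|I|=O(|\Sigma|)$, whereas the paper obtains this directly from determinism of $\delta_{\bar U_i}$ (each input $a$ yields at most one state $\delta_{\bar U_i}(\bar U_{i-1},a)$); your route is valid but slightly more than is needed.
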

%obtained by adding these three time complexities. In particular, given bounds on the state set size $|Q|$ and input set size $|\Sigma|$ of an MM in $Z$, we get that the time complexity for computing the next optimal control input is 

Note that the time complexity $O(\mathrm{depth}(Z) \log(\mathrm{depth}(Z)))$ of the Optimal planner for bounded MMs  is near-linear in the depth of the HiMM $Z$ (as oppose to Dijkstra's algorithm which in the worst case can be more than exponential in $\depth(Z)$). We therefore consider the Optimal Planner to be~efficient.

\section{Reconfigurable Hierarchical Planning}\label{reconfig_hier_planning}
In this section, we extend the planning algorithm in Section \ref{efficient_hier_planning} to be reconfigurable. More precisely, the planning algorithm in Section \ref{efficient_hier_planning} is based on a fixed HiMM $Z=(X,T)$. Therefore, the more computationally costly Optimal Exit Computer needs to be executed only once (in time $O(|X|)$). The computed optimal exit costs that can then be repeatedly used by the faster Optimal Planner to rapidly compute optimal plans for different initial and goal states (in time $O(\depth(Z) \cdot \log (\depth(Z)))$). However, if $Z$ \emph{changes}, then the optimal exit costs might not be valid anymore and need to be recomputed. Naively recomputing all optimal exits costs every time a change has occurred would require a significant amount of computations slowing down the planning algorithm, undermining the benefit for computing the optimal exit costs in the first place. Fortunately, there is better way. More precisely, by keeping track of the changes that have occurred, we also keep track of the optimal exit costs that might not be valid anymore, marking every MM $M$ whose optimal exit costs needs to be recomputed. These MMs are typically only a fraction of all the MMs, therefore, updating the Optimal Exit Computer only for these MMs is typically much faster.

In detail, the changes are formalised by the modifications introduced in Section \ref{modifications}. These modifications change the given HiMM $Z=(X,T)$, and the planning algorithm gets informed of these modifications schematically illustrated by Fig. \ref{fig:algorithm_overview}, with resulting algorithm summarised by Algorithm \ref{alg:hierarchical_planning_new}, where parts handling modifications are coloured red. More precisely, a change is given as a sequence of modifications $m_{\mathrm{seq}}$, where each modification $m$ in $m_{\mathrm{seq}}$ modifies $Z$ (line \ref{alg:hierarchical_planning_new:line3}), and the planning algorithm correspondingly marks each MM that needs to recompute its optimal exit costs due to $m$ (line \ref{alg:hierarchical_planning_new:line4}). Here, the marked MMs, due to $m$, are the MM affected by $m$ plus all its ancestor MMs upstream in the hierarchy (e.g., if $D$ in Fig. \ref{fig:Himm_transition_example_v2} is affected by an arc modification, then $D$, $B$ and $A$ are all marked), \if\longversion0 see \cite{stefansson2024modular} for details. \else
see Appendix for details.
\fi
Then, the optimal exit costs gets updated (line \ref{alg:hierarchical_planning_new:line6}), while the Optimal Planner (line \ref{alg:hierarchical_planning_new:line8}-\ref{alg:hierarchical_planning_new:line10}) is as before.

%\subsubsection{Marking} % OBS: Moved to Appendix.
%The details of the marking procedure for a modification $m$ is given by Algorithm \ref{alg:mark}. Here, $M$ is the MM targeted by $m$, specified in Definition \ref{def:state_addition} to \ref{def:composition} for each type of modification. To understand the marking, note that when modifying $M$, the optimal exit costs of $M$ may change and thus needs to be recomputed (line 2). Moreover, if $M$ is nested within other MMs, then these ancestor MMs of $M$ might also need to recompute their optimal exit costs (line 3). All other MMs are not affected by the change.
%\footnote{Exception is when $m$ is a composition since $M$ is then the root of $Z$.}

\subsubsection{Computing Optimal Exit Costs}
The optimal exit costs are computed using Algorithm \ref{alg:update_offline_step}. The difference with Algorithm \ref{alg:compute_optimal_exits} are the added lines \ref{alg:update_offline_step:line3} and \ref{alg:update_offline_step:line16}, where line \ref{alg:update_offline_step:line3} checks if one needs to compute the optimal exit costs $(c_a^M)_{a \in \Sigma}$ for a given MM $M$ (i.e., either $(c_a^M)_{a \in \Sigma}$ needs to be updated due to a modification, or $(c_a^M)_{a \in \Sigma}$ has never been computed), while line \ref{alg:update_offline_step:line16} unmarks $M$ since $(c_a^M)_{a \in \Sigma}$ are now~{up-to-date}.

\begin{proposition}\label{prop:optimal_exit_costs_correct_reconfig}
Algorithm \ref{alg:update_offline_step} correctly updates the optimal exit costs given a sequence of modifications $m_{\mathrm{seq}}$ (with $m_{\mathrm{seq}} = \emptyset$ if no change has occurred).
\end{proposition}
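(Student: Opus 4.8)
The plan is to argue by structural induction over the (possibly modified) hierarchy of the HiMM $Z'=(X',T')$ obtained after applying $m_{\mathrm{seq}}$, mirroring the bottom-up recursion of Algorithm~\ref{alg:update_offline_step}. The target statement is that, upon termination, for every $M\in X'$ the stored pair $(c_a^M,z_a^M)_{a\in\Sigma}$ equals an optimal $a$-exit cost/trajectory of $M$ in $Z'$ in the sense of Definition~\ref{def:optimal_exit_costs}. I would reduce this to two facts: (A) a \emph{marking invariant}, that after the Mark steps (the red block of Algorithm~\ref{alg:hierarchical_planning_new}) every MM whose currently stored optimal exit costs may fail to be valid for $Z'$ is marked or has never been computed; and (B) a \emph{locality} fact, that the optimal exit costs of $M$ depend only on the subtree of the hierarchy rooted at $M$.

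I would first dispatch (B), which is essentially built into Definition~\ref{def:optimal_exit_costs}: an $a$-exit trajectory of $M$ has all its augmented states in $S_M$ and its $a$-exit cost sums only $\chi$-values of transitions taken strictly inside the subtree rooted at $M$; hence $c_a^M$ is a function of that subtree alone, with the usual recursive dependence on the children's values $c_a^{N_q}$, which in turn depend only on their own subtrees. Consequently, if the subtree rooted at $M$ in $Z'$ is identical (as a labelled sub-DAG carrying the same MM data) to the subtree rooted at $M$ before $m_{\mathrm{seq}}$, a previously computed $c_a^M$ remains the correct optimal exit cost for $Z'$. Then I would prove (A) by induction on the length of $m_{\mathrm{seq}}$: a single modification $m$ (one of Definitions~\ref{def:state_addition}--\ref{def:composition}) alters exactly one MM $M^\star$ locally and possibly introduces new MMs; the Mark operation marks $M^\star$ together with all its current ancestors, while new MMs are caught by the ``never computed'' test in line~\ref{alg:update_offline_step:line3}. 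Since the subtree rooted at an MM $M$ changes under $m$ only if $M=M^\star$ or $M$ is an ancestor of $M^\star$, every MM with a changed rooted subtree is marked or fresh; composing over the sequence (with ancestorship taken in the intermediate HiMM at each step) yields (A) for $Z'$.

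Finally I would assemble the bottom-up induction for the main claim, which is well-founded because the recursion visits children before parents (line~\ref{alg:compute_optimal_exits:line8} precedes the construction of $\hat M$). For an MM $M$ reached by the recursion: if the line-\ref{alg:update_offline_step:line3} test fails, then by (A) the subtree rooted at $M$ is unchanged and $c_a^M$ was computed before, so by (B) the stored value is correct for $Z'$ and is returned unchanged; if the test succeeds, the children's stored values $(c_a^{N_q})_{a\in\Sigma}$ are correct for $Z'$ by the induction hypothesis, and then the Dijkstra search on the augmented machine $\hat M$ returns the correct $(c_a^M,z_a^M)_{a\in\Sigma}$ for $Z'$ by exactly the argument of Proposition~\ref{th:compute_optimal_exits_correct}; line~\ref{alg:update_offline_step:line16} then unmarks $M$, preserving the invariant ``every stored value is valid for $Z'$.'' The degenerate case $m_{\mathrm{seq}}=\emptyset$ on a fresh HiMM reduces to Algorithm~\ref{alg:compute_optimal_exits} via the ``never computed'' branch and is covered directly by Proposition~\ref{th:compute_optimal_exits_correct}.

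The main obstacle I anticipate is the bookkeeping in step (A) when the hierarchy's topology itself changes: state addition and composition create new MMs and attach whole subtrees, and with identical machines an added arc may point to an already existing child. One must therefore make precise what ``the subtree rooted at $M$'' means across the sequence of intermediate HiMMs and verify that the union, over all modifications, of the marked sets together with the uncomputed MMs genuinely covers every MM of $Z'$ whose rooted subtree differs from its pre-modification counterpart. Once (A) and the locality fact (B) are in place, the remaining steps are routine and lean entirely on Proposition~\ref{th:compute_optimal_exits_correct}.
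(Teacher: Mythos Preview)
Your decomposition into (A) marking invariant, (B) locality, and a bottom-up induction is sound, and (B) together with the appeal to Proposition~\ref{th:compute_optimal_exits_correct} for the Dijkstra step is exactly right. There is, however, a small but real gap in the induction as stated: you only argue correctness ``for an MM $M$ reached by the recursion,'' yet your target is every $M\in X'$. An unreached $M$ necessarily has a reached, \emph{unmarked} proper ancestor $P$ (that is where the recursion halted); applying the contrapositive of (A) to $P$ gives that $P$'s rooted subtree is unchanged, hence so is $M$'s, and by (B) the stored value at $M$ is still correct. Adding this one sentence closes the gap, and with it your argument goes through without ever needing to show that every marked MM is actually reached.

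This is where your route and the paper's diverge. The paper's proof makes a different structural invariant the centrepiece: it shows (by a somewhat involved induction over the number of times HiMMs are composed or added) that the marked set always forms a subtree of $T$ containing the root, i.e.\ the marked set is upward-closed. From this ``subtree property'' it follows that the recursion reaches \emph{every} marked MM, and then correctness is read off Proposition~\ref{th:compute_optimal_exits_correct}. Your invariant (A) is the semantic complement of that: instead of proving marks are reachable, you prove unmarked means correct, and handle unreached nodes through their unmarked ancestor. Both approaches must wrestle with the same bookkeeping you flag as the obstacle (freshly attached HiMMs via state addition or composition, and Convention~\ref{marking_convention} marking them wholesale); the paper packages that work into the subtree property, while you package it into (A). Either is fine, and arguably your route is slightly more direct once the unreached-MM case is made explicit.
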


\subsubsection{Time complexity}
The benefit with the marking procedure is the savings in computations when $Z$ changes. More precisely, given a modification $m$, the Optimal Exit Computer needs only $O(\depth(Z))$ time to update the optimal exit costs:

%(given that all optimal exits costs were correct before the modification)

\begin{proposition}[Reconfigurability]\label{prop:reconfigurability}
Let HiMM $Z = (X,T)$ be given and consider a single modification $m_{\mathrm{seq}} = m$. Assume $Z$ and all added subtrees (i.e., $Z_{\mathrm{add}}$ if $m$ is a state addition or $Z_{\mathrm{seq}} = \{Z_1,\dots,Z_n\}$ if $m$ is composition) have up-to-date optimal exit costs before the modification $m$. Then the time complexity of the Optimal Exit Computer is $O( \mathrm{depth}(Z) \cdot [b_s |\Sigma|+(b_s+|\Sigma|) \log(b_s+|\Sigma|)] )$
if $m$ is a state addition, state subtraction or arc modification, and $O( b_s |\Sigma|+(b_s+|\Sigma|) \log(b_s+|\Sigma|) )$ if $m$ is a composition. In particular, with bounded state set size and input set size for each MM in $Z$, we get time complexity $O( \mathrm{depth}(Z))$. 
\end{proposition}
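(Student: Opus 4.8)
The plan is to analyze each of the four modification types separately, using the key structural observation that underlies the marking procedure: when a modification $m$ affects an MM $M \in X$, the only optimal exit costs that can change are those of $M$ itself and its ancestors in $T$, and the number of such ancestors is at most $\depth(Z)$. Granting this (which is justified by Definition~\ref{def:optimal_exit_costs}: the optimal exit cost $c^N_a$ of an MM $N$ depends only on the transitions inside the subtree rooted at $N$ together with the optimal exit costs of its children, so if nothing in that subtree has changed, $c^N_a$ is unchanged), Algorithm~\ref{alg:update_offline_step} will, via the check on line~\ref{alg:update_offline_step:line3}, only invoke the expensive step --- constructing $\hat{M}$ and running Dijkstra --- on the marked MMs. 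The recursion still walks the whole relevant subtree structurally, but for the unmarked MMs it does only the constant-per-state bookkeeping of reading off already-stored values, and by the up-to-date assumption those stored values exist.

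The main case is a state addition, state subtraction, or arc modification affecting a single MM $M$. Here the marked set is exactly the path $M = M_1, M_2, \dots, M_k$ from $M$ up to the root, with $k \le \depth(Z)$. For each marked $M_j$ we construct $\hat{M}_j$ and run Dijkstra, which by (the proof of) Proposition~\ref{th:compute_optimal_exits_time} costs $O(b_s|\Sigma| + (b_s+|\Sigma|)\log(b_s+|\Sigma|))$ per MM --- note that when $m$ is a state addition that attaches a whole subtree $Z_{\mathrm{add}}$, that subtree arrives with up-to-date exit costs by hypothesis, so no work is done inside it; only the path to the root is recomputed. Summing over the at most $\depth(Z)$ marked MMs gives the claimed bound $O(\depth(Z)\cdot[b_s|\Sigma|+(b_s+|\Sigma|)\log(b_s+|\Sigma|)])$. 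I should also check the edge constraint that a state-subtraction is not allowed to remove a start state, so that $\mathrm{start}(M)$ --- the source of the Dijkstra search in $\hat{M}$ --- remains well-defined; this is guaranteed by Definition~\ref{def:state_subtraction}.

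The composition case is where the depth factor disappears. A composition builds a new HiMM $Z'$ whose root is $M$ and whose children are the roots $R_i$ of the $Z_i$, all of which (by hypothesis) already carry up-to-date optimal exit costs. Hence the only MM whose exit costs are not yet known is the new root $M$ itself: it is the unique marked MM. Running the single Dijkstra search on $\hat{M}$ then costs $O(b_s|\Sigma|+(b_s+|\Sigma|)\log(b_s+|\Sigma|))$, which is the stated bound. The final sentence follows by absorbing $b_s$ and $|\Sigma|$ into the constant once they are bounded uniformly over $X$, leaving $O(\depth(Z))$ and $O(1)$ respectively.

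The step I expect to require the most care is the structural claim that marking $M$ and its ancestors suffices --- i.e. that no \emph{other} MM's optimal exit cost can be invalidated by $m$. The subtlety is that optimal $a$-exit trajectories of an ancestor may pass \emph{through} sibling subtrees of $M$, but those siblings' own exit costs $c^{\text{sibling}}_a$ are unchanged, and the augmented-MM construction $\hat{M_j}$ only ever consults children's exit costs, never reaches sideways into their internals; so recomputing along the ancestor path with the (correct, unchanged) sibling values restores correctness. Making this precise is essentially re-invoking Proposition~\ref{th:compute_optimal_exits_correct} applied to the modified HiMM, observing that the recursion's output depends on a marked MM's children only through their stored $(c^{N_q}_a)_a$, which are correct by downward induction over the unmarked part of the DAG/tree. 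Once that is nailed down, the time bound is just the per-MM cost from Proposition~\ref{th:compute_optimal_exits_time} multiplied by the size of the marked set.
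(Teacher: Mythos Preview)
Your proposal is correct and follows essentially the same approach as the paper: bound the number of marked MMs by $\depth(Z)$ (or $1$ for composition) using the assumption that everything was up-to-date before $m$, then multiply by the per-MM Dijkstra cost from Proposition~\ref{th:compute_optimal_exits_time}. The paper's proof is in fact terser than yours --- it simply states that $K$ marked MMs yield $O(K\cdot[b_s|\Sigma|+(b_s+|\Sigma|)\log(b_s+|\Sigma|)])$ and then counts $K$ for each case; the correctness discussion you include (why marking $M$ and its ancestors suffices) is handled separately in the paper as Proposition~\ref{prop:optimal_exit_costs_correct_reconfig}, so you are proving a bit more than the statement asks for, but nothing is wrong.
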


The planning algorithm is in this sense reconfigurable.

\begin{figure}
	\centering
  \includegraphics[width=0.49\textwidth]{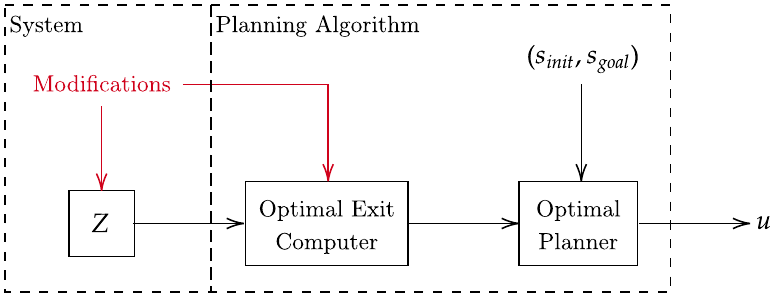}
  \caption{The planning algorithm and its interaction with the changing system $Z$.} 
  %In each warehouse, the robot can move around to different locations, and at each location it can do different tasks, with decision costs given by some cost~functional.
  \label{fig:algorithm_overview}
\end{figure}

\begin{algorithm}[t]
\caption{Compute\_optimal\_exits}\label{alg:update_offline_step}
\begin{algorithmic}[1]
\Require HiMM $Z = (X,T)$ with markings.
\Ensure Computed $(c_a^M,z_a^M)_{a \in \Sigma}$ for each MM $M$ of $Z$
\State $\mathrm{Optimal\_exit}(M_0)$ \Comment{Run from root MM $M_0$ in $T$}
%\State return $(c_x^M,z_x^M)_{x \in \Sigma}$ for each MM of $Z$
\State $\mathrm{Optimal\_exit}(M)$: \Comment{Recursive help function} % $X \gets x$
\If {\textcolor{BrickRed}{$M$ is not marked} \label{alg:update_offline_step:line3} }
\State return $(c_a^M,z_a^M)_{a \in \Sigma}$ \Comment{Since already up-to-date}
\Else
\For {each state $q$ in $Q(M)$}
\If{$q \in S_Z$}
\State $(c_a^q)_{a \in \Sigma} \gets 0_{|\Sigma|}$
\Else
\State Let $N_q$ be the MM corresponding to $q$.
\State $(c_a^{q},z_a^{q})_{a \in \Sigma} \gets \mathrm{Optimal\_exit}(N_q)$
\EndIf
\EndFor
\State Construct $\hat{M}$
\State $(c_a^M,z_a^M)_{a \in \Sigma}$ $\gets$ Dijkstra($s(M),\{E_a\}_{a \in \Sigma},\hat{M}$)
\State \textcolor{BrickRed}{Unmark $M$} \label{alg:update_offline_step:line16}
\State return $(c_a^M,z_a^M)_{a \in \Sigma}$
\EndIf
\end{algorithmic}
\end{algorithm}

\section{Hierarchical Planning with \\ Identical Machines}\label{duplicate_hier_planning}
%where we for simplicity consider a fixed HiMM $Z =(X,T)$, noting that the theory can be readily extended to the case when $Z$ changes (though outside the scope of this paper)
Finally, we extend the planning algorithm to HiMM with identical machines to further lower the time complexity. The overall structure is as in Algorithm \ref{alg:hierarchical_planning_new}. The difference lies instead in the details of the called functions. We go through these details for the Optimal Exit Computer and the Optimal Planner separately, assuming a given HiMM $Z = (X,T)$ with identical machines, as in Section \ref{duplicates}.
%summarised by  Algorithm \ref{alg:duplicable_hierarchical_planning} highlighting differences with Algorithm \ref{alg:hierarchical_planning_new} in red
\subsection{Optimal Exit Computer}
Computing the optimal exit costs for a HiMM with identical machines is done using Algorithm \ref{alg:update_offline_step} just as in the reconfigurable case. Note that $T$ is not a tree anymore but a DAG with a single root. However, Algorithm \ref{alg:update_offline_step} works nonetheless. In particular, the marking procedure now serves as a neat way of only computing the optimal exit costs once for multiple identical MMs (e.g., computing the optimal exit costs of the MM House in Fig. \ref{fig:duplicates_example} once, even though this MM is used for several of the houses). Furthermore, modifications are handled in the same way as before. Therefore, the Optimal Exit Computer is identical to the one in Algorithm \ref{alg:hierarchical_planning_new}. 

\subsection{Optimal Planner}\label{optimal_planner_duplicate_case}
The Optimal Planner is similar to the case without identical machines given in Section \ref{optimal_planner}. A minor technical difference is in the reduce step (line \ref{alg:hierarchical_planning_new:line8} in Algorithm \ref{alg:hierarchical_planning_new}) where the paths of $s_{\mathrm{init}}$ and $s_{\mathrm{goal}}$ (see how states are specified for a HiMM with identical machines in Section \ref{duplicates}) can be used directly to obtain $\{ U_1,\dots,U_n \}$ and $\{ D_1, \dots, D_m\}$, illustrated by Fig. \ref{fig:reduce_duplicate_case}. This difference is needed since an MM $M \in X$ may now have several parent MMs. The reduced HiMM $\bar{Z}$ is then constructed analogously to before as in Definition \ref{reduced_himm_def}. In particular, $\bar{Z}$ has a tree structure (as opposed to $Z$). We can therefore proceed just as in Section \ref{optimal_planner}, computing an optimal trajectory $z$ to $(\bar{Z},s_{\mathrm{init}},s_{\mathrm{goal}})$ and then expand it to an optimal plan $u$ to $({Z},s_{\mathrm{init}},s_{\mathrm{goal}})$. The expand step is possible since it is agnostic to identical machines in the HiMM. \if\longversion0 We refer to \cite{stefansson2024modular} for details.\else
We refer to the Appendix for details.
\fi

%$z$ is on the correct form and the expand step is

\begin{figure}[t]
\centering
\includegraphics[width=\linewidth]{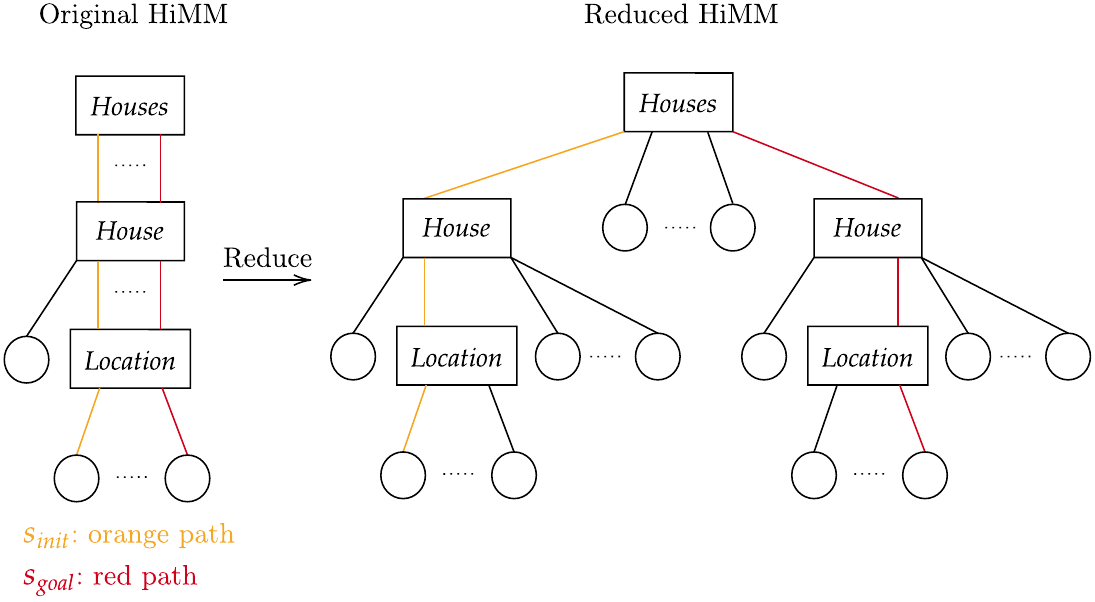}
\caption{Illustration how the reduced HiMM is obtained, for the robot application from Section \ref{illustrative_example}, from the original HiMM $Z$ with identical machines. The yellow (red) path in $Z$ denotes $s_{\mathrm{init}}$ ($s_{\mathrm{goal}}$), which are used as paths directly for the reduced HiMM. The other paths are replaced with states just as before in Section~\ref{reduce_step}.}
\label{fig:reduce_duplicate_case}
\end{figure}

\subsection{Time Complexity}
The time complexity of the Optimal Exit Computer for an HiMM with identical machines has the same expression as before given by Proposition \ref{th:compute_optimal_exits_time}, but we can still get speed-ups. The reason is because the number of MMs $|X|$ can now be significantly smaller due to the more compact representation a HiMM with identical machines can have. In fact, $|X|$ can be reduced logarithmically (compared to the case when one would not use the compact representation given by HiMM with identical machines). As an example, note that the robot application given in Fig. \ref{fig:duplicates_example} needs only 3 MMs using a HiMM with identical machines, while  treating all MMs as different needs $1+10+10 \cdot 100 =1011$ MMs (see Section \ref{numerical_evaluations} for details). This difference in $|X|$ can lead to a huge difference in computing time for the Optimal Exit Computer since it scales linearly with the number of MMs $|X|$ in $Z$. The Optimal Planner is not affected by the more compact representation identical machines may yield, therefore having identical time complexity.

% similar computation times, also seen in the next section.

\section{Numerical Evaluations}\label{numerical_evaluations}
In this section, we provide two numerical studies validating the performance of our planning algorithm (Algorithm \ref{alg:hierarchical_planning_new}). More precisely, in the first study (Large-scale System), we validate how the planning algorithm scales with system size. In the second study (Robot Warehouse Application), we showcase our planning algorithm on the robot application given in Section \ref{illustrative_example} and validate the reconfigurability of the algorithm. In both studies, we also consider how the examples can be modelled more compactly and solved faster using HiMMs with identical machines. We compare our algorithm with Dijkstra's algorithm \cite{Dijkstra1959,DijkstraFibonacci}, Bidirectional Dijkstra \cite{bast2016route} and Contraction~Hierarchies \cite{geisberger2012exact}. \if\longversion0 See \cite{stefansson2024modular} for a brief discussion of Contraction~Hierarchies, how it differ from our algorithm, and implementation details.\else See Appendix \ref{implementation_details} for a brief discussion of Contraction~Hierarchies, how it differ from our algorithm, and implementation details.
\fi
 The simulations are run on a MacBook Pro with 2.9 GHz i7 processor and 16 GB~RAM.
%implementation_details
%\footnote{Potential deviations in computation time between the results in this paper and the papers \cite{stefansson2023ecc,stefansson2023efficient} is mainly due to an improved implementation in this paper, making all the algoritms more streamlined (including the comparison methods).}

% Below was not needed after a correction of Contraction Hierarchies using
% the termination criteria from the paper:
% Exact Routing in Large Road Networks Using Contraction Hierarchies
% by Geisberger et al. 
%\begin{remark}[A Note on Optimality]
%All algorithms computes optimal plans in all studies except Contraction Hierarchies which in Case 2 of Study 2 computes plan with slightly higher cost (1055 instead of the optimal cost 1053). To the best of our knowledge, this seems to be due to an early termination criteria in the method itself, where a minimal example of just 10 nodes, provided in Appendix, also yields a slight deviation from optimality using Contraction Hierarchies. In Appendix, we propose a potential solution to this problem by imposing another constraint in the termination criteria. However, in the following simulations, we use the ordinary Contraction Hierarchies.
%\end{remark}

\subsection{Study 1: Large-scale System}
In the first study, the system is created by recursively expanding an MM $M$ with itself creating an arbitrarily large HiMM $Z$. More precisely, the MM $M$ we consider is given by Fig. \ref{fig:recursive_example} (left). In a recursion, we replace state 0 and 2 in $M$ by $M$ itself, resulting in an HiMM $Z$ given by Fig. \ref{fig:recursive_example} (right). Repeating this procedure, replacing state 0 and 2 of each leaf MM in the HiMM with $M$, we construct an HiMM $Z$ of arbitrary depth (and size). For planning, we consider unit costs for all transitions with initial state $s_\mathrm{init}$ being the left-most state (e.g., state 2 in Fig. \ref{fig:recursive_example}) and goal state $s_\mathrm{goal}$ being the right-most state (e.g., state 7 in Fig. \ref{fig:recursive_example}).

\begin{figure}[t]
\centering
\includegraphics[width=0.7\linewidth]{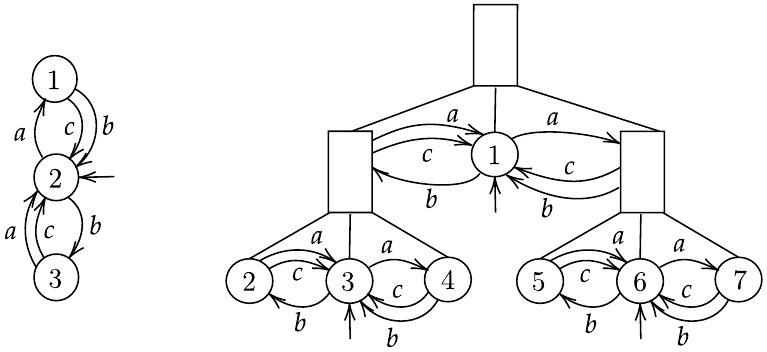}
\caption{The MM $M$ (left) and resulting HiMM $Z$ (right) from Study 1 using one recursion.}
\label{fig:recursive_example}
\end{figure}  

\subsubsection{Result}
The execution time is shown in Fig. \ref{fig:large_recursive_system_journal}, where we vary the depth of the HiMM $Z$ by varying the number of recursions. Here, we see that for small depths $\leq 8$, both Dijkstra and Bidirectional Dijkstra perform better than our Optimal Planner. However, our planning algorithm scales significantly better as the system size increases. This is reflected for systems with depth $>8$ where Optimal Planner outperforms Dijkstra, and at depths $>13$ where Optimal Planner outperforms Bidirectional Dijkstra. In particular, at depth 20 with a system of about 2 million states, Optimal Planner finds an optimal plan in 0.0024 s compared to 0.029 s for Bidirectional Dijkstra and 12 s for~Dijkstra. The reason for the speed-up in the Optimal Planner is the preprocessing step of the Optimal Exit Computer. This step takes at most 42 s (for the HiMM with depth 20), that is, around the same order of magnitude as one Dijkstra, but needs to be done only once and can then be repeatedly used for several queries, a huge advantage during repeated planning in the same HiMM. 

%This preprocessing time can also be significantly improved by using the duplicate theory developed in Section~\ref{duplicate_hier_planning}. More precisely, noting that all MMs at a given depth in $Z$ are actually roots of identical subtrees (of nested machines), we can compactly represent $Z$ using just $\depth(Z)$ number of MMs (one for each depth), i.e., scaling linearly with the depth. As a result, the Optimal Exit Computer in the duplicate case preprocess $Z$ (depth 20) in just 0.0011 s, resulting in an overall computing time (Optimal Exit Computer + Optimal Planner) less than both Bidirectional Dijkstra and~Dijkstra.

We continue by comparing our planning algorithm with Contraction Hierarchies. We see that Contraction Hierarchies is generally faster than the Optimal Planner, but seems to have a worse scaling factor as system size grows. In particular, Contraction Hierarchies is just 2 times faster for the largest system with depth 20.\footnote{It would be interesting to increase the depth further to see if this scaling pattern persists. However, manipulating such large systems, such as flattening them to make them suitable for Dijkstra, Bidirectional Dijkstra and Contraction Hierarchies, makes computations cumbersome for much larger depths. However, if using an HiMM with identical machines, our planning algorithm can handle systems of much larger depths, considered in the next~paragraph.} This speed-up comes also at the cost of a more expensive preprocessing step. In particular, for the HiMM $Z$ with depth 20, the preprocessing step of Contraction Hierarchies is 20 times slower than the Optimal Exit Computer (1200 s compared to 42 s). %(not using duplicates) and $10^6$ times slower if duplicates are used. Contraction Hierarchies are also less suitable to changes in a system, seen in the next study.
Thus, our method obtains a good balance between preprocessing and query time, where it is faster than Bidirectional Dijkstra during query (at the expense of a slower preprocessing step), and slower than Contraction Hierarchies (but with a faster preprocessing step). 

Finally, we show how we can rapidly speed-up the computation time of the Optimal Exit Computer by compactly representing the large-scale system $Z$ as an HiMM with identical machines. More precisely, note that all MMs at a given depth in $Z$ are actually roots of identical subtrees (of nested machines). Therefore, we can compactly represent $Z$ using just $\depth(Z)$ number of MMs (one for each depth), i.e. the number of machines scales linearly with depth (as opposed to exponentially when not grouping identical machines together). Consequentially, the Optimal Exit Computer using this compact representation preprocesses $Z$ much faster, as seen in Fig. \ref{fig:large_recursive_system_journal} (Optimal Exit Computer: Identical Machines). In particular, for $Z$ with depth 20, the Optimal Exit Computer is done in just 0.00083 s, compared to 42 s not exploiting that machines are identical, or 1200 s for the preprocessing step of Contraction Hierarchies. In fact, the compact representation enables computation of optimal plans for much larger depths. As an example, for the HiMM $Z$ with depth 500 and around $10^{150}$ states, the Optimal Exit Computer using the compact representation takes just 0.019 s while the Online Planner finds an optimal plan in 0.088 s. For all the other methods, this problem is intractable.

\begin{figure}[t]
\centering
\includegraphics[width=0.9\linewidth]{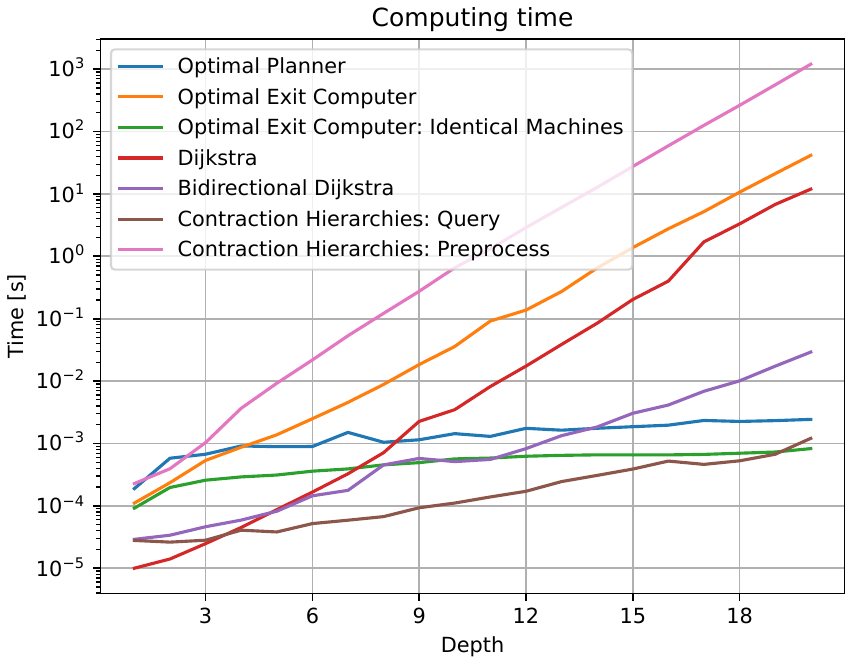}
\caption{Computing time for varying depth in Study 1.}
\label{fig:large_recursive_system_journal}
\end{figure}

\subsection{Study 2: Robot Warehouse Application}\label{numerical_eval_robot_warehouse}
The second study considers the robot application in Section \ref{illustrative_example}, here formalised. The HiMM $Z$ is constructed using three~MMs.

The first MM $M_1$ has 10 states corresponding to the houses, ordered in a line with the robot can move to neighbouring houses, at a cost of 100, as in Fig. \ref{fig:robot_example_detailed_overview}. The start state is $s_{M_1} = 1$.
%The MM $M_1$ has 10 states, one for each house, ordered linearly where neighbouring states has arcs between, modelling that the robot can move to e.g., House 10 from House 9. This specifies $\delta_{M_1}$. The transition costs are all set to 100, specifying $\gamma_{M_1}$. Since $M_1$ is the root of the HiMM, the start state is~immaterial.

The second MM $M_2$ models the robot moving in a house. The robot can move, in a $10 \times 10$ grid, to nearby locations at a cost of 1. The robot can also go out to an entrance state $S$ as in Fig. \ref{fig:robot_example_detailed_overview}, and exit $M_2$ by going left or right (at $S$), moving to the corresponding nearby house. The start state is~$S$.

The third MM $M_3$ specifies what the robot can do at a location. The robot starts in an idle state $S$, and can either (at $S$): exit $M_3$ and transition to the corresponding nearby location in $M_2$; or, interact with a lab desk. In the latter, the robot starts steering an arm over a $3 \times 3$ test tube rack (similar to the grid dynamics in $M_2$), with arm initially at (1,1). The robot can also scan the tube the arm is currently over (the tube is then remembered and no other tube can be scanned). The robot can quit steering the arm at (1,1), going back to $S$, see Fig. \ref{fig:robot_example_detailed_overview}. All costs are 0.5, except scanning which costs 4. Number of states is $1+9\times 10 = 91$ states with $S$ as start~state.\footnote{The number of states is derived as follows. For each grid position (i.e., 9 positions), there are 10 different tubes that can have been scanned: any of the 9 tubes or none. This contributes to $9\times 10$ states. With $S$, we thus have $1+9\times 10$ (what has been scanned is forgotten at $S$).}

%by moving up, down left or right

The HiMM $Z$ is constructed by nesting $M_1$, $M_2$, $M_3$ into a hierarchy as in Fig. \ref{fig:robot_example_detailed_overview}, with approximately $91 000$ states.

%either be idle at the location, given by the start state $S$ in Fig. \ref{fig:robot_example_detailed_overview},

%% I AM HERE, WRITE ABOUT HOW M_3 INTERACTS WITH M_2!!!

\subsubsection{Cases}
We consider three different cases to show how our algorithm handles changes. Case 1 is $Z$ just described. In Case 2, we change $Z$ by adding another identical house, House 11, with House 10 as neighbour, formally done by applying the modifications state addition (adding House 11) followed by an arc modification (connected House 11 with House 10).  In Case 3, we again start from $Z$ in Case 1, but instead block som locations in House 2 (as depicted in Fig. \ref{fig:robot_example_change_overview}), by applying state subtractions. In all cases, the robot starts with the arm over tube (3,3) in the bottom-right corner of House 1, with goal to scan tube (3,3) in the bottom-right corner of House 10, 11 and 2, in Case 1, 2 and 3 respectively.

\subsubsection{Result}
The computation times are given in Table \ref{table_1}. Here, the Optimal Planner finds an optimal plan in 18–26 ms, around 2–31 times faster than Dijkstra's algorithm and Bidirectional Dijkstra. On the other hand, Contraction Hierarchies is approximately 14–35 times faster than the Optimal Planner. However, this improvement in speed comes at the expense of a much more expensive preprocessing step for Contraction Hierarchies. Indeed, the Optimal Exit Computer (row 2) is around $10^3$ to $10^6$ times faster than the preprocessing step of Contraction Hierarchies. Here, we also note the improvement in preprocessing time for the Optimal Exit Computer due to reconfigurability: 0.087 s and 0.0010 s (row 2) for just updating the affected optimal exit costs compared to 0.96 s and 0.93 s (row 3) for re-computing them all again (Case 1 has the same times since all exit costs needs to be updated due to initialisation). In fact, the update step (in Case 2 and 3) is so quick that the total computing time for Optimal Planner plus Optimal Exit Computer is significantly faster than the computing time of Dijkstra's algorithm and Bidirectional~Dijkstra.

Finally, we consider the HiMM $Z$ in Case 1, but now modelled as an HiMM with identical machines. More precisely, all houses are identical and all locations are identical. Thus, we can compactly represent $Z$ with just three MMs, as schematically depicted in Fig.~\ref{fig:duplicates_example}. The resulting Optimal Exit Computer using this compact representation computes the optimal exit costs in 0.0030 s, compared to 0.80 s for the Optimal Exit Computer treating all machines as different, a significant speed up.

\begin{remark} 
Finally, we stress that an HiMM $Z$ with identical machines do not need to consist of just identical machines (as in the example above), but could just as well have distinct parts too. In this sense, an HiMM $Z$ with identical machines (Definition \ref{HiMM_duplicates_def}) is a generalisation of Definition \ref{HiMM_def}. For example, in Case 2 for the Robot Warehouse Application, House 2 becomes distinct (due to the state subtractions) and is thus automatically removed from the group of identical houses and becomes a separate (marked) subtree from $M_3$ that needs to recompute its optimal exit costs. The examples having HiMMs with identical machines were merely picked to showcase the full potential of the theory. 
\end{remark}

%This improvement in speed is achieved by preprocessing of the Optimal Exit Computer. On the other hand, the computation times of the Optimal Exit Computer is quite benign. In Case 1, it takes

%taking 0.83–0.98 s to compute all optimal exit costs, comparable times to a search by Dijkstra. 

%Here, we see that the Optimal Planner is around 2 to 29 times faster than Dijkstra's algorithm and 2 to 46 times faster than Bidirectional Dijkstra. The cost we pay for this improvement in speed is the extra computation done by Optimal Exit Computer. More precisely, in Study 1, 

%On the other hand, Contraction Hierarchies is approximately 16 to 35 times faster than the Optimal Planner. This improvement in speed comes at the expense of a much more expensive preprocessing step for Contraction Hierarchies. Indeed, with an Optimal Exit Computer that computes all optimal exit costs, the preprocessing step of Contraction Hierarchies is $10^3$ times slower.

\begin{table}[]
\centering
\caption{Computing times for the cases in Study 2.}
\label{table_1}
\begin{tabular}{llll}
\hline
  &  Case 1 & Case 2  & Case 3  \\ \hline
Optimal Planner           & 0.018 s & 0.026 s & 0.026 s \\
Optimal Exit Computer: Update           & 0.80 s & 0.087 s & 0.0010 s \\
Optimal Exit Computer: All                & 0.80 s & 0.96 s & 0.93 s \\
Dijkstra's Algorithm                & 0.52 s & 0.67 s & 0.057 s \\
Bidirectional Dijkstra                & 0.56 s & 0.69 s & 0.057 s \\
CH – compute optimal plan               & 0.0013 s & 0.00097 s & 0.00075 s \\
CH – preprocessing               & 820 s  & 900 s & 870 s \\
\hline  
\end{tabular}
\end{table}

\section{Conclusion}\label{hier:conclusion}

\subsection{Summary}
In this paper, we have considered optimal planning in modular large-scale systems formalised as HiMMs. 

First, we introduced the formalism of an HiMM, given as a nested hierarchy of MMs, and extended this formalism to handle changes in the HiMM, given as modifications, as well as compactly representing identical machines. We also argued why this formalism is modular in a mathematical sense.

Second, we proposed a planning algorithm computing optimal plans between any two states for a given HiMM $Z$, consisting of two steps. The first step, the Optimal Exit Computer, preprocess the HiMM by computing optimal exit costs for each MM in $Z$. This step needs to be done only once for a given HiMM $Z$ with time complexity $O(|X|)$, where $|X|$ is the number of machines in $Z$. The second step, the Optimal Planner, can then repeatedly use the optimal exit costs to rapidly compute optimal plans between any states, by first removing irrelevant subtrees of $Z$, then searching on the reduced HiMM $\bar{Z}$, and finally expanding the found trajectory of $\bar{Z}$ to an optimal plan to $Z$. The resulting time complexity of the Optimal Planner is $O(\mathrm{depth}(Z) \log ( \mathrm{depth}(Z)))$. The algorithm also handles modifications efficiently in the sense that the Optimal Exit Computer only needs to recompute the optimal exit costs for parts of the HiMM affected by the modifications, with time complexity $O ( \mathrm{depth}(Z)))$. The Optimal Exit Computer also efficiently exploits identical machines, only needing to compute the optimal exit costs for one of the identical machines, which in the best case reduces the time complexity to $O ( \mathrm{depth}(Z)))$. These time complexity expressions are simplified for brevity, see the paper for the full expressions.

Finally, we validated our planning algorithm on two systems, comparing it with Dijkstra's algorithm, Bidirectional Dijkstra and Contraction Hierarchies. We noted that we significantly outperform Dijkstra's algorithm and Bidirectional Dijkstra on large-scale systems. Moreover, Contraction Hierarchies generally founds optimal plans faster than our planning algorithm (2-35 times faster), but at the expense of a slower preprocessing step in general (in the worst case several magnitudes slower for large scale systems). Thus, our planning algorithm obtains a good balance between preprocessing time and query time. Moreover, our Optimal Exit Computer handles modifications and exploits identical machines several magnitudes faster than Contraction Hierarchies (which instead recomputes the preprocessing step from scratch), and enables computation of optimal plans for enormous systems with identical machines (e.g., having $10^{150}$ states if one would not group together identical machines), intractable for the other~algorithms.

\subsection{Future Work}
There are a variety of interesting future work. First, every MM $M$ in an HiMM $Z$ has a unique start state, e.g., the entrance $S$ of House 2 in Fig. \ref{fig:robot_example_detailed_overview}. What if there are instead several possible start states? In such situations, the current formalism would need to aggregate $M$ into a larger MM to keep a unique start state. An interesting future research direction is instead to a modify the formalism to allow several start states, we aim to obtain a more decomposed hierarchy. %This is an interesting future research direction. 

%for which the system transitions to whenever entering $M$ from outside (that is, coming from a state that is not in the subtree with $M$ as root), 

Second, the HiMM $Z$ is given to us, with motivation that this originates from the design process, see e.g., \cite{harel1987statecharts}. However, it may also be the case that we encounter a flat MM without a hierarchy and desire to decompose this MM into an equivalent HiMM (to get all the benefits of having it in HiMM form such as efficient handling of changes and grouping together identical machines). Designing such a decomposition algorithm is an open research question. The decomposition algoritm found in \cite{biggar2021modular} for FSMs (without costs) and the recent acyclic-connected tree decomposition for weighted graphs \cite{stefansson2025faster} could serve as a start.

Third, there are several additional shortest path algorithms with a preprocessing step that are worth comparing with in the future such as more advanced and dynamic variants of contraction hierarchies \cite{dibbelt2016customizable,wan2025parallel}, customizable route planning \cite{delling2017customizable} and customizable hub labeling \cite{storandt2022customizable}. Another research question is how these algorithms could efficiently incorporate modifications and identical subsystems in an HiMM to speed up the preprocessing step, similar to our~algorithm. %Moreover, one could also compare our method with classical AI planning based on PDDL and best-first heuristic search \cite{}, where the latter exploits propositional logic rather than hierarchical structure. 
Furthermore, it would also be interesting to see how our planning algorithm perform on more randomly generated HiMMs (and related hierarchical structures) compared to mentioned shortest path~algorithms. 

Finally, extending the formalism to a broader class of systems such as stochastic and continuous systems, as well as modelling systems as HiMMs to efficiently solve more complex planning problems (e.g., using temporal logic \cite{luo2024decomposition,wei2025hierarchical}), are two interesting future research directions. It would also be interesting to see how this formalism can be integrated into more practical hierarchical machine frameworks such as the Unified Modelling Language \cite{booch2005unified}.

\section*{References}

\bibliographystyle{plain}
\bibliography{Ref3,Ref3_hfsm_v2,after_reviews}

\begin{IEEEbiography}[{\includegraphics[width=1in,height=1.25in,clip,keepaspectratio]{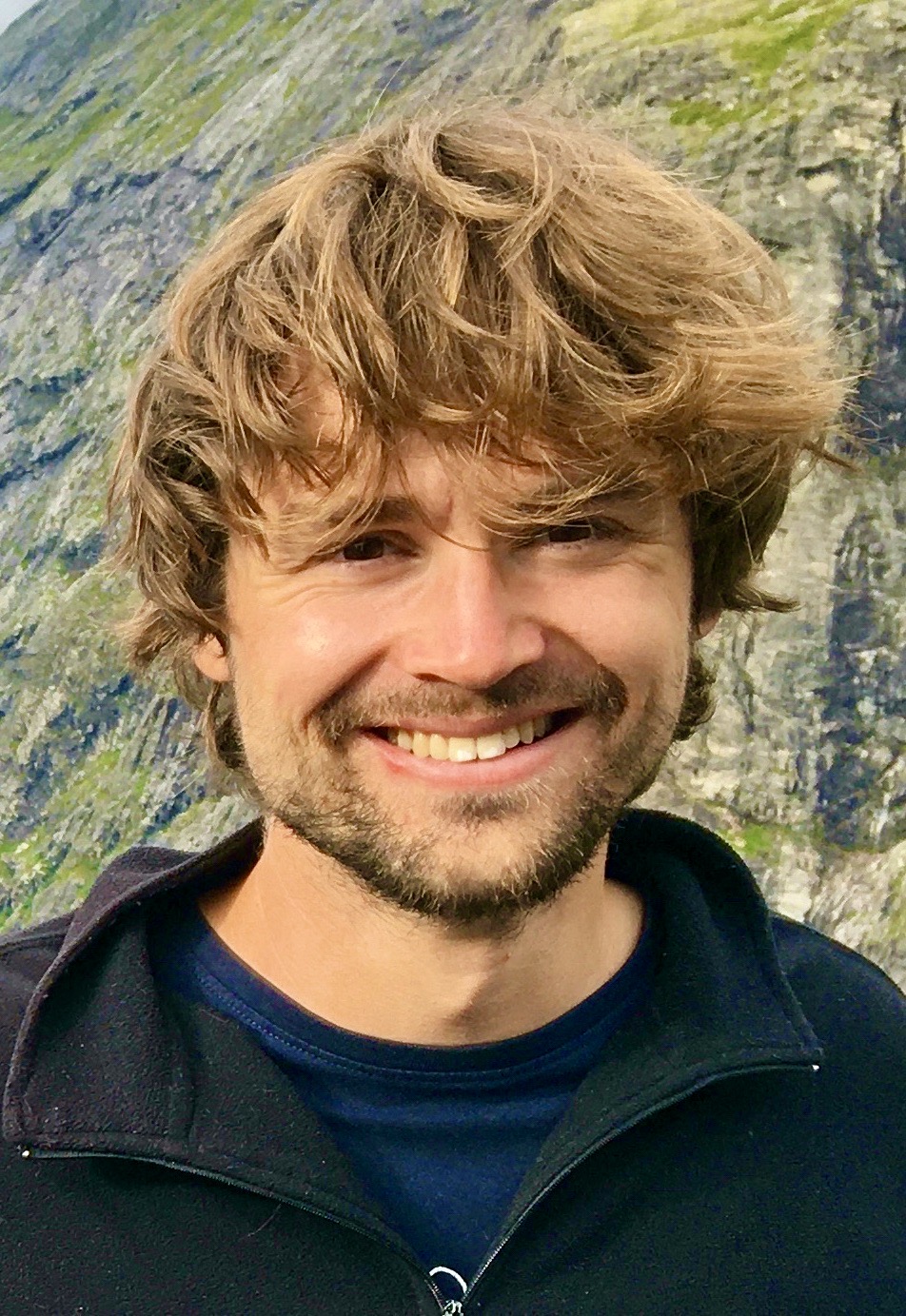}}]{Elis Stefansson} received a B.Sc. degree in Engineering Physics from KTH Royal Institute of Technology, Sweden in 2015 and a M.Sc. degree in Mathematics jointly from KTH Royal Institute of Technology and Stockholm University, Sweden in 2018. He has been a research visitor at Caltech and UC Berkeley and is currently pursuing a Ph.D. degree in Electrical Engineering and Computer Science at KTH Royal Institute of Technology. His research interests are in the intersection of control and complexity, as well as human-robot interactions, receiving an Outstanding Paper Prize at CDC 2021 and being a Best Student Paper Finalist at ECC 2019. 
\end{IEEEbiography}

\begin{IEEEbiography}[{\includegraphics[width=1in,height=1.25in,clip,keepaspectratio]{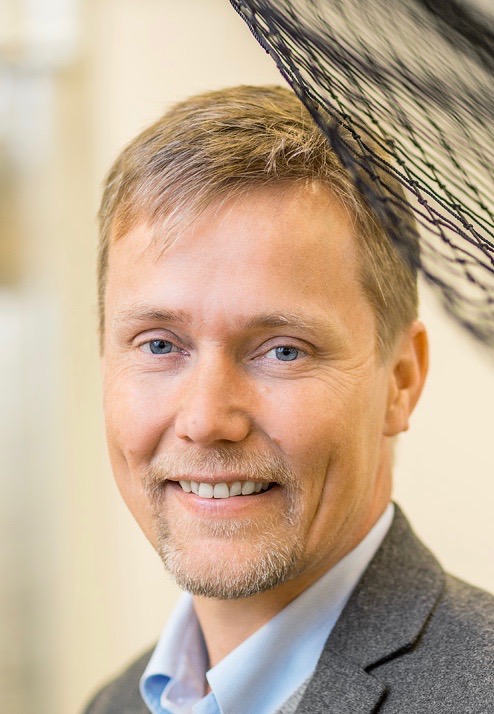}}]{Karl H. Johansson} is Swedish Research Council Distinguished Professor in Electrical Engineering and Computer Science at KTH Royal Institute of Technology in Sweden and Founding Director of Digital Futures. He earned his MSc degree in Electrical Engineering and PhD in Automatic Control from Lund University. He has held visiting positions at UC Berkeley, Caltech, NTU and other prestigious institutions. His research interests focus on networked control systems and cyber-physical systems with applications in transportation, energy, and automation networks. For his scientific contributions, he has received numerous best paper awards and various distinctions from IEEE, IFAC, and other organizations. He has been awarded Distinguished Professor by the Swedish Research Council, Wallenberg Scholar by the Knut and Alice Wallenberg Foundation, Future Research Leader by the Swedish Foundation for Strategic Research. He has also received the triennial IFAC Young Author Prize and IEEE CSS Distinguished Lecturer. He is the recipient of the 2024 IEEE CSS Hendrik W. Bode Lecture Prize. His extensive service to the academic community includes being President of the European Control Association, IEEE CSS Vice President Diversity, Outreach \& Development, and Member of IEEE CSS Board of Governors and IFAC Council. He has served on the editorial boards of Automatica, IEEE TAC, IEEE TCNS and many other journals. He has also been a member of the Swedish Scientific Council for Natural Sciences and Engineering Sciences. He is Fellow of both the IEEE and the Royal Swedish Academy of Engineering Sciences.
\end{IEEEbiography}

\if\longversion1

\section{Appendix}

\subsection{HiMMs are Modular}\label{appendix_himms_are_modular}
In this section, we provide the details to the modular analysis of HiMMs described in Section \ref{HiMM_are_modular}. For ease of reading, we start by restating the definition of a module of an MM (also given in Section \ref{HiMM_are_modular}), naturally extending the definition of the FSM case found in \cite{biggar2021modular}. To follow \cite{biggar2021modular} more closely, we here first define an MM module using entrances an exits (Definition \ref{def:module}), and then show that this definition is equivalent to the definition using the three operations contraction, restriction and expansion (Proposition \ref{prop:modular_equivalence}), that is, reversed presentation order compared to Section \ref{HiMM_are_modular}. We stress that this reversed order is immaterial since both definitions are equivalent. Finally, we prove Proposition \ref{prop:himms_are_modular} showing that HiMMs are modular.

To this end, note first that there is a natural correspondence between an MM $M$ and a labelled directed graph with arcs on the form $u \xrightarrow{a,o} v$, where $u$ and $v$ are states of $M$ and the arc $u \xrightarrow{a,o} v$ means that there is a $a$-transition from $u$ to $v$ with cost $o$. We also use the notation $u \xrightarrow{a} v$ to mean that there is a $a$-transition to $u$ and $v$ with some cost. 

For simplicity, following \cite{biggar2021modular}, we assume throughout this section that every MM $M$ is reachable in the sense that every state in $M$ can be reached from the start state of $M$. This assumption is quite mild since the potential non-reachable part of an MM can be removed to make it reachable.

\begin{definition}[Entrances and exits \cite{biggar2021modular}]
Consider a non-empty subset $S \subseteq Q_M$. For any $a \in \Sigma$, an $a$-entrance to $S$ is a state $v \in S$ such that $u \xrightarrow{a} v$ for some $u \notin S$, and an $a$-exit to $S$ is a state $u \notin S$ such that $v \xrightarrow{a} u$ for some $v \in S$. An entrance (exit) is any $a$-entrance ($a$-exit). 
\end{definition}
\begin{definition}[MM Modules]\label{def:module}
A non-empty subset $S \subseteq Q_M$ of states of an MM $M$ is a module if it has at most one entrance and for each $a \in \Sigma_M$, if $M$ has an $a$-exit then (i) that $a$-exit is unique (ii) every state in $S$ has an $a$-arc and (iii) all $a$-arcs out of $S$ have equal costs.
\end{definition}
\begin{remark}
The definition is identical to that of an FSM module \cite{biggar2021modular} except that we have imposed the additional requirement that all $a$-arcs out of $S$ should have equal costs.
\end{remark}
To motivate Definition~\ref{def:module}, we introduce three operations, contraction, restriction and expansion, and prove that Definition~\ref{def:module} is equivalent to an independence property capturing the intuitive notion of a module, formally defined through these operations (Proposition \ref{prop:modular_equivalence}). The analysis closely follows the FSM case found in \cite{biggar2021modular} with minor modifications to adapt to the MM case.

\begin{definition}[Contraction]
Let $M$ be an MM and $S \subseteq Q_M =:Q$. The contraction of $S$ in $M$, denoted $M/S$ is defined as
\begin{equation*}
M/S = ( Q \backslash S \cup \{ S \},  \Sigma, \delta', \gamma', s') 
\end{equation*}
where
\begin{equation*}
\delta'(q,a) =
\begin{cases}
\delta_{M}(q,a), & \textrm{if $q \in Q \backslash S$ and $\delta_{M}(q,a) \in Q \backslash S$}  \\
S, & \textrm{if $q \in Q \backslash S$ and $\delta_{M}(q,a) \in S$} \\
\delta_{M}(u,a), & \parbox[t]{.3\textwidth}{\textrm{if $q = S$ and \\ $\exists u$ s.t. $\emptyset \neq \delta_{M}(u,a) \in Q \backslash S$}} \\
\emptyset, & \textrm{otherwise},
\end{cases}
\end{equation*}
\begin{equation*}
\gamma'(q,a) =
\begin{cases}
\gamma_{M}(q,a), & \textrm{if $q \in Q \backslash S$}  \\
\gamma_{M}(u,a), & \parbox[t]{.3\textwidth}{\textrm{if $q = S$ and \\ $\exists u$ s.t. $\emptyset \neq \delta_{M}(u,a) \in Q \backslash S$}}
\end{cases}
\end{equation*}
and
\begin{equation*}
s' =
\begin{cases}
s_M, & \textrm{if $s_M \notin S$} \\
S, & \textrm{otherwise}.
\end{cases}
\end{equation*}
\end{definition}
Intuitively, a contraction of a set of states $S$ in an MM $M$ replaces all the states of $S$ with just one state (also called $S$) and draw arcs to and from this state $S$ if there exists corresponding arcs to and from the subset of states $S$. Just as in the FSM case \cite{biggar2021modular}, this is not always well-defined since $\delta'$ and $\gamma'$ may not be functions (not even partial functions) if $S$ has multiple $a$-arcs out of $S$. However, if every $a$-exit is unique, then $\delta'$ is well-defined. Furthermore, if all $a$-arcs out of $S$ has the same cost, then $\gamma'$ is well-defined. This justifies conditions (ii) and (iii) in Definition \ref{def:module}.

\begin{definition}[Restriction]
Let $M$ be an MM and $S \subseteq Q_M =: Q$. Considering $M$ as a directed graph, the restriction $M[S]$ is given by the subgraph of $M$ with nodes $S$. The start state of $M[S]$ is $s_M$ if $s_M \in S$, otherwise it is any entrance of $S$.%\elis{Here I must assume that $M$ is reachable in the sense that one can come to every state from the start state of $M$. Maybe I should mention this in the analysis.}
\end{definition}
In words, $M[S]$ is created by considering only the states $S$ cutting away all arcs to and from the remaining MM. Again, as in the FSM case \cite{biggar2021modular}, $M[S]$ is not always well-defined since there might be multiple entrances. However, with the restriction of at most one entrance, $M[S]$ is well-defined, motivating condition (i) in Definition \ref{def:module}.

\begin{definition}[Expansion]
Let $G$ and $H$ be MMs and let $v$ be a state of $G$. Then, the expansion of $H$ at $v$, written $G \cdot_v H$ is the MM
\begin{equation*}
\left ( Q_G \backslash \{v \} \cup Q_H, \Sigma_G \cup \Sigma_H, \delta', \gamma', s' \right ),
\end{equation*}
where
\begin{align*}
\delta'(q,a) =
\begin{cases}
\delta_G(q,a), & \textrm{if $q \in Q_G \backslash \{v \}$ and $\delta_G(q,a) \neq v$} \\
s_H, & \textrm{if $q \in Q_G \backslash \{v \}$ and $\delta_G(q,a) = v$}  \\
\delta_H(q,a), & \textrm{if $q \in Q_H$ and $\delta_H(q,a) \neq \emptyset$} \\
\delta_G(v,a), & \textrm{if $q \in Q_H$ and $\delta_H(q,a) = \emptyset$,}
\end{cases}
\end{align*}

\begin{align*}
\gamma'(q,a) =
\begin{cases}
\gamma_G(q,a), & \textrm{if $q \in Q_G \backslash \{v \}$} \\
\gamma_H(q,a), & \textrm{if $q \in Q_H$ and $\delta_H(q,a) \neq \emptyset$} \\
\gamma_G(v,a), & \textrm{if $q \in Q_H$ and $\delta_H(q,a) = \emptyset$}
\end{cases}
\end{align*}
and
\begin{equation*}
s' =
\begin{cases}
s_H, & v = s_G \\
s_G, & v \neq s_G.
\end{cases}
\end{equation*}
\end{definition}
The expansion operator is intuitively the opposite of the contraction operator. That is, we take a state $v$ in an MM $G$ and replace that state with an MM $H$. Any previous arc to $v$ is now going to the start state $s_H$ of $H$ instead, and any input $a$ in $H$ that is unsupported (from some state $q \in Q_H$) now transitions according to the $a$-transition from $v$ (intuitively speaking, is taken care of by $G$ higher up in the hierarchy). The costs are constructed similarly.

With these operations, we are in a position to prove the equivalent independent property characterisation of an MM~module.

\begin{proposition}\label{prop:modular_equivalence}
Let $M$ be an MM. A subset $S \subseteq Q_M$ is a module of $M$ if and only if $M/S \cdot_S M[S] = M$.
\end{proposition}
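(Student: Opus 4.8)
The plan is to prove the equivalence in two directions, closely mirroring the FSM argument of \cite{biggar2021modular} but tracking the cost function at every step. The key preliminary observation is that the operations $M/S$, $M[S]$ and the expansion $(\cdot)\cdot_S(\cdot)$ are each only well-defined when the relevant parts of the module conditions hold, so the "if" direction needs a small amount of care: before even writing $M/S \cdot_S M[S]$ we must know this object exists. Thus I would first record the following bookkeeping facts, all essentially immediate from the definitions: (a) $M/S$ is well-defined as an MM precisely when every $a$-exit out of $S$ is unique and all $a$-arcs out of $S$ carry equal cost (conditions (i)\,[exit uniqueness] and (iii) of Definition~\ref{def:module}); (b) $M[S]$ is well-defined when $S$ has at most one entrance (condition~(i)\,[entrance uniqueness]); (c) the expansion $G\cdot_v H$ is always well-defined as soon as $G$ and $H$ are. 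This isolates where each module condition is used.

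\textbf{Direction 1 ($\Rightarrow$).} Assume $S$ is a module. By (a)--(c) the composite $M/S\cdot_S M[S]$ is a well-defined MM; call it $M'$. I would check $M' = M$ by comparing state sets, input sets, start states, transition functions and cost functions. The state set of $M'$ is $(Q_M\setminus S \cup\{S\})\setminus\{S\}\cup S = Q_M$, and the input set is $\Sigma$ in both; the start-state bookkeeping matches by unwinding the two nested case splits (start of $M/S$ is $S$ iff $s_M\in S$, start of $M[S]$ is $s_M$ in that case). For the transition and cost functions I would argue by cases on where a state $q$ and its image $\delta_M(q,a)$ sit relative to $S$: (1) both outside $S$ — $M/S$ keeps the arc, expansion keeps it, done; (2) $q\notin S$, $\delta_M(q,a)\in S$ — in $M/S$ this becomes an arc into the contracted node, and the expansion redirects it to $s_{M[S]}$, which by condition~(i) on entrances (and reachability, assumed throughout the section) is exactly the unique entrance, hence the correct target $\delta_M(q,a)$; (3) $q\in S$, $\delta_M(q,a)\in S$ — internal arcs of $S$ survive verbatim into $M[S]$ and hence into $M'$; (4) $q\in S$, $\delta_M(q,a)\notin S$ — this arc disappears in the restriction (unsupported in $M[S]$), so the expansion fills it using the outgoing arc of the contracted node $S$ in $M/S$, and by condition~(i)\,[exit uniqueness] and~(iii) that outgoing arc has the right target and the right cost, namely $\delta_M(q,a)$ and $\gamma_M(q,a)$; condition~(ii) guarantees case~(4)'s $a$-arc exists at $q$ whenever it exists at any state of $S$, so no arc is spuriously dropped. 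Checking $\gamma' = \gamma_M$ in each case is routine since $\gamma_{M/S}$ on the node $S$ was defined to be the common out-cost and the expansion copies it back.

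\textbf{Direction 2 ($\Leftarrow$).} Assume $M/S\cdot_S M[S]=M$; in particular both $M/S$ and $M[S]$ must be well-defined for the left-hand side to make sense, which by (a) and (b) already gives uniqueness of entrances and of exits and the equal-cost condition~(iii). It remains to extract condition~(ii): every state of $S$ has an $a$-arc whenever $M$ has an $a$-exit out of $S$. Here I would argue by contradiction — if some $v\in S$ lacked an $a$-arc while another $u\in S$ had one leaving $S$, then in $M/S\cdot_S M[S]$ the state $v$ would inherit an $a$-arc from the contracted node (the "fill-in" branch of the expansion), contradicting that the composite equals $M$ where $\delta_M(v,a)=\emptyset$. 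The possible subtlety is the degenerate situations the convention footnotes hint at (e.g. $s_M\in S$, or $S$ having no entrance at all, or an $a$-exit that is also an entrance); I would handle these by the same reachability assumption stated at the start of the appendix, which forces $S$ to have an entrance unless $s_M\in S$, so the case analysis stays finite.

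\textbf{Main obstacle.} The genuine mathematical content is light — this is really a careful diagram-chase — so the main difficulty is purely organizational: making the case split on $(q,\delta_M(q,a))$ relative to $S$ exhaustive and non-overlapping, and being scrupulous that each of the three module conditions is invoked exactly where the corresponding operation would otherwise be ill-defined or would introduce a wrong arc/cost. The one spot that needs a real (if short) argument rather than pure unwinding is the reverse direction's derivation of condition~(ii), since that condition is the one not "visible" in the well-definedness of $M/S$ or $M[S]$ individually but only in the equality with $M$.
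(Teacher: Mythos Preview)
Your proposal is correct and follows essentially the same route as the paper: a four-way case split on the position of an arc's endpoints relative to $S$ for the forward direction, and for the reverse direction a check that each module condition is forced either by well-definedness of $M/S$, $M[S]$ or by the equality itself. The paper phrases the reverse direction as the contrapositive (assume $S$ is not a module, exhibit the discrepancy), but the content is identical.

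One small point to tighten. Your bookkeeping fact (b) says $M[S]$ is well-defined \emph{when} $S$ has at most one entrance --- a sufficient condition --- but in Direction~2 you then use the converse, inferring entrance uniqueness from well-definedness. That inference fails when $s_M\in S$: in that case $M[S]$ takes $s_M$ as its start state unconditionally, so $M[S]$ is perfectly well-defined even if $S$ has several entrances. You do flag $s_M\in S$ as a degenerate case, but the reachability assumption you invoke only guarantees that $S$ has \emph{some} entrance when $s_M\notin S$; it says nothing about multiplicity. The fix is the direct one (and is what the paper defers to \cite{biggar2021modular}): if $e_1\neq e_2$ are two entrances with incoming arcs $u_j\xrightarrow{a_j} e_j$, the expansion redirects both to $s_{M[S]}$, so at least one arc changes target and $M/S\cdot_S M[S]\neq M$. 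With that patch your argument is complete.
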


\begin{proof}
The proof is almost identical to the FSM case found in \cite{biggar2021modular}, with straightforward modifications to account for costs. We provide a proof where we occasionally refer to \cite{biggar2021modular} when the derivations are totally analogous to the FSM~case. %We provide details concerning parts of the proof that we need to modify and refer to \cite{} for parts that are analogous.

Assume first that $S$ is a module. We want to prove that $M/S \cdot_S M[S] = M$. Note that $M/S \cdot_S M[S]$ and $M$ have the same set of states. Therefore, what remains to show is that $u \xrightarrow{a,o} v$ is an arc in $M/S \cdot_S M[S]$ if and only if $u \xrightarrow{a,o} v$ is an arc in $M$. We consider four cases: (i) $u,v \notin S$ (ii) $u \notin S$, $v \in S$ (iii) $u \in S$, $v \notin S$ and (iv) $u,v \in S$. We only provide the details to (iii) to show how one can adapt the proof from \cite{biggar2021modular} with minor modifications to account for costs. The other cases are totally analogous to the FSM proof and we refer to \cite{biggar2021modular} for details. 

To show (iii) assume $u \in S$, $v \notin S$. Consider an arc $u \xrightarrow{a,o} v$ in $M$. Then $u$ has no $a$-arc in $M[S]$ as $v \in S$. Since $S \xrightarrow{a,o} v$ is an arc in $Z/S$, we have that all states of $Z[M]$ in $M/S \cdot_S M[S]$ without $a$-arcs now have $a$-arcs to $v$ with cost $o$ (due to the definition of an expansion), so $u \xrightarrow{a,o} v$ is in $M/S \cdot_S M[S]$. Conversely, consider an arc $u \xrightarrow{a,o} v$ in $M/S \cdot_S M[S]$. Then $S \xrightarrow{a,o} v$ is an arc in $M/S$ and $u$ cannot have an arc in $M[S]$. Moreover, $S \xrightarrow{a,o} v$ implies that there exists $k \in S$ such that $k \xrightarrow{a,o} v$ is in $M$. Hence $S$ has an $a$-exit and by definition of a module, every state in $S$ either has an $a$-arc inside $S$ or an $a$-arc that goes to $v$ with cost $o$. Since $u$ does not have an $a$-arc in $M[S]$, we conclude that $u \xrightarrow{a,o} v$ is in $M$.

By the four cases, we conclude that $u \xrightarrow{a,o} v$ is an arc in $M/S \cdot_S M[S]$ if and only if $u \xrightarrow{a,o} v$ is an arc in $M$, and hence, $M/S \cdot_S M[S] = M$. This completes the first part of the proof.

For the second part of the proof, assume instead that $S$ is not a module. We want to show that $M/S \cdot_S M[S] \neq M$. There are four cases how $S$ can violate being a module: either (i) $S$ has two distinct entrances, or (ii) $S$ has two distinct $a$-exits, or (iii) $S$ has an $a$-exit but a state $u$ in $S$ has no $a$-arc, or (iv) there is a unique $a$-exit $v$ but two states $u_1,u_2 \in S$ such that  $u_1 \xrightarrow{a,o_1} v$ and $u_2 \xrightarrow{a,o_1} v$ with different costs $o_1 \neq o_2$. Cases (i)-(iii) are treated totally analogously to the FSM case in \cite{biggar2021modular}. We therefore only consider case (iv). In case (iv), note simply that the cost function $\gamma'$ of the expansion $M/S \cdot_S M[S] \neq M$ is not even well-defined if (iv) is true and therefore $M/S \cdot_S M[S] \neq M$. We conclude that if $S$ is not a module then $M/S \cdot_S M[S] \neq M$.

The first part and the second part of the proof gives the desired equivalence. The proof is therefore complete.

\end{proof}

Proposition \ref{prop:modular_equivalence} intuitively formalises a module as an independent nested submachine of the considered machine and can be seen as an alternative definition of a module. Furthermore, this alternative definition can also be used to define modules for FSM and graphs (with appropriate contraction, restriction and expansion), see \cite{biggar2021modular} for details, thereby forming a more universal characterisation of what a module is. However, from a practical standpoint, Definition \ref{def:module} is more suitable.

Finally, we define modules of an HiMM $Z$ to be modules of the equivalent flat HiMM $Z^F$.\footnote{The flat HiMM $Z^F$ of an HiMM $Z$ is the MM given by $Z^F = (Q,\Sigma,\delta,\gamma,s) := (S_Z,\Sigma,\psi,\chi,\mathrm{start}(M_0))$ (with notation as in Definition \ref{HiMM_def}). Intuitively, $Z^F$ is the MM we get by flattening the whole hierarchy of machines into one equivalent MM.} Note that $Z^F$ is by definition an MM.

\begin{definition}[HiMM module]\label{HiMM_module}
Let $Z=(X,T)$ be an HiMM. Consider a subset $Q \subseteq Q_Z$ of augmented states and let $S \subseteq S_Z$ be all states of $Z$ nested within $Q$. Then $Q$ is a module of $Z$ if $S$ is a module of the flatted HiMM $Z^F$. Furthermore, we say that an MM $M \in X$ is a module of $Z$ if $Q_M$ is a module of $Z$.
\end{definition}

\begin{remark}
In \cite{biggar2021modular}, a module of an HFSM $Z$ without costs is defined as a subset of HFSM states $S$ such that there exists a module $Q \subseteq Q_M$ to some FSM $M$ (of $Z$) whose nested states in $Z$ equals $S$. Therefore, in this definition, all FSMs are trivially modules since the whole state set of an FSM is always a module. Definition \ref{HiMM_module} is slightly different since it says that a subset of HiMM states is a module if it is a module in the corresponding flat HiMM $Z^F$. That is, the HiMM states are not restricted to be the nested states of just one MM. We adopt this later definition to get a stronger (non-trivial) result that each MM of an HiMM is modular.
\end{remark}

We are now in a position to prove Proposition \ref{prop:himms_are_modular}.

\begin{proof}[Proof of Proposition \ref{prop:himms_are_modular}]
Let $Z=(X,T)$ be an HiMM and $M \in X$. Let $S$ be the HiMM states recursively nested within $Q_M$. We want to prove that $S$ is a module of $Z^F$. We first note that $S$ can have at most one entrance given by $\mathrm{start}(M)$. Assume $v$ is an $a$-exit of $S$. Let $u$ be any state $u \in S$ such that $u \xrightarrow{a} v$. By definition, $\psi(u,a) = v$. Moreover, considering the chain of nested MMs
\begin{equation*}
\emptyset \xleftarrow{u} K_{1} \xleftarrow{k_{1}} \dots \xleftarrow{k_{n-1}} K_n \xleftarrow{k_{n}} M.
\end{equation*} 
We have that $\delta_{K_1}(u,a) = \delta_{K_i}(k_i,a) = \delta_{M}(k_n,a) = \emptyset$ since we exit $M$. Thus, $\psi(u,a) = \psi(m,a)$ where $m$ is such that $M \xleftarrow{m} W$ for some $W \in X$ (note that such $m$ and $W$ must exist since otherwise $v = \psi(u,a) = \emptyset$). We conclude that $v = \psi(m,a)$ so the $a$-exit is unique. Similarly, $\chi(u,a) = \chi(m,a)$ so all $a$-arcs out of $S$ (to $v = \psi(m,a)$) have equal costs. Finally, every state of $S$ has an $a$-arc by construction. To see it, consider an arbitrary $u \in S$ and its corresponding chain of nested MMs as above. If $\delta_{K_1}(u,a) = \delta_{K_i}(k_i,a) = \delta_{M}(k_n,a) = \emptyset$, then again $\psi(u,a) = \psi(m,a)$ (where $m$ is identical to before). Otherwise, there exists some transition further down in the chain, so we transition to a state in $S$. In both cases, $u$ has an $a$-arc. This completes the proof.
\end{proof}

\subsection{HiMM with Identical Machines}
In this section, we complement Definition \ref{HiMM_duplicates_def} in Section \ref{duplicates} with a more complete definition given by Definition \ref{HiMM_duplicates_def_detailed}, filling out the details of the start function, hierarchical transition function and cost~function.

\begin{definition}\label{HiMM_duplicates_def_detailed}
An HiMM with identical machines is a tuple $Z = (X,T)$ just as in Definition \ref{HiMM_def} except that multiple arcs in $T$ can now go to the same node, i.e., $T$ is not a tree anymore but instead a DAG with a unique root. Furthermore, a state of $Z$ is a directed path of arcs $s = (s_1,\dots,s_d)$ from the root of $T$ to a leaf $s_d$, and an augmented state is any subpath $(s_1,\dots,s_i)$ of a state $s = (s_1,\dots,s_d)$ ($i \leq d$). Moreover, the start function, hierarchical transition and cost function are defined as in Definition \ref{HiMM_def} with straightforward modifications to account for the modified state form:
\begin{enumerate}[(i)]
\item Start function: Given an auxiliary state $(s_1,\dots,s_k)$ of $Z$ and an MM $M$ such that $s_k$ corresponds to $M$, we define the start function by
\begin{align*}
\mathrm{start}((s_1,\dots,s_k),M) = \\
\begin{cases}
\mathrm{start}((s_1,\dots,s_k,s_M),N), &  \textrm{$(M \xrightarrow{s_M} N) \in T$} \\
(s_1,\dots,s_k,s_M), & \textrm{otherwise.}
\end{cases}
\end{align*}
\item Hierarchical transition function: Let $(s_1,\dots,s_k)$ be an auxiliary state of $Z$ with $s_k \in Q_M$, and denote $v =\delta_{M}(s_k,a)$. The hierarchical transition function $\psi: Q_Z \times \Sigma \rightharpoonup S_Z$ is given by 
\begin{align*}
\psi((s_1,\dots,s_k),a) = \hspace{2.1cm} \\
\begin{cases} 
\mathrm{start}((s_1,\dots,s_{k-1},v),Y), & \parbox[t]{.3\textwidth}{$v \neq \emptyset$, \\ $(M \xrightarrow{v} Y) \in T, Y \in X$} \\ % Had Y \in X Here too
(s_1,\dots,s_{k-1},v), & v \neq \emptyset, \mathrm{otherwise} \\
\psi((s_1,\dots,s_{k-1}),a), & v = \emptyset, k \geq 2 \\ % Had W \in X Here too
\emptyset, & v = \emptyset, \mathrm{otherwise.}
\end{cases}
\end{align*}
\item Hierarchical cost function: Let $(s_1,\dots,s_k)$ be an auxiliary state of $Z$ with $s_k \in Q_M$, and denote $v =\delta_{M}(s_k,a)$. The hierarchical cost function $\chi: Q_Z \times \Sigma \rightharpoonup \mathbb{R}^+$ is given~by
\begin{align*}
\chi((s_1,\dots,s_k),a) = \\
\begin{cases} 
\gamma_{M}(s_k,a), & v \neq \emptyset \\
\chi((s_1,\dots,s_{k-1}),a), & v = \emptyset, k \geq 2 \\ % Had W \in X Here too
\emptyset, & v = \emptyset, \mathrm{otherwise.}
\end{cases}
\end{align*}
\end{enumerate}
\end{definition}

\subsection{Optimal Exit Computer: Additional Details}
In this section, we provide the proofs of each statement in Section \ref{optimal_exit_computer}.

\begin{proof}[Proof of Proposition \ref{th:compute_optimal_exits_correct}]
%Let $\tilde{c}_x^M$ denote the optimal exit costs computed by Algorithm \ref{alg:compute_optimal_exits}. We want to show that  $\tilde{c}_x^M = c_x^M$ for all $M \in X$ (where $c_x^M$ is as in Definition \ref{def:optimal_exit_costs}). 
%The proof follows the one given in \cite{stefansson2023ecc}. 
%\elis{This proof, I rewrote quite a bit since I thought one could make it more streamlined.}

We prove the proposition by induction over the tree $T$. Towards this, consider any MM $M$ of $Z$ and assume each $q \in Q_M$ has correct optimal exit costs $c_a^q$. That is, $c_a^q = c_a^{N_q}$ if $q$ corresponds to an MM $N_q$ and $c_a^{q} = 0$ otherwise. Note that this assumption is trivially true for all leaf MMs of $T$ (the base case of the induction). Let $\tilde{c}_a^M$ be the optimal cost to $E_a$ in $\hat{M}$ computed by Algorithm \ref{alg:compute_optimal_exits}. We want to show that $\tilde{c}_a^M = c_a^M$ (the induction step), since then by induction over $T$, Algorithm \ref{alg:compute_optimal_exits} computes the correct optimal exit costs for all MM. It thus remains to prove that $\tilde{c}_a^M = c_a^M$. We divide the proof into two cases.

In the first case, $c_a^M = \infty$. In this case, no $a$-exit trajectory exists. Assume by contradiction that $\tilde{c}_a^M < \infty$ and let $z_a^M = (q_i,a_i)_{i=1}^N$ be the corresponding trajectory in $\hat{M}$. Note that each $c_{a_i}^{q_i}$ must be finite, hence, there exists an optimal $a_i$-exit trajectory $w_i$ of the subtree $q_i$ corresponds to. Then, $w = (w_i)_{i=1}^N$ is an $a$-exit trajectory of $M$, a contradiction since $c_a^M = \infty$. We conclude that $\tilde{c}_a^M = \infty$.

In the second case, $c_a^M < \infty$. Let $w$ be an optimal $a$-exit trajectory of $M$. Note that $w$ can be partitioned into $w = (w_i)_{i=1}^N$ such that each $w_i$ is an $a_i$-exit trajectory of some $q_i \in Q_M$. By optimality, the exit cost of $w$ is $\sum_{i=1}^{N-1} [c_{a_i}^{q_i}+\gamma_M(q_i,a_i)]+c_{a_N}^{q_N}$ (since each $w_i$ must be an optimal $a_i$-exit trajectory). Thus, $c_a^M = \sum_{i=1}^{N-1} [c_{a_i}^{q_i}+\gamma_M(q_i,a_i)]+c_{a_N}^{q_N}$. Consider now trajectory $z = (q_i,a_i)_{i=1}^N$ in $\hat{M}$. By construction, $z$ has cumulative cost $\sum_{i=1}^{N-1} [c_{a_i}^{q_i}+\gamma_M(q_i,a_i)]+c_{a_N}^{q_N} = c_a^M$ and hence $\tilde{c}_a^M \leq c_a^M$. To prove equality, pick any trajectory $z = (q_i,a_i)_{i=1}^N$ from $s_M$ to $E_a$ in $\hat{M}$ with cumulative cost $\tilde{c}$. Note that $\tilde{c} = \sum_{i=1}^{N-1} [c_{a_i}^{q_i}+\gamma_M(q_i,a_i)]+c_{a_N}^{q_N}$. This cumulative cost coincides with the $a$-exit cost of $w = (w_i)_{i=1}^N$ where $w_i$ is an $a_i$-exit trajectory of $q_i \in Q_M$. Since $w$ is an $a$-exit trajectory of $M$, we have that $c_a^M \leq \tilde{c}$. Since $z$ was arbitrary, $c_a^M \leq \tilde{c}_a^M$, and hence, $\tilde{c}_a^M = c_a^M$. This completes the proof.

\end{proof}

\begin{proof}[Proof of Proposition \ref{th:compute_optimal_exits_time}]
%The proof is identical to the one in \cite{stefansson2023ecc}, we include it here for completeness.
%\elis{This proof, I just copy-pasted.}

We construct $\hat{M}$ from $M$ by adding the states $\{E_a\}_{a \in \Sigma}$ in time $O(|\Sigma |)$, constructing $\hat{\delta}$ by going through all the values of the function in time $O(b_s | \Sigma |)$, and constructing  $\hat{\gamma}$ analogously. The total time complexity for line \ref{alg:compute_optimal_exits:line11} in Algorithm \ref{alg:compute_optimal_exits} is therefore $O(b_s | \Sigma |)$. Furthermore, note that the maximum number of states of $\hat{M}$ is $b_s+|\Sigma|$ and the maximum number of transition arcs of $\hat{M}$ is $b_s |\Sigma|$ (considering $\hat{M}$ as a graph). Searching with Dijkstra's algorithm \cite{DijkstraFibonacci}, line \ref{alg:compute_optimal_exits:line12} in Algorithm \ref{alg:compute_optimal_exits}, has therefore complexity $O(E+V \log V) = O(b_s |\Sigma|+(b_s+|\Sigma|) \log(b_s+|\Sigma|))$ (where $E$ is the number of arcs and $V$ is the number of nodes in the graph used in Dijkstra's algorithm). Finally, line \ref{alg:compute_optimal_exits:line3}--\ref{alg:compute_optimal_exits:line8} takes time $O(b_s |\Sigma|)$ excluding the time it takes to compute $\mathrm{Optimal\_exit}(M_q)$ since that time is already accounted for (when considering the MM $M_q$). We conclude that the total time spent on one MM $M$ is $O(b_s |\Sigma|+(b_s+|\Sigma|) \log(b_s+|\Sigma|))$. This is done for all MMs in $Z$, hence, the time complexity for Algorithm~\ref{alg:compute_optimal_exits} is
\begin{align}\label{eq:time_complexity_1}
O(|X| \cdot [b_s |\Sigma|+(b_s+|\Sigma|) \log(b_s+|\Sigma|)]).
\end{align}
This completes the proof. % Was 3-10 before in lines.

\end{proof}

\subsection{Optimal Planner: Additional Details}
In this section, we provide additional details concerning Section \ref{optimal_planner}, given in chronological order with respect to Section \ref{optimal_planner}, including proofs to all statements.
\iffalse
The outline is as follows:
\begin{enumerate}
\item We start by presenting 
\item
\end{enumerate}
\fi

\subsubsection{Reduce: Algorithm}
We start by providing the algorithm for computing the reduced HiMM $\bar{Z}$, mentioned in Remark \ref{remark_reduce}. This algorithm is given by Algorithm \ref{alg:reduce}, where the subroutine $\mathrm{Compute\_paths}$, given by Algorithm \ref{alg:compute_alpha_and_beta}, computes the sequences $U_1,\dots,U_n$ and $D_1,\dots,D_m$ (called $U_{path}$ and $D_{path}$ respectively) and $\alpha$ and $\beta$. Here, $\alpha$ is analogous to $\beta$ but for $\bar{U}_i$, that is, $\bar{U}_\alpha$ is the last $\bar{U}_i$ before the $\bar{U}_i$ and $\bar{D}_i$ sequences becomes identical. Algorithm \ref{alg:reduce} and \ref{alg:compute_alpha_and_beta} are important when proving the time complexity of the Online~Planner.

\begin{algorithm}[t]
\caption{Reduce}\label{alg:reduce}
\begin{algorithmic}[1]
\Require $(Z,s_{\mathrm{init}},s_{\mathrm{goal}})$ and $(c_a^M,z_a^M)_{a \in \Sigma, M \in X})$
\Ensure $(\bar{Z},U_{path},D_{path},\alpha,\beta)$
%\State Reduce($Z,s_{init},s_{goal}$):
\State $(U_{path}, D_{path}, \alpha, \beta) \gets$ Compute\_paths($Z,s_\mathrm{init},s_\mathrm{goal}$) \label{alg:reduce:line1}
\State Let $\bar{X}$ be an empty set \Comment{To be constructed} \label{alg:reduce:line2}
\State Let $\bar{T}$ be an empty tree \Comment{To be constructed} \label{alg:reduce:line3}
\State Construct $\bar{U}_1$ from $U_1$ and add to $\bar{X}$ \label{alg:reduce:line4}
\State Add node $\bar{U}_1$ to $\bar{T}$ with arcs out as $U_1$ in $T$ but no destination nodes \label{alg:reduce:line5}
\State Save $q$ where $(U_2 \xrightarrow{q} U_1) \in T$\label{alg:reduce:line6}
\For {$i = 2,\dots,\mathrm{length}(U_{path})-1$ \label{alg:reduce:line7}}
\State Construct $\bar{U}_i$ from $U_i$ and add to $\bar{X}$ \label{alg:reduce:line8}
\State Add node $\bar{U}_i$ to $\bar{T}$ with arcs out as $U_i$ in $T$ but no destination nodes \label{alg:reduce:line9}
\State Add destination node $\bar{U}_{i-1}$ to $\bar{U}_{i} \xrightarrow{q} \emptyset$ so that $(\bar{U}_{i} \xrightarrow{q}\bar{U}_{i-1}) \in \bar{T}$ \label{alg:reduce:line10}
\If{$i < \mathrm{length}(U_{path})-1$\label{alg:reduce:line11}}
\State Save $q$ where $U_{i+1} \xrightarrow{q} U_i \in T$ \label{alg:reduce:line12}
\EndIf
\EndFor
\If {$\beta = 0$\label{alg:reduce:line15}} \Comment{Then nothing more to do}
\State {return} $(\bar{Z} = (\bar{X},\bar{T}), U_{path})$ \label{alg:reduce:line16}
\EndIf 
\State Construct $\bar{D}_1$ from $D_1$ and add to $\bar{X}$ \label{alg:reduce:line18}
\State Add node $\bar{D}_1$ to $\bar{T}$ with arcs out as $D_1$ in $T$ but no destination nodes
\State Save $q$ where $(D_2 \xrightarrow{q} D_1) \in T$
\For {$i = 2,\dots,\beta$}
\State Construct $\bar{D}_i$ from $D_i$ and add to $\bar{X}$
\State Add node $\bar{D}_i$ to $\bar{T}$ with arcs out as $D_i$ in $T$ but no destination nodes
\State Add destination node $\bar{D}_{i-1}$ to $\bar{D}_{i} \xrightarrow{q} \emptyset$ so that $(\bar{D}_{i} \xrightarrow{q}\bar{D}_{i-1}) \in \bar{T}$
\State Save $q$ where $(D_{i+1} \xrightarrow{q} D_i) \in T$
\EndFor
\State Add destination node $\bar{D}_{\beta}$ to $\bar{U}_{\alpha+1} \xrightarrow{q} \emptyset$ so that $(\bar{U}_{\alpha+1} \xrightarrow{q} \bar{D}_{\beta}) \in \bar{T}$ \label{alg:reduce:line27}
\State {return} $(\bar{Z} = (\bar{X},\bar{T}),U_{path})$
\end{algorithmic}
\end{algorithm}

\begin{algorithm}
\caption{Compute\_paths}\label{alg:compute_alpha_and_beta}
\begin{algorithmic}[1]
\Require $(Z,s_\mathrm{init},s_\mathrm{goal})$
\Ensure $(U_\mathrm{path}, D_\mathrm{path}, \beta)$
%\State Compute\_paths($Z,s_{init},s_{goal}$):
\State $U_\mathrm{path} \gets []$ \Comment{Compute $U_\mathrm{path}$}
\State $U_0 \gets s_\mathrm{init}$
\State $U_\mathrm{current} \gets U_0$ 
\State Append $U_0$ to $U_{path}$
\While {$U_\mathrm{current}.\mathrm{parent} \neq \emptyset$}
\State Append $U_\mathrm{current}.\mathrm{parent}$ to $U_\mathrm{path}$
\State $U_\mathrm{current} \gets U_\mathrm{current}.\mathrm{parent}$
\EndWhile
%\State Compute $D_{path}$ as $U_{path}$ but with $D_0 = s_{goal}$. \Comment{Compute the path of $D$s}
\State $D_\mathrm{path} \gets []$ \Comment{Compute $D_\mathrm{path}$}
\State $D_0 \gets s_\mathrm{goal}$
\State $D_\mathrm{current} \gets D_0$
\State Append $D_0$ to $D_\mathrm{path}$
\While {$D_\mathrm{current}.\mathrm{parent} \neq \emptyset$}
\State Append $D_\mathrm{current}.\mathrm{parent}$ to $D_\mathrm{path}$
\State $D_\mathrm{current} \gets D_\mathrm{current}.\mathrm{parent}$
\EndWhile
\State $(i,j) \gets (\mathrm{length}(U_\mathrm{path}),\mathrm{length}(D_\mathrm{path}))$ \Comment{Compute $\alpha$ and $\beta$}
\While {$U_\mathrm{path}[i] = D_\mathrm{path}[j]$} % Was D_current here before but changed to path.
\State $(i,j) \gets (i-1,j-1)$
\EndWhile
\State $(\alpha, \beta) \gets (i-1, j-1)$
\State {return} $(U_\mathrm{path}, D_\mathrm{path}, \alpha, \beta)$
\end{algorithmic}
\end{algorithm}

\subsubsection{Reduced Trajectory and Optimal Expansion}

We continue with the complete definitions of a reduced trajectory $z_R$ and an optimal expansion $z_E$ outlined in Section \ref{reduce_step}. To this end, consider an HiMM $Z$ and its reduced HiMM $\bar{Z}$ with respect to some initial state $s_{\mathrm{init}}$ and goal state $s_{\mathrm{goal}}$. The reduced trajectory $z_R$ of a trajectory $z$ of $Z$ is then given by Algorithm \ref{alg:reduction}. It is easy to check that $z_R$ is a trajectory of~$\bar{Z}$ (except one special case when $z_R = \emptyset$):
\begin{proposition}\label{prop:reduced_is_traj}
The reduced trajectory $z_R$ given by Algorithm \ref{alg:reduction} is a trajectory of $\bar{Z}$, except in one special case. The special case is when $z$ is always contained in one subtree $H$ that gets replaced by a state $p$ in $\bar{Z}$ and $z$ never exits it. In this case, $z_R = \emptyset$.
\end{proposition}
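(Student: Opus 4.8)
The plan is to introduce a \emph{representative map} $\rho\colon S_Z \to S_{\bar Z}$, rephrase the output of Algorithm~\ref{alg:reduction} in terms of $\rho$, and then reduce the whole statement to a single transition lemma. Recall that $\bar Z$ keeps exactly the MMs in $\M=\{U_1,\dots,U_n\}\cup\{D_1,\dots,D_m\}$, and that for each MM $M\in\M$ and each state $q$ of $M$ with $(M\xrightarrow{q}N)\in T$ but $N\notin\M$, the whole subtree of $T$ rooted at $N$ --- a \emph{removed subtree}, denoted $H_q$ --- is collapsed to the single leaf $q$ of $\bar M$. Since the $Q_M$ are pairwise disjoint and no $\M$-MM lies inside such a subtree, the removed subtrees are pairwise disjoint. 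For $s\in S_Z$, define $\rho(s)$ by climbing the unique chain of MMs of $T$ containing $s$ until the deepest MM $M\in\M$ on it is reached: $\rho(s)=s$ if $s\in Q_M$, and $\rho(s)=q$ if instead $(M\xrightarrow{q}N)\in T$ with $N\notin\M$ and $s$ lies in $H_q$. Thus $\rho$ is the identity off the removed subtrees, maps all of $H_q$ to $q$, and its image is exactly $S_{\bar Z}$. From its code, Algorithm~\ref{alg:reduction} outputs, for a trajectory $z=(q_i,a_i)_{i=1}^{N}$ of $Z$, the subsequence of pairs $(\rho(q_i),a_i)$ obtained by discarding precisely the indices $i$ for which $q_i$ and $\psi(q_i,a_i)$ lie in one common removed subtree (the transitions ``internal'' to a removed subtree).

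The whole statement then rests on the transition lemma: \emph{if $q\in S_Z$, $a\in\Sigma$, $v:=\psi(q,a)\neq\emptyset$, and $q$ and $v$ do not lie in one common removed subtree, then $\bar\psi(\rho(q),a)=\rho(v)$} --- in particular $\bar\psi(\rho(q),a)\neq\emptyset$. I would prove it by induction following the recursive definitions of $\psi$ (Definition~\ref{HiMM_def}) and $\bar\psi$ (Definition~\ref{reduced_himm_def}), after first establishing a start-state sub-lemma, $\mathrm{start}_{\bar Z}(M)=\rho(\mathrm{start}_Z(M))$ for every $M\in\bar X$, by induction on the height of $M$ in $\bar T$: the descent along start states in $\bar Z$ either mirrors that in $Z$, or halts at a leaf of $\bar M$ whose removed subtree in $Z$ is exactly where $\mathrm{start}_Z(M)$ lands. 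For the lemma itself: if $q$ lies inside a removed subtree $H_p$ with parent MM $M\in\M$, then $v\notin H_p$ forces every local transition along the chain from $q$'s MM up through the root of $H_p$ to be undefined, so $\psi(q,a)=\psi(p,a)$ is actually evaluated at the state $p$ of $M$; this reduces to the case $\rho(q)=q$. In that case one runs the $\psi$-recursion on $Z$ and the $\bar\psi$-recursion on $\bar Z$ from $q$ in lockstep: they agree while they remain in MMs of $\M$ (these carry the same transition functions in $Z$ and $\bar Z$, and $\bar\M$ is closed under parents in $\bar T$ just as $\M$ is under parents in $T$, so the upward moves match), and can differ only at the final downward step, where $\psi$ descends by start states into a freshly entered child MM $Y$: if $Y\in\M$ the descent recurs in $\bar Z$ and the sub-lemma identifies the landing point as $\rho(\mathrm{start}_Z(Y))$; if $Y\notin\M$ there is no descent in $\bar Z$ and one lands at the leaf $q'$ of $\bar M$ with $(M\xrightarrow{q'}Y)\in T$, which is $\rho(v)$ because $v=\mathrm{start}_Z(Y)\in H_{q'}$.

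Granting the lemma, $z_R$ is a trajectory of $\bar Z$: along any maximal block of discarded indices $i+1,\dots,i'-1$ each consecutive pair of states lies in a common removed subtree, so $\rho(q_{i+1})=\cdots=\rho(q_{i'})$; hence for two consecutive \emph{kept} pairs $(\rho(q_i),a_i),(\rho(q_{i'}),a_{i'})$ of $z_R$ the lemma gives $\bar\psi(\rho(q_i),a_i)=\rho(\psi(q_i,a_i))=\rho(q_{i+1})=\rho(q_{i'})$, which is exactly the trajectory condition for $\bar Z$. Finally, $z_R=\emptyset$ iff every index is discarded, i.e.\ every $q_i$ and every $\psi(q_i,a_i)$ lie in a common removed subtree; since the removed subtrees are pairwise disjoint and $z$ is a trajectory, all the $q_i$ then belong to one fixed removed subtree $H$ which $z$ never exits --- precisely the stated special case --- and conversely that case obviously gives $z_R=\emptyset$. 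I expect the main obstacle to be the bookkeeping in the downward and boundary steps of the lemma: staying synchronized between $Z$ and $\bar Z$ across the several MMs that may separate $q$ from the MM where input $a$ first becomes supported, and across the junction where $\bar D_\beta$ has been re-grafted onto $\bar U_{\alpha+1}$; this is all handled by the closure of $\M$/$\bar\M$ under parents together with the start-state sub-lemma, but calls for careful case analysis rather than a short argument.
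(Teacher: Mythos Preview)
Your proof is correct and follows essentially the same approach as the paper's, which also partitions $z$ into maximal blocks (one per visit to a removed subtree or to a $\bar Z$-state), reads off $z_R=(p_i,b_i)_i$ as the block representatives with their exit inputs, and then argues $\bar\psi(p_i,b_i)=p_{i+1}$ via the observation that the $\bar\psi$-recursion mirrors the $\psi$-recursion on the retained MMs and ``terminates earlier at the $\bar Z$-state containing $\psi(p_i,b_i)$''. Your representative map $\rho$, transition lemma, and start-state sub-lemma make explicit and modular what the paper compresses into that one-sentence assertion, so the organization is cleaner but the underlying idea is the same.
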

\begin{proof}
%\elis{This proof is new. I did a minor correction due to special cases I missed before, which boils down to Equation \ref{eq:z_R_equation} below.}

%For a more compact terminology, we say that a state $q$ of $Z$ is contained in an auxiliary state $p$ if $q$ is one of the states nested within $p$ (where we may have $q=p$). Moreover, a trajectory $z$ is contained in an auxiliary state $p$ if all states of $z$ are contained in $p$. 

Let $z = (q_i,a_i)_{i=1}^N$ be a given trajectory of $Z$. Note first that $z$ can be partitioned into $z = z_1 \dots z_L$ where each subtrajectory $z_i$ is either a state-input pair $(p_i,b_i)$ of $\bar{Z}$, or $z_i$ is contained in a subtree $H_i$ in $Z$ that gets replaced by a state $p_i$ in $\bar{Z}$. We may assume that $p_i \neq p_{i+1}$ for all $i$ (otherwise, just concatenate $z_i$ and $z_{i+1}$). We claim that 

\begin{equation}\label{eq:z_R_equation}
z_R = 
\begin{cases}
(p_i,b_i)_{i=1}^L & \textrm{if $z_L$ exits $p_L$}\\
(p_i,b_i)_{i=1}^{L-1} & \textrm{if $z_L$ does not exit $p_L$ and $L>1$} \\
\emptyset & \textrm{otherwise,}
\end{cases}
\end{equation}
where $b_i$ is the last input of $z_i$.\footnote{Here, we say that $z_L$ exits $p_L$ if either $z_L = (p_L,b_L)$ or $z_L$ exits the subtree $H_i$ that gets reduced to $p_L$ in $\bar{Z}$.} To see the claim, consider $z_1$. If $z_1 = (p_1,b_1)$ is a state-input pair of $\bar{Z}$, then by Algorithm \ref{alg:reduction} (line \ref{alg:reduction:line4}--\ref{alg:reduction:line6}), the first state-input pair of $z_R$ is $(p_1,b_1)$. Otherwise, $z_1$ is contained in a subtree $H_1$. If $z_1$ never exits $H_1$ (hence, it never exits $p_1$), then Algorithm \ref{alg:reduction} returns $z_R = []$ (see line \ref{alg:reduction:line7}--\ref{alg:reduction:line13}) which corresponds to $z_R = \emptyset$. Moreover, there can not be any further subtrajectories $z_i$ since $z_1$ would then have to exit $H_1$, hence, $L=1$. This special case corresponds to the third row of Equation \eqref{eq:z_R_equation}, and is exactly the special case given by Proposition \ref{prop:reduced_is_traj}. If this is not the case, then $z_1$ exits $H_1$, hence, by Algorithm \ref{alg:reduction} (line \ref{alg:reduction:line7}--\ref{alg:reduction:line13}), $z_1$ will be replaced by $(p_1,b_1)$. Since Algorithm \ref{alg:reduction} iterates sequentially, we get that $z_R = (p_i,b_i)_{i=1}^L$ if $z_L$ exits $p_L$, or $z_R = (p_i,b_i)_{i=1}^{L-1}$ if $z_L$ does not exit $p_L$ (where the last $(p_L,b_L)$ is not included in the latter case due to a similar reasoning as in the special~case).  % Was 7-15 before in lines.

It remains to show that $z_R$ is a trajectory of $\bar{Z}$, i.e., that $\bar{\psi}(p_i,b_i) = p_{i+1}$ for $i < L$. To see it, consider the last state-input pair $(e_i,b_i)$ of $z_i$ and the first state $s_{i+1}$ of $z_{i+1}$. Since $z$ is a trajectory, we have that $\psi(e_i,b_i) = s_{i+1}$. Moreover, $\psi(e_i,b_i) = \psi(p_i,b_i)$ since $z_i$ exits $p_i$, so $\psi(p_i,b_i) = s_{i+1}$. Note now that the recursive evaluation of $\bar{\psi}(p_i,b_i)$ is identical to $\psi(p_i,b_i)$ except that $\bar{\psi}(p_i,b_i)$ terminates earlier at the $\bar{Z}$-state containing $\psi(p_i,b_i)$, in this case $p_{i+1}$ since $\psi(p_i,b_i) = s_{i+1}$. Hence, $\bar{\psi}(p_i,b_i) = p_{i+1}$ This completes the~proof.

%Looking at the path of augmented states we recursively pass through by calling the function $\psi(e_i,a_i)$ (ending up at $s_{i+1}$), we note that both $p_i$ and $p_{i+1}$ are on this path by construction. Moreover, by definition, $\bar{\psi}(p_i,a_i)$ will follow the same path except that it starts at $p_i$, and thus ends at $p_{i+1}$. That is, $\bar{\psi}(p_i,a_i) = p_{i+1}$. 

%We show that $z_R$ is a trajectory of $\bar{Z}$ by induction. To this end, assume $z = v w$ 
\end{proof}

\begin{algorithm}[t]
\caption{Reduced\_trajectory}\label{alg:reduction}
\begin{algorithmic}[1]
\Require HiMM$ \; Z$, reduced HiMM $\bar{Z}$ with respect to $s_{\mathrm{init}}$ and $s_{\mathrm{goal}}$, and trajectory $z = (q_i,a_i)_{i=1}^N$ of $Z$.
\Ensure A trajectory $z_R$ of $\bar{Z}$, or $z_R = []$ (corresponding to $z_R = \emptyset$) 
\State $z_R \gets []$
\State $i \gets 1$
\While{$i \leq N$} \Comment{Iterate over state-action pairs of $z$}
\If{$q_i$ is a state of $\bar{Z}$ \label{alg:reduction:line4}}
\State Append $(q_i,a_i)$ to $z_R$
\State $i \gets i+1$ \label{alg:reduction:line6}
\Else \label{alg:reduction:line7}
\State $p \gets$ state in $\bar{Z}$ replacing subtree $H$ containing $q_i$ in $Z$.
\While{$i \leq N$ and $\psi(q_{i},a_i)$ is a state of $H$}
\State $i \gets i+1$
\EndWhile
\If{$\psi(q_{i-1},a_{i-1})$ is not a state of $H$}
\State Append $(p,a_{i-1})$ to $z_R$ \Comment{Exit $H$ means exit $p$ in $\bar{Z}$} \label{alg:reduction:line13}
%\Else \Comment{Then $i>N$ and iteration over $z$ is done} % Just realised that I don't need this one.
\EndIf
%\If{$i = N$} \Comment{Iteration over $z$ done}
%\State return $z_R$
%\Else \Comment{$z$ exits $H$ with input $x_{i-1}$}
%\State Add $(p,x_{i-1})$ \Comment{Exit $H$ means exit $p$ in $\bar{Z}$}
%\EndIf
\EndIf
\EndWhile
\State return $z_R$
\end{algorithmic}
\end{algorithm}

We now continue with the optimal expansion. The optimal expansion $z_E$ of a trajectory $z$ of $\bar{Z}$ is given by Algorithm \ref{alg:optimal_expansion}. Similarly, $z_E$ is a trajectory of $Z$:
\begin{proposition}\label{prop:optimal_expansion_is_trajectory}
The optimal expansion $z_E$ given by Algorithm \ref{alg:optimal_expansion} is a trajectory of $Z$. Moreover, an optimal expansion of a single augmented state-input pair $(q,a)$ is an optimal $a$-exit trajectory of the MM $N_q$ corresponding to $q$.
\end{proposition}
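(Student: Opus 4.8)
The plan is to read off both assertions from the construction of Algorithm~\ref{alg:optimal_expansion}, which replaces every state-input pair $(q_i,a_i)$ of the given $\bar Z$-trajectory $z=(q_i,a_i)_{i=1}^N$ by the optimal $a_i$-exit trajectory $w_i$ of the subtree $H_i$ of $Z$ rooted at the MM $N_{q_i}$ corresponding to $q_i$ --- and simply keeps $w_i=(q_i,a_i)$ (with $c_{a_i}^{q_i}=0$) when $q_i$ has no refinement, so that $H_i$ is the one-state machine $\{q_i\}$ --- and outputs the concatenation $z_E = w_1 w_2 \cdots w_N$.

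I would prove the second assertion first, as it is little more than unwinding the definition. The expansion of a single pair $(q,a)$ is the trajectory $z_a^{N_q}$ produced by the Optimal Exit Computer, recursively expanded at those of its states that themselves correspond to nested MMs. To make the recursion precise I would induct over $T$ from its leaves upward, reusing the cost-partitioning argument of the proof of Proposition~\ref{th:compute_optimal_exits_correct}: at a leaf the claim is the trivial statement that $(q,a)$ is the optimal $a$-exit trajectory of $\{q\}$ with cost $0$; at an internal MM $N_q$, if each block of $z_a^{N_q}$ is (by the inductive hypothesis) an optimal exit trajectory of its subtree, then the concatenation is an $a$-exit trajectory of $N_q$ whose total cost equals the cumulative cost of $z_a^{N_q}$ in $\hat{N_q}$, which is $c_a^{N_q}$ by Proposition~\ref{th:compute_optimal_exits_correct}; optimality of this cost makes it an optimal $a$-exit trajectory.

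For the first assertion I would verify directly that $z_E$ satisfies the trajectory condition for every pair of consecutive state-input pairs, distinguishing (a) consecutive pairs inside a single block $w_i$ and (b) the junction where $w_i$ meets $w_{i+1}$, $i<N$. Case (a) is immediate, since an $a_i$-exit trajectory of $H_i$ is by Definition~\ref{def:optimal_exit_costs} a trajectory of $Z$. For case (b), let $(e_i,a_i)$ be the last pair of $w_i$, so $e_i$ is nested in $H_i$ and $\psi(e_i,a_i)\notin S_{H_i}$, and let the first state of $w_{i+1}$ be $\mathrm{start}(N_{q_{i+1}})$ (read as $q_{i+1}$ when it has no refinement); the goal is $\psi(e_i,a_i)=\mathrm{start}(N_{q_{i+1}})$. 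Here I would argue exactly as in the proof of Proposition~\ref{prop:reduced_is_traj}: $e_i$ has no $a_i$-transition at any level of the chain of nested MMs inside $H_i$, so the recursive evaluation of $\psi(e_i,a_i)$ climbs to the parent MM of $N_{q_i}$ and from there proceeds identically to the evaluation of $\bar\psi(q_i,a_i)$ in $\bar Z$ --- the tree structure and transition functions above the collapsed subtrees being unchanged in $\bar Z$, only the costs differing --- except that $\psi$ then keeps descending along start states into $H_{i+1}$ instead of stopping at the $\bar T$-leaf $q_{i+1}$. Since $z$ is a trajectory of $\bar Z$ we have $\bar\psi(q_i,a_i)=q_{i+1}$, hence $\psi(e_i,a_i)=\mathrm{start}(N_{q_{i+1}})$, which is the first state of $w_{i+1}$; concatenating over $i$ shows $z_E$ is a trajectory of $Z$.

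The main obstacle is case (b): making the side-by-side comparison of the recursive evaluations of $\psi$ in $Z$ and $\bar\psi$ in $\bar Z$ fully rigorous --- i.e.\ formalising that exiting the subtree $H_i$ with $a_i$ lands precisely where $a_i$ sends $q_i$ in the parent machine, one layer up the shared hierarchy. The remaining pieces --- within-block trajectory-ness, the reduction of the second assertion to Proposition~\ref{th:compute_optimal_exits_correct}, and the degenerate cases (states without refinement, the first and last blocks) --- are routine bookkeeping.
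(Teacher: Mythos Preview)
Your proposal is correct and follows essentially the same approach as the paper: prove the second assertion first by induction over $T$ using Proposition~\ref{th:compute_optimal_exits_correct} to identify the expanded cost with $c_a^{N_q}$, then deduce the first by checking the trajectory condition within each block (immediate) and at block junctions via the observation that $\psi(e_i,a_i)=\psi(q_i,a_i)$ and that $\bar\psi(q_i,a_i)=q_{i+1}$ forces $\psi(q_i,a_i)=\mathrm{start}(N_{q_{i+1}})$. The paper establishes optimality in the second assertion by an explicit contradiction argument rather than your direct appeal to $c_a^{N_q}$ being the optimal exit cost, but this is a cosmetic difference.
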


\begin{algorithm}[t]
\caption{Optimal\_expansion}\label{alg:optimal_expansion}
\begin{algorithmic}[1]
\Require HiMM$ \; Z =(X,T)$, $\{z^{M}_a\}^{M \in X}_{a \in \Sigma}$ from Optimal Exit Computer, and trajectory $z = (q_i,a_i)_{i=1}^N$ of $\bar{Z}$.
\Ensure A trajectory $z_E$ of $Z$
\For{$(q_i,a_i)$ in $z$}
\State $z_i \gets \mathrm{Traj\_expand}(q_i,a_i)$
\EndFor
\State return $z_E \gets (z_1,\dots,z_N)$
\State %return $z \gets \mathrm{Trajectory\_expansion}(q,x)$
\State $\mathrm{Traj\_expand}(q,a)$: \Comment{Expand recursively} % $X \gets x$
\If {$q \in S_Z$}
\State return $(q,a)$. \label{alg:optimal_expansion:line8}
\Else
%\State $(q_1,x_1),\dots,(q_m,x_m) \gets z^{N_q}_x$ 
\For {$(q_i,a_i)$ in $z^{N_q}_a$} \Comment{$z^{N_q}_a$ from Optimal Exit Computer}
\State $z_i  \gets  \mathrm{Traj\_expand}(q_i,a_i)$
\EndFor
\State return $(z_1, \dots, z_m)$ \Comment{$m$ is the length of $z^{N_q}_a$}
\EndIf 
\end{algorithmic}
\end{algorithm}

We need the second fact when proving the planning equivalence.

\begin{proof}%\elis{This proof looks new to me too (I did not write if it was new when I wrote it so did it now aferwards).}
To have a more compact terminology in the proof, we say that a trajectory $z$ is an optimal $a$-exit trajectory of $q \in Q_M$ (in some MM $M$ in $Z$) if either $z =(q,a)$ if $q$ is a state of $Z$, or $z$ is an optimal $a$-exit trajectory of $N_q$, where $N_q$ is the MM corresponding to $q$.

We start by the proving the second statement. Let an augmented state-input pair $(q,a)$ be given.
We show that $(q,a)$ is an optimal $a$-exit trajectory of $q$, using induction. In the base case, $q$ is a state of $Z$. In this case, $(q,a)$ is trivially an optimal $a$-exit trajectory of $q$. In the induction step, assume $q$ is not a state. Then, by Algorithm \ref{alg:optimal_expansion}, we have
\begin{equation*}
(q,a)_E = ( (q_i,a_i)_E )_{i=1}^m
\end{equation*}
where each $q_i$ is a state of the MM $N_q$ corresponding to $q$. By induction, each $(q_i,a_i)_E$ is an optimal $a_i$-exit trajectory of $q_i$. Also, the trajectory $(q_i,a_i)_{i=1}^m$ is an optimal trajectory to $E_a$ in $\hat{N}_q$. With these two facts, it is easy to see that $(q,a)_E$ is an $a$-exit trajectory of $q$: $(q_i,a_i)_E$ first exits $q_i$, then goes to $\delta_{N_q}(q_i,a_i) = q_{i+1}$ and finally follows the start states from $q_{i+1}$ arriving at the first state of $(q_{i+1},a_{i+1})_E$; and, the last $(q_i,a_i)_E$ exits $N_q$ since $(q_i,a_i)$ goes to $E_a$ in $\hat{N}_q$. It is also easy to deduce that $(q,a)_E$ is an optimal $a$-exit trajectory. To see it, assume that $z$ is an $a$-exit trajectory with strictly less exit cost $\tilde{c}$ than the exit cost $c$ of $(q_i,a_i)_E$. Partition $z$ into $z = z_1 \dots z_k$, where each $z_i$ is an $b_i$-exit trajectory of some state $s_i$ of $N_q$. We may assume that each $z_i$ is an optimal $b_i$-exit trajectory of $s_i$ (otherwise, change to an optimal one with even less resulting exit cost for $z$). Consider the trajectory $w = (s_i,b_i)_{i=1}^k$ in $\hat{M}$. By construction, $w$ reaches $E_a$ in $\hat{N}_q$ with the same cumulated cost $\tilde{c}$ as the exit cost of $z$. Furthermore, $(q_i,a_i)_{i=1}^m$ reaches $E_a$ with same cumulated cost $c$ as the exit cost of $(q,a)_E$. Thus, $(q_i,a_i)_{i=1}^m$ is not an optimal trajectory to $E_a$ in $\hat{N}_q$, which is a contradiction. Therefore, $(q,a)_E$ is an optimal $a$-exit trajectory.

We now prove the first statement. Let $z = (q_i,a_i)_{i=1}^L$ be a trajectory of $\bar{Z}$. We want to show that $z_E = ((q_i,a_i)_E)_{i=1}^L$ is a trajectory of $Z$. We already know that each $(q_i,a_i)_E$ is a trajectory of $Z$ so we only need to show that $\psi(e_i,b_i) = s_{i+1}$ for $i < L$, where $(e_i,b_i)$ is the last state-input pair of $(q_i,a_i)_E$ and $s_{i+1}$ is the first state of $(q_{i+1},a_{i+1})_E$. To show it, we know that $(q_i,a_i)_E$ is an $a_i$-exit trajectory of $q_i$, so $b_i = a_i$ and $\psi(e_i,b_i) = \psi(q_i,a_i)$. Moreover, since $\bar{\psi}(q_i,a_i) = q_{i+1}$, we have that $\psi(q_i,a_i) = \mathrm{start}(q_{i+1})$ where $\mathrm{start}(q_{i+1}) = q_{i+1}$ if $q_{i+1}$ is a state of $Z$, otherwise $\mathrm{start}(q_{i+1}) = \mathrm{start}(N_{q_{i+1}})$ where $N_{q_{i+1}}$ is the MM corresponding to $q_{i+1}$. In both cases $\psi(q_i,a_i) = \mathrm{start}(q_{i+1}) = s_{i+1}$ since $(q_{i+1},a_{i+1})_E$ is an $a_{i+1}$-exit trajectory of $q_{i+1}$. We conclude that $\psi(e_i,b_i) = s_{i+1}$ for $i < L$. This proves the first statement and thus completes the~proof.
%Let $z = (p_i,a_i)_{i=1}^L$ be a trajectory in $\bar{Z}$. 
\end{proof}
\subsubsection{Planning Equivalence} In this section, we prove the planning equivalence given by Theorem~\ref{th:planning_equivalence}. For a more compact terminology, we say that a trajectory $z = (q_i,a_i)_{i=1}^N$ is a \emph{feasible trajectory} to $({Z},s_{\mathrm{init}},s_{\mathrm{goal}})$ if it is a trajectory of $Z$ such that $q_1 = s_{\mathrm{init}}$ and $\psi(q_N,a_N) = s_{\mathrm{goal}}$. A feasible trajectory to $(\bar{Z},s_{\mathrm{init}},s_{\mathrm{goal}})$ is analogously defined.

We break up the proof of Theorem~\ref{th:planning_equivalence} by first proving two lemmas:

\begin{lemma}\label{lemma:planning_equivalence_1}
Let $z$ be an optimal trajectory to $({Z},s_{\mathrm{init}},s_{\mathrm{goal}})$. Then $z_R = (p_i,b_i)_{i=1}^L$ is a feasible trajectory to $(\bar{Z},s_{\mathrm{init}},s_{\mathrm{goal}})$ such that $\bar{C}(z_R) = C(z)$. 
\end{lemma}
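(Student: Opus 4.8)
The plan is to feed an optimal trajectory $z = (q_i,a_i)_{i=1}^N$ to $(Z,s_{\mathrm{init}},s_{\mathrm{goal}})$ through the partition $z = z_1 \cdots z_L$ into maximal pieces used in the proof of Proposition~\ref{prop:reduced_is_traj}: each $z_i$ is either a single state-input pair whose state is a state of $\bar Z$, or a maximal run of $z$ whose states are nested in one removed subtree $H_i$ that $\bar Z$ collapses to a state $p_i$ (with $b_i$ the last input of $z_i$). First I would rule out the degenerate case $z_R = \emptyset$ of Proposition~\ref{prop:reduced_is_traj}: since $q_1 = s_{\mathrm{init}}$ is a state of $\bar Z$, it is not nested in any removed subtree, so the first piece is of the first kind with $p_1 = s_{\mathrm{init}}$; and since $\psi(q_N,a_N) = s_{\mathrm{goal}}$ also lies outside every removed subtree, the last piece $z_L$ exits $p_L$, so by \eqref{eq:z_R_equation} we get $z_R = (p_i,b_i)_{i=1}^L$, a trajectory of $\bar Z$ with $p_1 = s_{\mathrm{init}}$.

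For feasibility it remains to check $\bar\psi(p_L,b_L) = s_{\mathrm{goal}}$. I would reuse the endpoint computation from the proof of Proposition~\ref{prop:reduced_is_traj}: if $e_L$ is the last state of $z_L$, then the exiting transition gives $\psi(e_L,b_L) = \psi(q_N,a_N) = s_{\mathrm{goal}}$, and once this transition has left $H_L$ the value of $\psi$ is governed entirely by the (preserved) ancestor chain of $H_L$, so $\psi(p_L,b_L) = s_{\mathrm{goal}}$ too; since $\bar\psi(p_L,b_L)$ runs the identical recursion but stops at the $\bar Z$-state containing the result, which is $s_{\mathrm{goal}}$ itself, we obtain $\bar\psi(p_L,b_L) = s_{\mathrm{goal}}$.

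The core is the cost identity $\bar C(z_R) = C(z)$, which I would prove piece by piece. For a piece of the first kind, a single pair $(q_j,a_j)$ with $q_j \in S_Z$, the modified cost \eqref{eq:bar_gamma} adds only the exit term $c^{q_j}_{a_j} = 0$, so $\bar\chi(q_j,a_j) = \chi(q_j,a_j)$. For a piece $z_i$ of the second kind: it is entered at $\mathrm{start}(R_i)$, where $R_i$ is the root MM of $H_i$ (because the preceding piece transitions into $H_i$), and it leaves $H_i$ with input $b_i$, so $z_i$ is exactly a $b_i$-exit trajectory of $R_i$. Optimality of $z$ forces $z_i$ to be an \emph{optimal} such trajectory: otherwise, replacing $z_i$ by a cheaper $b_i$-exit trajectory of $R_i$ — which starts at the same state $\mathrm{start}(R_i)$ and, exiting with the same input $b_i$, lands at the same state $\psi(p_i,b_i)$, hence keeps the concatenation a trajectory of $Z$ from $s_{\mathrm{init}}$ to $s_{\mathrm{goal}}$ — would strictly decrease $C(z)$. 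Therefore the internal cost of $z_i$ equals the optimal exit cost $c^{R_i}_{b_i} = c^{p_i}_{b_i}$, and the cost of the exiting transition of $z_i$ is precisely the amount that \eqref{eq:bar_gamma} stacks on top of $c^{p_i}_{b_i}$ when the single pair $(p_i,b_i)$ is traversed in $\bar Z$. So $z_i$ and its image $(p_i,b_i)$ contribute equally, and summing over all pieces yields $\bar C(z_R) = C(z)$.

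The one step that calls for care rather than being routine is this last matching: unwinding the recursive definitions of $\chi$ (Definition~\ref{HiMM_def}) and $\bar\chi$ (Definition~\ref{reduced_himm_def}) to confirm that the cost of the transition out of $H_i$ and the increment $\bar\gamma$ adds at $p_i$ are the same number even when the exiting input is supported several layers above $H_i$, and checking that the ``$0$'' entries in \eqref{eq:bar_gamma} at states corresponding to retained MMs make $\bar\chi$ agree with $\chi$ along the retained ancestor chain. The rest — excluding the empty case, the endpoint identity, and the swap argument — is short once the partition of Proposition~\ref{prop:reduced_is_traj} is available.
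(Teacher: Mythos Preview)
Your proposal is correct and follows essentially the same route as the paper: the same partition $z = z_1\cdots z_L$ from Proposition~\ref{prop:reduced_is_traj}, the same endpoint argument for $\bar\psi(p_L,b_L)=s_{\mathrm{goal}}$, and the same optimality (swap) argument to identify the cost of each piece $z_i$ with $\bar\chi(p_i,b_i)$. You are in fact slightly more explicit than the paper in separating the two kinds of pieces and in flagging the recursion-unwinding needed to match $\bar\chi(p_i,b_i)$ with $c^{p_i}_{b_i}+\chi(p_i,b_i)$ when the exit is realised several layers above $H_i$; the paper compresses all of this into the single line ``By construction, this cost coincides with $\bar{\chi}(p_i,b_i)$.''
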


\begin{proof}%\elis{Newly written proof, so not copy-paste, even though the lemma is similar to the one from ECC}
From the proof of Proposition \ref{prop:reduced_is_traj}, we get that $z$ can be partitioned as $z = z_1 \dots z_L$ where each subtrajectory $z_i$ is either a state-input pair $(p_i,b_i)$ of $\bar{Z}$ or is contained in a subtree $H_i$ in $Z$ that gets replaced by a state-input pair $p_i$ in $\bar{Z}$. By Equation \eqref{eq:z_R_equation}, $z_R = (p_i,b_i)_{i=1}^L$. To see this, note that the last state-action pair $(e_L,b_L)$ of $z_L$ is such that $\psi(e_L,b_L) = s_{\mathrm{goal}}$. Hence, $z_L$ exits $p_L$, so $z_R$ is given by the first row of Equation \eqref{eq:z_R_equation}. Moreover, $p_1 = s_{\mathrm{init}}$ since the first state of $z$ is $s_{\mathrm{init}}$ which is also a state of $\bar{Z}$. 

We now show that $\bar{\psi}(p_L,b_L) = s_{\mathrm{goal}}$ similarly to the last part of the proof of Proposition \ref{prop:reduced_is_traj}. Let $(e_L,b_L)$ be the last state-action pair of $z_L$. Since $z$ reaches $s_{\mathrm{goal}}$, we have $\psi(e_L,b_L) = s_{\mathrm{goal}}$. Moreover, $\psi(e_L,b_L) = \psi(p_L,b_L)$ since $z_L$ exits $p_L$, so $\psi(p_L,b_L) = s_{\mathrm{goal}}$. Note that the recursive evaluation of $\bar{\psi}(p_L,b_L)$ is identical to $\psi(p_L,b_L)$ except that $\bar{\psi}(p_L,b_L)$ terminates at the $\bar{Z}$-state containing $\psi(p_L,b_L)$, in this case, $s_{\mathrm{goal}}$. Hence, $\bar{\psi}(p_L,b_L) = s_{\mathrm{goal}}$. 

It remains to show that $\bar{C}(z_R) = C(z)$. By optimality, each $z_i$ must exit $p_i$ with input $b_i$ optimally, that is $C(z_i) = c^{p_i}_{b_i}+\chi(p_i,b_i)$. By construction, this cost coincides with $\bar{\chi}(p_i,b_i)$. Hence,
\begin{equation*} 
\bar{C}(z_R) = \sum_{i=1}^L \bar{\chi}(p_i,b_i) = \sum_{i=1}^L C(z_i) = C(z).
\end{equation*}
The proof is complete.
\end{proof}

%We call a feasible trajectory of $(\bar{Z},s_{\mathrm{init}},s_{\mathrm{goal}})$ a trajectory $z = (p_i,a_i)_{i=1}^L$ of $\bar{Z}$ such that $p_1 = s_{\mathrm{init}}$ and ${\bar{\psi}(p_L,a_L) = s_{\mathrm{goal}}}$.

\begin{lemma}\label{lemma:planning_equivalence_2}
Let $z$ be a feasible trajectory of $(\bar{Z},s_{\mathrm{init}},s_{\mathrm{goal}})$. Then $z_E = (q_i,a_i)_{i=1}^N$ is a feasible trajectory of $({Z},s_{\mathrm{init}},s_{\mathrm{goal}})$ such that $C(z_E) = \bar{C}(z)$.
\end{lemma}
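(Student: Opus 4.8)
The plan is to mirror the proof of Lemma~\ref{lemma:planning_equivalence_1}, with Proposition~\ref{prop:optimal_expansion_is_trajectory} as the workhorse and reusing the cost identity already established there. Write $z = (p_i,b_i)_{i=1}^L$, so $p_1 = s_{\mathrm{init}}$, $\bar\psi(p_L,b_L) = s_{\mathrm{goal}}$, and $z_E = \big((p_i,b_i)_E\big)_{i=1}^L$. By the second part of Proposition~\ref{prop:optimal_expansion_is_trajectory}, each $(p_i,b_i)_E$ is an optimal $b_i$-exit trajectory of the MM corresponding to $p_i$ (or simply the pair $(p_i,b_i)$ when $p_i$ is already a state of $Z$), and by the first part $z_E$ is a trajectory of $Z$. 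So only three things remain: that $z_E$ starts at $s_{\mathrm{init}}$, that it ends with $\psi(\cdot,\cdot) = s_{\mathrm{goal}}$, and that $C(z_E) = \bar C(z)$.

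First I would check the two endpoints. Since $s_{\mathrm{init}}$ is a state of $Z$, the base case of Algorithm~\ref{alg:optimal_expansion} gives $(p_1,b_1)_E = (s_{\mathrm{init}},b_1)$, so $z_E$ starts at $s_{\mathrm{init}}$. For the other end, let $(e_L,b_L)$ be the last state-input pair of $(p_L,b_L)_E$. Because $(p_L,b_L)_E$ is a $b_L$-exit trajectory of the subtree attached at $p_L$, the recursion defining $\psi$ climbs out of that subtree at $e_L$ exactly as it would at $p_L$, so $\psi(e_L,b_L) = \psi(p_L,b_L)$. The recursion for $\psi(p_L,b_L)$ and the one for $\bar\psi(p_L,b_L)$ agree on their upward part (the $\bar U_i$/$\bar D_i$ share their transition functions with the $U_i$/$D_i$), and $\bar\psi$ merely stops earlier, at the $\bar Z$-state enclosing $\psi(p_L,b_L)$ --- the same observation used in the proof of Proposition~\ref{prop:reduced_is_traj}. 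Since $s_{\mathrm{goal}}$ is a genuine state of both $Z$ and $\bar Z$ and $\bar\psi(p_L,b_L) = s_{\mathrm{goal}}$, this forces $\psi(p_L,b_L) = s_{\mathrm{goal}}$, hence $\psi(e_L,b_L) = s_{\mathrm{goal}}$; thus $z_E$ is feasible for $(Z,s_{\mathrm{init}},s_{\mathrm{goal}})$.

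For the cost, concatenation additivity gives $C(z_E) = \sum_{i=1}^L C\big((p_i,b_i)_E\big)$. Since $(p_i,b_i)_E$ is an \emph{optimal} $b_i$-exit trajectory of $p_i$, its cumulative cost splits as the exit cost $c_{b_i}^{p_i}$ plus the cost of its final transition, and the latter equals $\chi(p_i,b_i)$ because the hierarchical cost function charges the transition at the level where it actually occurs, which is the same level for the last state of $(p_i,b_i)_E$ as for $p_i$ itself. Therefore $C\big((p_i,b_i)_E\big) = c_{b_i}^{p_i} + \chi(p_i,b_i)$, and this is exactly $\bar\chi(p_i,b_i)$ by construction of the reduced cost function~\eqref{eq:bar_gamma} and Definition~\ref{reduced_himm_def} --- the very identity already invoked in the proof of Lemma~\ref{lemma:planning_equivalence_1}. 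Summing over $i$ yields $C(z_E) = \sum_{i=1}^L \bar\chi(p_i,b_i) = \bar C(z)$ (with the convention that both sides are $\infty$ when some $c_{b_i}^{p_i}$ is infinite, i.e. when $\bar\chi(p_i,b_i)$ is not finite).

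The main obstacle is the cost identity $C\big((p_i,b_i)_E\big) = \bar\chi(p_i,b_i)$: one has to track the final transition cost of the optimal exit trajectory through the recursive definitions of $\chi$ and $\bar\chi$, including the case where the relevant $b_i$-transition sits several layers above $p_i$ in the hierarchy, so that the evaluation of $\bar\chi$ must be unwound through the preserved MMs $\bar U_j$/$\bar D_j$ (where $\bar\gamma$ either coincides with $\gamma$ or contributes a harmless $0$). This bookkeeping is routine but is where the argument has content; the feasibility part is a direct transcription of the corresponding steps in the proofs of Proposition~\ref{prop:reduced_is_traj} and Lemma~\ref{lemma:planning_equivalence_1}.
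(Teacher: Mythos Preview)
Your proposal is correct and follows essentially the same approach as the paper: invoke Proposition~\ref{prop:optimal_expansion_is_trajectory} to get that $z_E$ is a trajectory and that each $(p_i,b_i)_E$ is an optimal $b_i$-exit trajectory, check the two endpoints, and then use the cost identity $\bar{\chi}(p_i,b_i) = c^{p_i}_{b_i} + \chi(p_i,b_i)$ to conclude $C(z_E)=\bar C(z)$. You spell out the endpoint and cost bookkeeping in more detail than the paper does, but the structure and key ingredients are the same.
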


\begin{proof}
Let $z = (p_i,b_i)_{i=1}^L$ be given. Since $p_1 = s_{\mathrm{init}}$, we have by Algorithm \ref{alg:optimal_expansion} that $q_1 = s_{\mathrm{init}}$. Moreover $\bar{\psi}(p_L,b_L) = s_{\mathrm{goal}}$ and since the expansion $(p_L,b_L)_E$ of the last state-input pair $(p_L,b_L)$ of $z$ is by Proposition \ref{prop:optimal_expansion_is_trajectory} an optimal $b_L$-exit trajectory (of the MM $N_{p_L}$ corresponding to $p_L$), we have that $\psi(q_N,a_N) = s_{\mathrm{goal}}$. Finally, by construction, $\bar{\chi}(p_i,b_i) = c^{p_i}_{b_i}+\chi(p_i,b_i)$ so
\begin{equation*}
\bar{C}(z) = \sum_{i=1}^L \bar{\chi}(p_i,b_i) = \sum_{i=1}^L c^{p_i}_{b_i}+\chi(p_i,b_i),
\end{equation*}
and since $(p_i,b_i)_E$ is an optimal $b_i$-exit trajectory (of the MM $N_{p_i}$ corresponding to $p_i$) we have that its cumulated cost is $c^{p_i}_{b_i}+\chi(p_i,b_i)$, hence $C(z_E) = C( \, ( \, (p_i,b_i)_E \, )_{i=1}^L\,) = \sum_{i=1}^L c^{p_i}_{b_i}+\chi(p_i,b_i) = \bar{C}(z)$.
\end{proof}

\begin{proof}[Proof of Theorem~\ref{th:planning_equivalence}]
We now provide the proof of Theorem \ref{th:planning_equivalence}.

\textbf{(i).} Let $z$ be an optimal trajectory to $({Z},s_{\mathrm{init}},s_{\mathrm{goal}})$. Consider the reduced trajectory $z_R$. By Lemma \ref{lemma:planning_equivalence_1}, $z_R$ is a feasible trajectory to $(\bar{Z},s_{\mathrm{init}},s_{\mathrm{goal}})$ with $\bar{C}(z_R) = C(z)$. Assume by contradiction that there exists a feasible trajectory $w$ of $(\bar{Z},s_{\mathrm{init}},s_{\mathrm{goal}})$ such that $\bar{C}(w) < \bar{C}(z_R)$. By Lemma \ref{lemma:planning_equivalence_2}, we have that $w_E$ is a feasible trajectory to $({Z},s_{\mathrm{init}},s_{\mathrm{goal}})$ with cumulative cost
\begin{equation*}
{C}(w_E) = \bar{C}(w) < \bar{C}(z_R) = C(z),
\end{equation*}
which is a contradiction since $z$ is an optimal trajectory to $({Z},s_{\mathrm{init}},s_{\mathrm{goal}})$. This proves (i).

\textbf{(ii).} Let $z$ be an optimal trajectory to $(\bar{Z},s_{\mathrm{init}},s_{\mathrm{goal}})$. By Lemma \ref{lemma:planning_equivalence_2}, $z_E$ is a feasible trajectory to $({Z},s_{\mathrm{init}},s_{\mathrm{goal}})$ such that $C(z_E) = \bar{C}(z)$. Assume by contradiction that there exists a feasible trajectory $w$ such that $C(w) < C(z_E)$. We may assume $w$ is an optimal trajectory to $({Z},s_{\mathrm{init}},s_{\mathrm{goal}})$.\footnote{To see it, note that if $w$ is not an optimal trajectory, then there exits a feasible trajectory $w'$ such that $C(w') < C(w)$. Set $w'$ as our new $w$ and repeated this process until there is no $w'$ such that $C(w') < C(w)$. Note that this procedure must indeed eventually terminate since there is a finite number of states and transitions. When it terminates, we have a feasible trajectory $w$ such that no feasible trajectory $w'$ exists with $C(w') < C(w)$, i.e., $w$ is an optimal trajectory.} By Lemma \ref{lemma:planning_equivalence_1}, $w_R$ is a feasible trajectory to $(\bar{Z},s_{\mathrm{init}},s_{\mathrm{goal}})$ such that $\bar{C}(w_R) = {C}(w)$. Then
\begin{equation*}
\bar{C}(w_R) = \bar{C}(w) < C(z_E) = \bar{C}(z),
\end{equation*}
which is a contradiction since $z$ is an optimal trajectory to $(\bar{Z},s_{\mathrm{init}},s_{\mathrm{goal}})$. This proves (ii) and thus completes the proof.
\end{proof}

\subsubsection{Proof of Lemma \ref{lemma:pass_through}}
We now provide the proof of Lemma~\ref{lemma:pass_through} in Section~\ref{optimal_planner}:

\begin{proof}[Proof of Lemma \ref{lemma:pass_through}]
Let $z = (q_i,a_i)_{i=1}^N$ be an optimal trajectory to $(\bar{Z},s_\mathrm{init},s_\mathrm{goal})$. Note that $s_\mathrm{goal}$ is one of the nested states within the subtree with root $\bar{D}_\beta$. Thus, the only way to reach $s_\mathrm{goal}$ from a state outside this subtree is via a transition going through $\bar{D}_\beta$ ending up at $\mathrm{start}(\bar{D}_\beta)$. Since $s_\mathrm{init}$ is by construction outside this subtree, one transition of $z$ must go through $\bar{D}_\beta$ and thus end up at $\mathrm{start}(\bar{D}_\beta)$ (note that this transition might be the last transition). That is, some state-input pair $(q_k,a_k)$ of $z$ is such that $\bar{\psi}(q_k,a_k) = \mathrm{start}(\bar{D}_\beta)$, completing the proof.
\end{proof}

\subsubsection{Solve: Algorithm and Correctness}\label{solve_algorithm_and_correctness}

We continue by proving Proposition \ref{proposition:solve_correct}, assuring the correctness of Algorithm \ref{alg:solve}. To this end, we present a more formal and detailed version of Algorithm \ref{alg:solve}, given by Algorithm \ref{alg:solve_detailed_version}, calling subroutines given by Algorithm \ref{alg:add_nodes} to \ref{alg:add_arcs}. Before proving Proposition \ref{proposition:solve_correct}, it is beneficial to give an overview of this detailed version, complementing the overview given in Section \ref{solve}. More precisely, Part 1 of Algorithm \ref{alg:solve_detailed_version} first constructs the nodes of the graph $G$ by calling $\mathrm{Add\_nodes}$ (line \ref{alg:solve_detailed_version:line2}). Then, it precomputes transitions saving which node one goes to in $G$ if one enters or exits an $\bar{U}_i$, done by $\mathrm{Compute\_transitions}$ and saved as a look-up table $\tilde{\psi}$ (line \ref{alg:solve_detailed_version:line3}).\footnote{This step is mainly done to keep the time complexity low since precomputing these transitions has a lower time complexity than computing them when needed.} After this, the algorithm search locally in each $\bar{U}_i$ given by the for-loop (lines \ref{alg:solve_detailed_version:line4}--\ref{alg:solve_detailed_version:line8}). In particular, the algorithm uses the subroutine $\mathrm{Get\_search\_states}$ (line \ref{alg:solve_detailed_version:line5}) to find states in $\bar{U}_i$ to search from, given by $I$ in Algorithm \ref{alg:get_search_states} (lines \ref{alg:get_search_states:line1}--\ref{alg:get_search_states:line10}), and states to search to, given by $D$ in Algorithm \ref{alg:get_search_states} (lines \ref{alg:get_search_states:line14}--\ref{alg:get_search_states:line22}). Here, $\mathrm{exit}_a$ in $D$ is an added state the machine $\bar{U}_i$ transits to whenever an input $a$ is unsupported, just to book-keep $a$-exits. Note that $I$ is indeed the set of all states one can start from in $\bar{U}_i$ (i.e., $s_\mathrm{init}$ or $s_{\bar{U}_i}$ or any state in $\bar{U}_i$ we get to by exiting $\bar{U}_{i-1}$) while $D$ captures all ways one can escape $\bar{U}_i$ (i.e., entering $\bar{U}_{i-1}$ or $\bar{D}_\beta$, or exiting $\bar{U}_i$). After the subroutine $\mathrm{Get\_search\_states}$, the algorithm search in $\bar{U}_i$ from one $s \in I$ to each $d \in D$ using Dijkstra's algorithm, with output being the cost $c_d$ to go to $d$ and the corresponding trajectory $z_d$ (line \ref{alg:solve_detailed_version:line7}). An arc from $s$ to the corresponding state of $d$ is then added to $G$ (line \ref{alg:solve_detailed_version:line8}), where the corresponding state of $d$ is found using the look-up table $\tilde{\psi}$. For example, if $d = \bar{U}_{i-1}$, then the corresponding state is $\mathrm{start}(\bar{U}_{i-1})$, so the arc will go from $s$ to $\mathrm{start}(\bar{U}_{i-1})$ in $G$. Furthermore, the arc is labelled with the trajectory $z_d$ and its corresponding cost in $\bar{Z}$.\footnote{This cost is just $c_d$ if $d \in \{ \bar{U}_{i-1}, \bar{D}_{\beta} \}$. If $d = \mathrm{exit}_a$, then we need to also add the transition cost for the transition happening in the machine $\bar{U}_k$ that $z_d$ exits to, given by $\gamma_{\bar{U}_{k}}(\bar{U}_{k-1},a)$, see line \ref{alg:add_arcs:line7} in Algorithm \ref{alg:add_arcs}. Finally, we only add an arc to $G$ if $z_d$ is feasible (i.e., $\bar{Z}$ does not stop), checked by the if-statements in Algorithm \ref{alg:add_arcs} (lines \ref{alg:add_arcs:line1}, \ref{alg:add_arcs:line5} and \ref{alg:add_arcs:line10}).} This procedure is done for all $s \in I$, given by the for-loop in lines \ref{alg:solve_detailed_version:line6}--\ref{alg:solve_detailed_version:line8}. After each $\bar{U}_i$ has been searched in locally (lines \ref{alg:solve_detailed_version:line4}--\ref{alg:solve_detailed_version:line8}), $G$ is constructed and a bidirectional Dijkstra is carried out to find a trajectory from $s_\mathrm{init}$ to $\bar{D}_\beta$, which by construction is a trajectory from $s_\mathrm{init}$ to $\mathrm{start}(\bar{D}_\beta)$. Part 2 of Algorithm \ref{alg:solve_detailed_version} is identical to Algorithm \ref{alg:solve}, already explained in detail in Section~\ref{solve}.
% Was: given by $I$ in Algorithm \ref{alg:get_search_states} (line 2-14)

With this overview, we are in a position to prove Proposition~\ref{proposition:solve_correct}.

\begin{algorithm}[t]
%\small
\caption{Solve: Detailed version}\label{alg:solve_detailed_version} % OBS: Combined algorithms.
\begin{algorithmic}[1]
\Require $(\bar{Z},s_\mathrm{init},s_\mathrm{goal})$, $U_{path}$, $D_{path}$, $\beta$ and $(c_a^M,z_a^M)_{a \in \Sigma, M \in X}$.
\Ensure An optimal trajectory $z$ to $(\bar{Z},s_\mathrm{init},s_\mathrm{goal})$.
\State \textbf{Part 1: Solve $(\bar{Z},s_\mathrm{init},\mathrm{start}(\bar{D}_\beta))$}
%\State $\bar{Z} \gets \mathrm{Reduce}(Z,s_{\mathrm{init}},s_{\mathrm{goal}},(c_x^M,z_x^M)_{x \in \Sigma, M \in X})$
%\State $\tilde{\psi} \gets$ Compute\_exit\_transitions($\bar{Z}, U_{path}$)
%\State Set $G$ to an empty graph \Comment{To be constructed}
\State $G \gets \mathrm{Add\_nodes}(\bar{Z},s_\mathrm{init},s_\mathrm{goal})$ \Comment{Add nodes to $G$} \label{alg:solve_detailed_version:line2}
\State $\tilde{\psi} \gets \mathrm{Compute\_transitions}(\bar{Z}, U_{path})$ \Comment{Precompute transitions in $G$} \label{alg:solve_detailed_version:line3}
\For {$i =1,\dots,n$} \Comment{Consider $\bar{U}_{i}$ \label{alg:solve_detailed_version:line4}}
\State $(I,D) \gets \mathrm{Get\_search\_states}(\bar{Z},s_\mathrm{init},i)$ \Comment{States to search from and to given by $I$ and $D$ respectively} \label{alg:solve_detailed_version:line5}
%\State $(I,D) \gets$ Get\_search\_states($\bar{Z},s_{init},i$) \Comment{States to search from and too}
\For {$s \in I$ \label{alg:solve_detailed_version:line6}}
\State $(c_d, z_d)_{d \in D} \gets \mathrm{Dijkstra}(s, D, \bar{U}_{i})$ \label{alg:solve_detailed_version:line7}
%\State Get optimal plans from $s$ to all $d \in D$ in $\bar{U}_i$ using Dijkstra
%\State $(c_d, z_d)_{d \in D} \gets \mathrm{Dijkstra}(s, D, \bar{U}_{i})$ %\Comment{Search from $s$ to all $d \in D$ in $\bar{U}_i$}
\State $G \gets \mathrm{Add\_arcs}(G,s,\bar{Z},i, (c_d, z_d)_{d \in D},\tilde{\psi})$ \Comment{Add arcs to $G$} \label{alg:solve_detailed_version:line8}
\EndFor
\EndFor
\State $z_1 \gets \mathrm{Bidirectional\_Dijkstra}(s_\mathrm{init}, \bar{D}_\beta, G)$\label{alg:solve_detailed_version:line11}
%\State $(c,z) \gets \mathrm{Dijkstra}(s_{init}, \bar{B}, G)$ \Comment{Get optimal plan $z$ to $(\bar{Z},s_{init},\mathrm{start}(\bar{B}))$} 
%\State \textbf{return} $z, D_{path},\beta$
\State \textbf{Part 2: Solve $(\bar{Z},\mathrm{start}(\bar{D}_\beta),s_\mathrm{goal})$}
\State Let $z_2$ be an empty trajectory \Comment{To be constructed}\label{alg:solve_detailed_version:line13}
\State Let $\bar{D}_{i}$ be the MM with state $\mathrm{start}(\bar{D}_\beta)$ \label{alg:solve_detailed_version:line14}
\State $c \gets \mathrm{start}(\bar{D}_\beta)$\label{alg:solve_detailed_version:line15} %\Comment{Current MM}
\State $g \gets$ state in $\bar{D}_{i}$ corresponding to $\bar{D}_{i-1}$ \label{alg:solve_detailed_version:line16} %\Comment{Go-to state}
\While {$c \neq s_\mathrm{goal}$\label{alg:solve_detailed_version:line17}}
\State $z \gets \mathrm{Bidirectional\_Dijkstra}(c,g,\bar{D}_{i})$ \label{alg:solve_detailed_version:line18} %$z \gets \mathrm{Dijkstra}(c, g, M)$ \Comment{Get optimal plan from $c$ to $g$ in $M$} % Was s here before but changed to c
\State $z_2 \gets z_2 z$ \Comment{Concatenate $z_2$ with $z$}\label{alg:solve_detailed_version:line19}
\State $c \gets \mathrm{start}(\bar{D}_{i-1})$  \label{alg:solve_detailed_version:line20}
\If {$c \neq s_\mathrm{goal}$\label{alg:solve_detailed_version:line21}}
\State Let $\bar{D}_{i}$ be the MM with state $c$
\State $g \gets$ corresponding state in $M$ to $\bar{D}_{i-1}$  \label{alg:solve_detailed_version:line23} %$g \gets \bar{D}_{k_i-1}$ 
\EndIf
\EndWhile \Comment{End of Part 2}
\State return $z \gets z_1 z_2$ \Comment{Concatenate $z_1$ with $z_2$}  \label{alg:solve_detailed_version:line26}
\end{algorithmic}
\end{algorithm}

\begin{algorithm}[h]
\caption{Add\_nodes}\label{alg:add_nodes}
\begin{algorithmic}[1]
\Require $\bar{Z},U_{path},D_{path},\beta$ and empty graph $G$. %$\bar{Z},s_\mathrm{init}$
\Ensure Graph $G$ with added nodes.
\State Add nodes $s_\mathrm{init}$, $s_{\bar{U}_1}$ and $\bar{D}_\beta$ to $G$. %\Comment{Add nodes to $G$}
\For {$i =2,\dots,n$\label{alg:add_nodes:line2}}
\If {$s_{\bar{U}_i}\neq \bar{U}_{i-1}$}
\State Add node $s_{\bar{U}_i}$ to $G$
\EndIf
\For {$a \in \Sigma$\label{alg:add_nodes:line6}} 
\If {$\delta_{\bar{U}_i}(\bar{U}_{i-1},a) \neq \emptyset$} % was gamma here before but changed to delta.
\State Add node $\delta_{\bar{U}_i}(\bar{U}_{i-1},a)$ to $G$
\EndIf
\EndFor
\EndFor
\State return $G$
\end{algorithmic}
\end{algorithm}

\begin{algorithm}[h] % I AM HERE, NEED TO ADD ENTER TRANSITIONS. SHOULD NOT AFFECT COMPLEXITY.
\caption{Compute\_transitions}\label{alg:compute_exit_transitions}
\begin{algorithmic}[1]
\Require $\bar{Z}$ and $U_{path}$ 
\Ensure Function $\tilde{\psi}$ prescribing: where one goes if exiting $\bar{U}_i$ with $a \in \Sigma$ (treating $\bar{B}$ as a state), saved as $\tilde{\psi}(\bar{U}_{i},a)$; or where one goes if entering $\bar{U}_i$ (treating $\bar{D}_\beta$ as a state), saved as $\tilde{\psi}(\bar{U}_{i})$.
\For {$a \in \Sigma$} \Comment{Compute exit transitions}
\State $\mathrm{exit}_a \gets \emptyset$
\EndFor
\For {$i = n-1,\dots,1$\label{alg:compute_exit_transitions:line4}}
\For {$a \in \Sigma$\label{alg:compute_exit_transitions:line5}}
\If {$\delta_{\bar{U}_{i+1}}(\bar{U}_{i},a) \neq \emptyset$}
\State $\mathrm{exit}_a \gets \delta_{\bar{U}_{i+1}}(\bar{U}_{i},a)$
\EndIf
\State $\tilde{\psi}(\bar{U}_{i},a) \gets \mathrm{exit}_a$
\EndFor
\EndFor
\State $\mathrm{go\_here} \gets s_\mathrm{init}$ \Comment{Compute enter transitions}
\State $\tilde{\psi}(\bar{U}_0) \gets s_\mathrm{init}$
\For {$i = 1,\dots,n-1$}
\If {$s_{\bar{U}_i} \neq \bar{U}_{i-1}$}
\State $\mathrm{go\_here} \gets s_{\bar{U}_i}$
\EndIf
\State $\tilde{\psi}(\bar{U}_i) \gets \mathrm{go\_here}$
\EndFor
\State {return} $\tilde{\psi}$
\end{algorithmic}
\end{algorithm}

\begin{algorithm}[h]
\caption{Get\_search\_states}\label{alg:get_search_states}
\begin{algorithmic}[1]
\Require $\bar{Z},s_\mathrm{init},i$
\Ensure States to search from $I$ and states to search too $D$ in $\bar{U}_i$
%\State Get\_search\_states($\bar{Z},s_\mathrm{init},i$):
\State $I = []$ \Comment{States to search from} \label{alg:get_search_states:line1}
\If {$s(\bar{U}_i) \neq \bar{U}_{i-1}$}
\State Append $s_{\bar{U}_i}$ to $I$.
\EndIf
\If {$i=1$}
\State Append $s_\mathrm{init}$ to $I$
\Else
\For {$a \in \Sigma$\label{alg:get_search_states:line8}} 
\If {$\delta_{\bar{U}_i}(\bar{U}_{i-1},a) \neq \emptyset$} % Was gamma here before but changed to delta. 
\State Append $\delta_{\bar{U}_i}(\bar{U}_{i-1},a)$ to $I$ \label{alg:get_search_states:line10}
\EndIf
\EndFor
\EndIf
\State $D = []$ \Comment{States to search to} \label{alg:get_search_states:line14}
\If {$i > 1$} 
\State Append $\bar{U}_{i-1}$ to $D$
\EndIf
\If {$i < n$} \Comment{Add exits as destinations}
%\For {$x \in \Sigma$}
%\If {$\bar{\psi}(\bar{U}_{i+1},x) \neq \emptyset$}
%\State Add $exit_x$ to $D$
%\EndIf
%\EndFor
\State Append $\{\mathrm{exit}_a\}_{a \in \Sigma}$ to $D$\label{alg:get_search_states:line19} 
\EndIf
\If {$i = \alpha+1$} % Was only alpha here before, changed to alpha+1
\State Append $\bar{D}_{\beta}$ to $D$ \label{alg:get_search_states:line22}
\EndIf
\State {return} $(I,D)$
\end{algorithmic}
\end{algorithm}

\begin{algorithm}[h]
\caption{Add\_arcs}\label{alg:add_arcs}
\begin{algorithmic}[1]
\Require $G,s,\bar{Z},i, (c_d, z_d)_{d \in D},\tilde{\psi}$
\Ensure Graph $G$ with added arcs.
\If {$i>1$ and $c_{\bar{U}_{i-1}} < \infty$\label{alg:add_arcs:line1}}

\State Add arc $s \rightarrow \tilde{\psi}(\bar{U}_{i-1})$ to $G$ labelled $(c_{\bar{U}_{i-1}}, z_{\bar{U}_{i-1}})$ % Need to change this later to account for that I already have the start computed.
\EndIf
\For {$\mathrm{exit}_a \in D$\label{alg:add_arcs:line4}}
\If {$\tilde{\psi}(\bar{U}_{i},a) \neq \emptyset$ and $c_{\mathrm{exit}_a} < \infty$\label{alg:add_arcs:line5}}% Was i+1 before but changed to i
\State Let $\bar{U}_{k}$ be the machine with state $\tilde{\psi}(\bar{U}_{i},a)$
\State Add arc $s \rightarrow \tilde{\psi}(\bar{U}_{i},a)$ to $G$ labelled $(c_{\mathrm{exit}_a},z_{\mathrm{exit}_a}+\gamma_{\bar{U}_{k}}(\bar{U}_{k-1},a))$ \label{alg:add_arcs:line7} % Was i+1 before but changed to i
\EndIf
\EndFor
\If {$i = \alpha+1$ and $c_{\bar{D}_{\beta}} < \infty$\label{alg:add_arcs:line10}} % was just alpha before but changed to alpha+1
\State Add arc $s \rightarrow \bar{D}_{\beta}$ to $G$ labelled $(c_{\bar{D}_{\beta}}, z_{\bar{D}_{\beta}})$
\EndIf
\State {return} $G$
\end{algorithmic}
\end{algorithm}

\begin{proof}[Proof of Proposition \ref{proposition:solve_correct}]
Consider any optimal trajectory $w$ to $(\bar{Z},s_\mathrm{init},s_\mathrm{goal})$. We may assume that $w$ does not contain any loops (due to optimality). Let $\bar{U}$ be the subtree given by $\{\bar{U}_1,\dots,\bar{U}_n\}$ (with all states as leaves), and $\bar{D}$ be the subtree given by $\{\bar{D}_1,\dots,\bar{D}_{\beta}\}$ (with all states as leaves). By Lemma \ref{lemma:pass_through}, we can divide $w$ into $w=w_1w_2$ where $w_1$ is a trajectory from $s_\mathrm{init}$ to $\bar{D}_\beta$ in $\bar{U}$ and $w_2$ is a trajectory from $\mathrm{start}(\bar{D}_\beta)$ to $s_\mathrm{goal}$ in $\bar{D}$. Consider the output trajectory $z = z_1z_2$ from Algorithm \ref{alg:solve}. We want to show that $\bar{C}(z_1) \leq \bar{C}(w_1)$ and $\bar{C}(z_2) \leq \bar{C}(w_2)$, since then $\bar{C}(z) \leq \bar{C}(w)$ and $z$ is an optimal trajectory to $(\bar{Z},s_\mathrm{init},s_\mathrm{goal})$. We prove these two inequalities separately. 

\textbf{Part 1: $\bar{C}(z_1) \leq \bar{C}(w_1)$.} We can further divide $w_1$ into $w_1 = w_1^1 \dots w_1^K$ where each $w_1^j$ is a trajectory in some $\bar{U}_i$, where we may assume consecutive $w_1^{j}$ and $w_1^{j+1}$ are in different $\bar{U}_i$. Consider one $w_1^k$ and assume it is in $\bar{U}_i$. Note that $w_1^k$ must start from some state $s$ in $I$ given by Algorithm \ref{alg:get_search_states}. Furthermore, $w_1^k$ must end by entering $\bar{U}_{i-1}$ or $\bar{D}_\beta$, or exiting $\bar{U}_i$. All these ways of ending are captured by all the $d \in D$ in Algorithm \ref{alg:get_search_states}, with optimal trajectories given by $(z_d)_{d \in D}$. Hence, there exists an optimal trajectory $\tilde{w}_1^k$, computed by the local searches of Algorithm \ref{alg:solve_detailed_version}, that goes from the start state of $w_1^k$ to the state it reaches. Due to optimality, $\bar{C}(\tilde{w}_1^k) \leq \bar{C}(w_1^k)$. Note that also $\tilde{w}_1^k$ corresponds to an arc in $G$ labelled by $\tilde{w}_1^k$ and its corresponding cost $\bar{C}(\tilde{w}_1^k)$. Therefore, the trajectory 
\begin{equation*}
\tilde{w}_1 = \tilde{w}_1^1 \dots \tilde{w}_1^K
\end{equation*}
corresponds to a directed path of $G$ from $s_\mathrm{init}$ to $\bar{D}_\beta$ with cost 
\begin{equation*}
\bar{C}(\tilde{w}_1) = \sum_{k=1}^K \bar{C}(\tilde{w}_1^k) \leq \sum_{j=1}^K \bar{C}(w_1^k) = \bar{C}({w}_1).
\end{equation*}
Since $z_1$ is by construction an optimal directed path in $G$ from $s_\mathrm{init}$ to $\bar{D}_\beta$, we have $\bar{C}(z_1)  \leq \bar{C}(\tilde{w}_1)$, and thus
\begin{equation*}
\bar{C}(z_1) \leq \bar{C}({w}_1).
\end{equation*}
This completes the first part of the proof.

\textbf{Part 2: $\bar{C}(z_2) \leq \bar{C}(w_2)$.} Let $c$, $g$ and $\bar{D}_i$ be as in lines \ref{alg:solve_detailed_version:line14}--\ref{alg:solve_detailed_version:line16} in Algorithm \ref{alg:solve_detailed_version}. Note that $w_2$ starts at $c$ and can only reach $s_\mathrm{goal}$ by going to $g$. Hence, $w_2$ must have a sub-trajectory $w_2^1$ going from $c$ to $g$ in $\bar{D}_i$. Since $z$ in line \ref{alg:solve_detailed_version:line18} is an optimal trajectory from $c$ to $g$ in $\bar{D}_i$, we have that $\bar{C}(z) \leq \bar{C}(w_2^1)$. Thus, the sub-trajectory $z$ of $z_2$ going from $c$ to $g$ in $\bar{D}_i$ has cost $\leq$ the cost of the sub-trajectory $w_2^1$ of $w_2$. Note also that these sub-trajectories reaches the same state further down in the hierarchy, the new $c$ (line \ref{alg:solve_detailed_version:line20}) with a corresponding new $g$ (line \ref{alg:solve_detailed_version:line23}). Repeating this argument (but now for the while-loop), we get that the next sub-trajectory of $z_2$ also has a cost $\leq$ the cost of the next sub-trajectory of $w_2$. By induction (over the number of times the while-loop is called until $c = s_\mathrm{goal}$), we get that each sub-trajectory of $z_2$ has cost less than or equal to the corresponding sub-trajectory of $w_2$, and thus $\bar{C}(z_2) \leq \bar{C}(w_2)$.

By Part 1 and Part 2 above, we conclude that $\bar{C}(z) \leq \bar{C}(w)$. Thus, $z$ (line~\ref{alg:solve_detailed_version:line26}) is an optimal trajectory to $(\bar{Z},s_\mathrm{init},s_\mathrm{goal})$.
\end{proof}

\subsubsection{Expand: Algorithm} We now provide the full algorithm of $\mathrm{Expand}$ (line \ref{alg:hierarchical_planning_new:line10} of Algorithm \ref{alg:hierarchical_planning_new}), given by Algorithm \ref{alg:expand}. The procedure is almost identical to the optimal expansion, given by Algorithm \ref{alg:optimal_expansion}. Indeed, there are only two differences. The first difference is that line \ref{alg:optimal_expansion:line8} in Algorithm \ref{alg:optimal_expansion} is changed to return only the input $a$. In this way, Algorithm \ref{alg:expand} outputs just the input sequence of the optimal expansion, i.e., an optimal plan $u$. The other difference is that we construct the optimal plan using lists (lines \ref{alg:expand:line1} and \ref{alg:expand:line12}) that we sequentially append to (lines \ref{alg:expand:line4} and \ref{alg:expand:line15}) until we have the full optimal plan (lines \ref{alg:expand:line6} and \ref{alg:expand:line17}). This latter difference is just to get a lower time complexity. Note also that this version of $\mathrm{Expand}$ returns the whole optimal plan $u$ at once. For a sequential execution of $u$, obtaining the next optimal input of $u$ in a sequential manner,  one simply change [return $a$] in line \ref{alg:expand:line10} to [execute $a$]. In this way, we will sequentially execute the optimal plan $u$, which could be beneficial for time-critical applications where the next optimal input should be applied as fast as possible. That the algorithm indeed executes the optimal plan in the correct order is just a consequence of the order of the expansion.

\subsubsection{Time Complexity} Finally, we prove the time complexity of the Online Planner given by Proposition \ref{efficient_time_complexity}.
%\begin{proof}[Proof of Proposition \ref{efficient_time_complexity}]
The proof is rather straightforward: We first compute the time complexities of all called algorithms of the Online Planner, then we combine these time complexities to infer the time complexity of the Online Planner. We start by proving the following lemma.
\begin{lemma}\label{lemma:time_complexity}
We have the following time complexities:
\begin{itemize}
\item Algorithm \ref{alg:reduce} ($\mathrm{Reduce}$) has time complexity $O( \mathrm{depth}(Z) b_s )$.
%\item Algorithm \ref{alg:reduce} ($\mathrm{Reduce}$) has time complexity $O( \mathrm{depth}(Z) b_s )$ (Alternative 1) or $O( \mathrm{depth}(Z) b_s |\Sigma| )$ (Alternative 2).
\item Algorithm \ref{alg:compute_alpha_and_beta} ($\mathrm{Compute\_paths}$) has time complexity~$O(\mathrm{depth}(Z))$.
\item Algorithm \ref{alg:add_nodes} ($\mathrm{Add\_nodes}$) has time complexity~$O(|\Sigma| \mathrm{depth}(Z))$.
\item Algorithm \ref{alg:compute_exit_transitions} ($\mathrm{Compute\_transitions}$) has time complexity~$O(|\Sigma| \mathrm{depth}(Z))$.
\item Algorithm \ref{alg:get_search_states} ($\mathrm{Get\_search\_states}$) has time complexity~$O(|\Sigma|)$.
\item Algorithm \ref{alg:add_arcs} ($\mathrm{Add\_arcs}$) has time complexity~$O(|\Sigma|)$.
\end{itemize}
\end{lemma}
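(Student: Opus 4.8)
The plan is to prove Lemma \ref{lemma:time_complexity} by a straightforward line-by-line accounting of each of the six algorithms: in each case I identify the dominant loops, bound the cost of a single iteration, and multiply. Throughout I will use the elementary fact that the lengths $n = \mathrm{length}(U_{path})$ and $m = \mathrm{length}(D_{path})$ of the two nested-machine chains, as well as the index $\beta$, are all at most $\mathrm{depth}(Z)$, since each chain is a root-to-leaf path in the tree $T$. I will also fix, up front, which primitive operations count as $O(1)$: following a parent pointer in $T$, reading an entry of a precomputed look-up table such as $\tilde\psi$, comparing two nodes, and appending to a list; and I will note that each MM $U_i$ of $Z$ has at most $b_s$ outgoing arcs in $T$ (one per state).

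First, for $\mathrm{Compute\_paths}$ (Algorithm \ref{alg:compute_alpha_and_beta}), the two while-loops that build $U_{path}$ and $D_{path}$ each take one parent-pointer step per layer, hence run in $O(\mathrm{depth}(Z))$; the final while-loop that walks the two paths downward to locate $\alpha$ and $\beta$ does $O(1)$ comparison work per layer, so it too is $O(\mathrm{depth}(Z))$, and the total is $O(\mathrm{depth}(Z))$. Next, $\mathrm{Reduce}$ (Algorithm \ref{alg:reduce}) first calls $\mathrm{Compute\_paths}$ at cost $O(\mathrm{depth}(Z))$, and then runs two for-loops (over the $\bar U_i$'s and over the $\bar D_i$'s), each of length at most $\mathrm{depth}(Z)$. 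A single iteration constructs the reduced machine $\bar U_i$ (resp.\ $\bar D_i$) by copying the $O(b_s)$ outgoing-arc structure of $U_i$, attaching at most one destination node, and saving one arc label; the reweighted cost function of \eqref{eq:bar_gamma} is represented implicitly through the already-computed optimal exit costs $c_a^q$ (so that Dijkstra can read any value in $O(1)$ when needed), so no per-input work is incurred here. Hence each iteration is $O(b_s)$, the loops cost $O(\mathrm{depth}(Z)\, b_s)$, and this dominates, giving the claimed bound $O(\mathrm{depth}(Z)\, b_s)$.

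For the remaining four subroutines of Solve the counting is similar. In $\mathrm{Add\_nodes}$ (Algorithm \ref{alg:add_nodes}) the outer for-loop has at most $\mathrm{depth}(Z)$ iterations, each containing an inner loop over $a \in \Sigma$ with $O(1)$ body (testing whether $\delta_{\bar U_i}(\bar U_{i-1},a)$ is defined and, if so, inserting a node), so the total is $O(|\Sigma|\,\mathrm{depth}(Z))$. The same structure — an outer loop of length $O(\mathrm{depth}(Z))$ around an inner loop over $\Sigma$ with $O(1)$ body, executed twice (once for the exit transitions, once for the enter transitions) — yields $O(|\Sigma|\,\mathrm{depth}(Z))$ for $\mathrm{Compute\_transitions}$ (Algorithm \ref{alg:compute_exit_transitions}). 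Finally, $\mathrm{Get\_search\_states}$ (Algorithm \ref{alg:get_search_states}) builds the lists $I$ and $D$ using only loops over $a \in \Sigma$ with $O(1)$ bodies, hence runs in $O(|\Sigma|)$; and $\mathrm{Add\_arcs}$ (Algorithm \ref{alg:add_arcs}) iterates over the at most $|\Sigma|$ elements $\mathrm{exit}_a$ of $D$, performing a constant number of $\tilde\psi$-look-ups and cost additions per element (all other arc insertions being $O(1)$), hence also runs in $O(|\Sigma|)$.

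The argument is essentially bookkeeping; the only point requiring a little care — and the closest thing to an obstacle — is being explicit about which operations genuinely take constant time, in particular that Solve never recomputes a hierarchical transition on demand but reads it off the precomputed table $\tilde\psi$, and that constructing a reduced machine $\bar U_i$ touches only the $O(b_s)$ outgoing arcs of $U_i$ rather than its full transition/cost structure. Once these conventions are pinned down, no genuine difficulty remains.
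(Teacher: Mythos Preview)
Your proof is correct and follows essentially the same line-by-line loop-accounting as the paper's own argument, with the same key observation that $\bar\gamma$ is represented implicitly so that constructing each $\bar U_i$ costs only $O(b_s)$ rather than $O(b_s|\Sigma|)$. If anything you are slightly more explicit than the paper about which primitive operations are assumed $O(1)$, which is a good habit.
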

\begin{proof}[Proof of Lemma \ref{lemma:time_complexity}]
We prove the time complexities separately:
\begin{itemize}
\item
It is easy to see that Algorithm~\ref{alg:compute_alpha_and_beta} has time complexity $O(\mathrm{depth}(Z))$ since the while-loops in it can at most iterate $O(\mathrm{depth}(Z))$ times. 
\item 
We now show the time complexity of Algorithm \ref{alg:reduce}, by walking through the algorithm. Line \ref{alg:reduce:line1} has time complexity $O(\mathrm{depth}(Z))$ (due to the time complexity of Algorithm~\ref{alg:compute_alpha_and_beta}), while lines \ref{alg:reduce:line2}--\ref{alg:reduce:line3} has time complexity $O(1)$. Concerning line \ref{alg:reduce:line4}, note first that $\bar{U}_1$ is identical to $U_1$ except from the modified cost function $\bar{\gamma}$ given by Equation \ref{eq:bar_gamma}. Moreover, given access to the original cost function ${\gamma}$, we can specify the function $\bar{\gamma}$ in constant time $O(1)$ and also obtain any specific value $\bar{\gamma}(q,a)$ in constant time $O(1)$ by just evaluating Equation \eqref{eq:bar_gamma}.\footnote{For the curious reader, we here provide the details that Equation \eqref{eq:bar_gamma} can be evaluated in constant time $O(1)$. To this end, note first that $\delta(q,a) \neq \emptyset$ can be evaluated in time $O(1)$ (assuming access to the values of $\delta$). Moreover, condition $({U}_i \xrightarrow{q} M) \in {T}, M \in \M$ translates into determining if $M \in \{ U_{i-1},D_\beta \}$ such that $({U}_i \xrightarrow{q} M) \in {T}$, and can therefore also be checked in time $O(1)$. Thus the conditions in Equation \eqref{eq:bar_gamma} can be checked in time $O(1)$. Moreover, given access to the values of $\gamma$ and $c_a^q$, the values in Equation \eqref{eq:bar_gamma} can be obtained in time $O(1)$. We conclude that Equation \eqref{eq:bar_gamma} can be evaluated in time $O(1)$ (given access to $\delta$, $\gamma$, $U_{i+1}$ and $D_\beta$).} Thus, we can construct $\bar{U}_1$ in time $O(1)$ by having a reference to $U_1$. Thus, line \ref{alg:reduce:line4} has time complexity $O(1)$. Continuing, we get that line \ref{alg:reduce:line5} can be done in time $O(b_s)$ (specifying all arcs), and line \ref{alg:reduce:line6} in time $O(1)$. Analogously, we get that line \ref{alg:reduce:line8} has time complexity $O(1)$, line \ref{alg:reduce:line9} has time complexity $O(b_s)$, line \ref{alg:reduce:line10} has time complexity $O(1)$ and lines \ref{alg:reduce:line11}--\ref{alg:reduce:line12} have time complexity $O(1)$. Thus, the for-loop in lines \ref{alg:reduce:line7}--\ref{alg:reduce:line12} has time complexity $O(\mathrm{depth}(Z) b_s)$ (since the loop has at most $\mathrm{depth}(Z)$ iterations). Finally, lines \ref{alg:reduce:line15}--\ref{alg:reduce:line16} have time complexity $O(1)$. We conclude that the $U_\mathrm{path}$-part (lines \ref{alg:reduce:line4}--\ref{alg:reduce:line16}) can be done in time $O(\mathrm{depth}(Z) b_s)$. A similar analysis shows that the $D_\mathrm{path}$-part (lines \ref{alg:reduce:line18}--\ref{alg:reduce:line27}) also has time complexity $O(\mathrm{depth}(Z) b_s)$. Hence, Algorithm \ref{alg:reduce} has time complexity~$O(\mathrm{depth}(Z) b_s)$.
%\elis{Finish this proof later!}
\item
Algorithm  \ref{alg:add_nodes} has time complexity $O(|\Sigma| \mathrm{depth}(Z))$ due to its two for-loops (lines \ref{alg:add_nodes:line2} and \ref{alg:add_nodes:line6}) going through $O(|\Sigma| \mathrm{depth}(Z))$ elements. 
\item
Algorithm \ref{alg:compute_exit_transitions} has time complexity $O(|\Sigma| \mathrm{depth}(Z))$ due to the two for-loops at lines \ref{alg:compute_exit_transitions:line4} and \ref{alg:compute_exit_transitions:line5} going through $O(|\Sigma| \mathrm{depth}(Z))$ elements.
\item
Algorithm \ref{alg:get_search_states} has time complexity $O(|\Sigma|)$ since the computation bottlenecks are given by the for-loop on line \ref{alg:get_search_states:line8} and the append operation on line \ref{alg:get_search_states:line19}, both with time complexity $O(|\Sigma|)$.
\item
Algorithm \ref{alg:add_arcs} has time complexity $O(|\Sigma|)$ due to the for-loop at line \ref{alg:add_arcs:line4}. 

\end{itemize}
\end{proof}

Using Lemma \ref{lemma:time_complexity}, we now show the time complexity of $\mathrm{Solve}$, given by Algorithm \ref{alg:solve_detailed_version}:
\begin{lemma}\label{lemma:time_complexity_solve}
The time complexity of $\mathrm{Solve}$ given by Algorithm \ref{alg:solve_detailed_version} equals
\begin{align*}
O \Big ( \depth(Z) |\Sigma| \cdot \Big [ b_s |\Sigma| + (b_s+|\Sigma| ) \log \big ( (b_s+|\Sigma| )\big ) \Big]  \Big ) + \nonumber \\
O\Big (|\Sigma|^2 \depth(Z)+ |\Sigma| \depth(Z) \log \big ( |\Sigma| \depth(Z) \big ) \Big ) + \nonumber \\
O\Big (\depth(Z) \cdot \big ( b_s |\Sigma|+b_s\log(b_s) \big ) \Big )
\end{align*}
\end{lemma}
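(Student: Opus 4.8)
The plan is to walk through Algorithm \ref{alg:solve_detailed_version} line by line, splitting the count into Part 1 (lines \ref{alg:solve_detailed_version:line2}--\ref{alg:solve_detailed_version:line11}) and Part 2 (lines \ref{alg:solve_detailed_version:line13}--\ref{alg:solve_detailed_version:line26}), using Lemma \ref{lemma:time_complexity} for the subroutine costs and the standard $O(E+V\log V)$ bound (with $E$ arcs, $V$ nodes) for each Dijkstra / bidirectional-Dijkstra call. The three displayed terms in the statement will turn out to come, respectively, from the local-search for-loop of Part 1, from the bidirectional search on the auxiliary graph $G$ in Part 1, and from the while-loop of Part 2; everything else is dominated.

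For Part 1, lines \ref{alg:solve_detailed_version:line2} and \ref{alg:solve_detailed_version:line3} cost $O(|\Sigma|\depth(Z))$ each by Lemma \ref{lemma:time_complexity}. For the for-loop in lines \ref{alg:solve_detailed_version:line4}--\ref{alg:solve_detailed_version:line8}, I would first record that it runs $n\le\depth(Z)$ times, that $|I|=O(|\Sigma|)$ and $|D|=O(|\Sigma|)$ by inspection of Algorithm \ref{alg:get_search_states} (each is a constant number of distinguished states plus at most one state per input symbol), and that the machine $\bar U_i$ augmented with the exit book-keeping states has at most $b_s+|\Sigma|$ states and at most $b_s|\Sigma|$ arcs. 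Hence each call in line \ref{alg:solve_detailed_version:line7} costs $O(b_s|\Sigma|+(b_s+|\Sigma|)\log(b_s+|\Sigma|))$ while lines \ref{alg:solve_detailed_version:line5} and \ref{alg:solve_detailed_version:line8} each cost $O(|\Sigma|)$ by Lemma \ref{lemma:time_complexity}; multiplying by $|I|=O(|\Sigma|)$ and by $\depth(Z)$ yields the first displayed term. For line \ref{alg:solve_detailed_version:line11} I would bound $G$: by Algorithm \ref{alg:add_nodes} it has $O(|\Sigma|\depth(Z))$ nodes, and since Algorithm \ref{alg:add_arcs} contributes, per level, at most one arc per pair (source state in $I$, destination in $D$), $G$ has $O(|\Sigma|^2\depth(Z))$ arcs; the $O(E+V\log V)$ bound then gives the second displayed term.

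For Part 2, lines \ref{alg:solve_detailed_version:line13}--\ref{alg:solve_detailed_version:line16} are $O(b_s)$, and the while-loop in lines \ref{alg:solve_detailed_version:line17}--\ref{alg:solve_detailed_version:line23} iterates at most $\depth(Z)$ times because each pass descends one level in the $\bar D$-chain; each pass runs one bidirectional Dijkstra on a machine with at most $b_s$ states and $b_s|\Sigma|$ arcs, costing $O(b_s|\Sigma|+b_s\log b_s)$, plus a trajectory concatenation in line \ref{alg:solve_detailed_version:line19} whose cost is bounded by the length of the just-computed trajectory and hence by $O(b_s)$. Multiplying by $\depth(Z)$ gives the third displayed term, the concatenation in line \ref{alg:solve_detailed_version:line26} is absorbed, and summing the three terms — noting that the $O(|\Sigma|\depth(Z))$ contributions of lines \ref{alg:solve_detailed_version:line2}--\ref{alg:solve_detailed_version:line3} are dominated by the first term — finishes the proof.

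The main obstacle is bookkeeping rather than insight: one must carefully justify the cardinality bounds $|I|,|D|=O(|\Sigma|)$ and the size bounds $O(b_s+|\Sigma|)$ states / $O(b_s|\Sigma|)$ arcs for the augmented local machines, so that the per-level Dijkstra cost comes out exactly as claimed, and — most delicately — that the auxiliary graph $G$ really has only $O(|\Sigma|^2\depth(Z))$ arcs, which hinges on seeing that each local search contributes at most one arc per source--destination pair. Once those counts are pinned down, the remainder is a routine summation.
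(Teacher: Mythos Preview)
Your proof is correct and follows essentially the same line-by-line approach as the paper's: both split into Part 1 and Part 2, use Lemma \ref{lemma:time_complexity} for the subroutine costs, apply the $O(E+V\log V)$ Dijkstra bound to the augmented local machines and to the auxiliary graph $G$, and arrive at the three displayed terms in exactly the same way. The only minor omissions are that line \ref{alg:solve_detailed_version:line14} actually costs $O(\depth(Z))$ (not $O(b_s)$) because it evaluates $\mathrm{start}(\bar D_\beta)$, and the paper separately tracks the amortised $O(\depth(Z))$ total cost of the $\mathrm{start}$ calls in line \ref{alg:solve_detailed_version:line20}; both are dominated by the third term and do not affect your conclusion.
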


\begin{algorithm}[t]
\caption{Expand}\label{alg:expand}
\begin{algorithmic}[1]
\Require HiMM$ \; Z =(X,T)$, $\{z^{M}_a\}^{M \in X}_{a \in \Sigma}$ from Optimal Exit Computer, and trajectory $z = (q_i,a_i)_{i=1}^N$ of $\bar{Z}$.
\Ensure An optimal plan $\mathrm{optimal\_plan}$
\State $\mathrm{optimal\_plan} \gets []$. \label{alg:expand:line1}
\For{$(q_i,a_i)$ in $z$\label{alg:expand:line2}}
\State $u_i \gets \mathrm{Input\_expand}(q_i,a_i)$
\State Append $u_i$ to $\mathrm{optimal\_plan}$ \label{alg:expand:line4}
\EndFor
\State return $\mathrm{optimal\_plan}$ \label{alg:expand:line6} %$u \gets (u_1,\dots,u_N)$
\State %return $z \gets \mathrm{Trajectory\_expansion}(q,x)$
\State $\mathrm{Input\_expand}(q,a)$: \Comment{Expand recursively} % $X \gets x$
\If {$q \in S_Z$}
\State return $a$. \label{alg:expand:line10}
\Else
\State $\mathrm{optimal\_exit} \gets []$ \label{alg:expand:line12}
%\State $(q_1,x_1),\dots,(q_m,x_m) \gets z^{N_q}_x$ 
\For {$(q_i,a_i)$ in $z^{N_q}_a$\label{alg:expand:line13}} \Comment{$z^{N_q}_a$ from Optimal Exit Computer}
\State $u_i  \gets  \mathrm{Input\_expand}(q_i,a_i)$
\State Append $u_i$ to $\mathrm{optimal\_exit}$ \label{alg:expand:line15}
\EndFor
\State return $\mathrm{optimal\_exit}$ \label{alg:expand:line17} %$(u_1, \dots, u_m)$ \Comment{$m$ is the length of $z^{N_q}_x$}
\EndIf 
\end{algorithmic}
\end{algorithm}

\begin{proof}
 We start with Part 1. Line \ref{alg:solve_detailed_version:line2}--\ref{alg:solve_detailed_version:line3} has time complexity $O(|\Sigma| \depth(Z))$. Line \ref{alg:solve_detailed_version:line5} has time complexity $O(|\Sigma|)$. Line \ref{alg:solve_detailed_version:line7} has time complexity $O(E+V\log(V))$, where $E$ is the number of arcs and $V$ is the number of nodes in the Dijkstra search (considering the machine as a graph).\footnote{We assume the Dijkstra search uses a Fibonacci heap due to optimal time complexity, see \cite{DijkstraFibonacci} for details.} In our case, $V = |Q|+|\Sigma|$ due to the extra $|\Sigma|$ nodes $\{\mathrm{exit}_a\}_{a \in \Sigma}$ we add to $\bar{U}_i$ to represent the different exits (see Section \ref{solve_algorithm_and_correctness}).\footnote{Adding the extra nodes $\{\mathrm{exit}_a\}_{a \in \Sigma}$ to $\bar{U}_i$ can be done in time $O(|Q| |\Sigma|)$ by checking every state $q$ and input $a$ of $\bar{U}_i$ and adding an arc to $\mathrm{exit}_a$ if $\delta_{\bar{U}_i}(q,a) = \emptyset$. This is lower than the time complexity of the search itself, and will therefore not contribute to the time complexity of Line \ref{alg:solve_detailed_version:line7}.} Furthermore, $E = |Q| |\Sigma|$ since all nodes in $\bar{U}_i$ has $|\Sigma|$ arcs except $\{\mathrm{exit}_a\}_{a \in \Sigma}$ which does not have any arcs. Thus, line \ref{alg:solve_detailed_version:line7} has time~complexity 
\begin{equation*}
O \Big (b_s |\Sigma| + (b_s+|\Sigma| ) \log \big ( (b_s+|\Sigma| ) \big ) \Big ).
\end{equation*} 
Moreover, line \ref{alg:solve_detailed_version:line8} has time complexity $O(|\Sigma |)$. Since the number of elements of $I$ is bounded by $|\Sigma|+2$, we get that the for-loop line \ref{alg:solve_detailed_version:line6}--\ref{alg:solve_detailed_version:line8} has time complexity
\begin{align*}
O \Big ( [|\Sigma|+2] \Big [ b_s |\Sigma| + (b_s+|\Sigma| ) \log \big ( (b_s+|\Sigma| ) \big ) \Big]  \Big ) = \\
O \Big ( |\Sigma| \Big [ b_s |\Sigma| + (b_s+|\Sigma| ) \log \big ( (b_s+|\Sigma| ) \big ) \Big]  \Big )
\end{align*} 
The time complexity for line \ref{alg:solve_detailed_version:line4}--\ref{alg:solve_detailed_version:line8} is therefore
\begin{align}\label{eq:time_comp_1}
O \Big ( \depth(Z) |\Sigma| \Big [ b_s |\Sigma| + (b_s+|\Sigma| ) \log \big ( (b_s+|\Sigma| )\big ) \Big]  \Big ),
\end{align} 
which is also the time complexity for line \ref{alg:solve_detailed_version:line2}--\ref{alg:solve_detailed_version:line8}. Finally, the time complexity for bidirectional Dijkstra is the same as for Dijkstra's algorithm \cite{bast2016route}. Thus, line \ref{alg:solve_detailed_version:line11} has time complexity $O(E+V\log(V))$ where $E$ and $V$ is now the number of arcs and nodes of $G$. The number of nodes $V$ is bounded by $(|\Sigma|+2) \depth(Z)$ since we have at most $|\Sigma|+2$ nodes of $G$ in $\bar{U}_i$. Thus, $V \leq C |\Sigma| \depth(Z)$ for some constant $C>1$ and
\begin{align*}
V \log (V) \leq C |\Sigma| \depth(Z) \log \big ( C |\Sigma| \depth(Z) \big ) = \\
C |\Sigma| \depth(Z) \Big ( \log(C)+\log \big (|\Sigma| \depth(Z) \big ) \Big ) \leq \\
\tilde{C} |\Sigma| \depth(Z) \log \big ( |\Sigma| \depth(Z) \big ) 
\end{align*}
for some constant $\tilde{C}>1$ as long as $|\Sigma|>1$ (which is an implicit and reasonable  assumption in this paper since we are looking at planning problems and $|\Sigma|=1$ would imply no planning since there is then only one input to pick). Thus, we have that 
\begin{align*} 
V \log(V) = O \Big (  |\Sigma| \depth(Z) \log \big ( |\Sigma| \depth(Z) \big ) \Big ).
\end{align*}
Furthermore, the number of arcs is bounded by $|\Sigma| V \leq \Sigma| (|\Sigma|+2) \depth(Z)$. Hence, the time complexity for line \ref{alg:solve_detailed_version:line11} is
\begin{align}\label{eq:time_comp_2}
O(|\Sigma|^2 \depth(Z)+V\log(V)) = \nonumber \\
O\Big (|\Sigma|^2 \depth(Z)+ |\Sigma| \depth(Z) \log \big ( |\Sigma| \depth(Z) \big ) \Big )
\end{align}
%\begin{align}\label{eq:time_comp_2}
%O\Big (\depth(Z) b_s |\Sigma|+\depth(Z) b_s \log \big (\depth(Z) b_s \big ) \Big).
%\end{align}
The time complexity for Part 1 is given by adding Equation \eqref{eq:time_comp_1} and \eqref{eq:time_comp_2}:
\begin{align}\label{eq:time_comp_part1}
O \Big ( \depth(Z) |\Sigma| \cdot \Big [ b_s |\Sigma| + (b_s+|\Sigma| ) \log \big ( (b_s+|\Sigma| )\big ) \Big]  \Big ) + \nonumber \\
O\Big (|\Sigma|^2 \depth(Z)+ |\Sigma| \depth(Z) \log \big ( |\Sigma| \depth(Z) \big ) \Big )
\end{align}
We continue with the time complexity of Part 2. Line \ref{alg:solve_detailed_version:line13} has time complexity $O(1)$, line \ref{alg:solve_detailed_version:line14} $O(\depth(Z))$, line \ref{alg:solve_detailed_version:line15} $O(1)$ and line \ref{alg:solve_detailed_version:line16} $O(1)$. Let's now consider the while-loop line \ref{alg:solve_detailed_version:line17}--\ref{alg:solve_detailed_version:line23}. Line \ref{alg:solve_detailed_version:line18} has time complexity $O(E+V\log(V))$ where $E$ and $V$ are the number of arcs and nodes of $\bar{D}_i$ (seen as a graph). We have $V = |Q| \leq b_s$ and $E \leq |Q| |\Sigma| \leq b_s |\Sigma|$. Thus, line \ref{alg:solve_detailed_version:line18} has time complexity $O(b_s |\Sigma|+b_s\log(b_s))$. Since the while-loop can be called at most $\depth(Z)$ times, the total contribution line \ref{alg:solve_detailed_version:line18} has to the time complexity is $O\Big (\depth(Z) \cdot \big ( b_s |\Sigma|+b_s\log(b_s) \big ) \Big )$. Lines \ref{alg:solve_detailed_version:line19} and \ref{alg:solve_detailed_version:line21}--\ref{alg:solve_detailed_version:line23} all have $O(1)$ in time complexity, thus having total contribution $O(\depth(Z))$ over the while-loop. Finally, line \ref{alg:solve_detailed_version:line20} calls the function $\mathrm{start}$, which in turn recursively calls itself until it settles at a state further down in the hierarchy of machines. The total number of times $\mathrm{start}$ can recursively call itself over the whole while-loop is bounded by $\depth(Z)$, since this is the maximum number of steps one can go down in the hierarchy. Thus, line \ref{alg:solve_detailed_version:line20} has time complexity contribution $O(\depth(Z))$ over the whole while-loop. We conclude that the while-loop lines \ref{alg:solve_detailed_version:line17}--\ref{alg:solve_detailed_version:line23} has time complexity 
\begin{equation}\label{eq:time_comp_part2}
O\Big (\depth(Z) \cdot \big ( b_s |\Sigma|+b_s\log(b_s) \big ) \Big ),
\end{equation}
which is also the time complexity of Part 2. The time complexity for $\mathrm{Solve}$ is obtained by adding Equation \eqref{eq:time_comp_part1} and \eqref{eq:time_comp_part2}:
\begin{align*}
O \Big ( \depth(Z) |\Sigma| \cdot \Big [ b_s |\Sigma| + (b_s+|\Sigma| ) \log \big ( (b_s+|\Sigma| )\big ) \Big]  \Big ) + \nonumber \\
O\Big (|\Sigma|^2 \depth(Z)+ |\Sigma| \depth(Z) \log \big ( |\Sigma| \depth(Z) \big ) \Big ) + \nonumber \\
O\Big (\depth(Z) \cdot \big ( b_s |\Sigma|+b_s\log(b_s) \big ) \Big ).
\end{align*}
This completes the proof.
\end{proof}
We now continue with the the time complexity of $\mathrm{Expand}$:
\begin{lemma}\label{lemma:time_complexity_expand}
The time complexity of $\mathrm{Expand}$ given by Algorithm \ref{alg:expand} equals $O(\mathrm{depth}(Z))$ for obtaining the next input $u$ and $O(\mathrm{depth}(Z) |u|)$ for obtaining the whole optimal plan $u$ at once, where $|u|$ is the length of $u$.
\end{lemma}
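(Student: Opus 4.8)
The plan is to view a single execution of $\mathrm{Expand}$ (Algorithm~\ref{alg:expand}) as a depth-first traversal of a rooted \emph{recursion tree} $\mathcal{R}$ and then to bound its size. The root of $\mathcal{R}$ has one child for each state-input pair $(q_i,a_i)$ of the input trajectory $z$ of $\bar{Z}$; the subtree rooted at a pair $(q,a)$ is a single leaf, emitting the symbol $a$, when $q\in S_Z$, and otherwise has one child for each pair of the stored optimal exit trajectory $z^{N_q}_a$. Thus each leaf of $\mathcal{R}$ contributes exactly one symbol to the returned plan, so $\mathcal{R}$ has exactly $|u|$ leaves. The key structural observation is that $\mathcal{R}$ has depth $O(\depth(Z))$: every recursive call $\mathrm{Input\_expand}(q,a)$ that does not terminate immediately replaces $q$ by states of the machine $N_q$, which lies one layer deeper in the hierarchy of $Z$, so the chain of recursive calls on any root-to-leaf path has length at most $\depth(Z)$.

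For the ``whole plan at once'' bound I would argue as follows. Apart from its recursive calls, each node of $\mathcal{R}$ does only $O(1)$ work: it evaluates the test $q\in S_Z$, allocates the list $\mathrm{optimal\_exit}$, and performs one append per child (lines~\ref{alg:expand:line4} and \ref{alg:expand:line15}); representing the partial plans as singly linked lists with a tail pointer makes each such append a constant-time concatenation. Hence the total running time is $O(|\mathcal{R}|)$. To bound $|\mathcal{R}|$, fix a depth $k$: distinct nodes of $\mathcal{R}$ at depth $k$ have pairwise disjoint subtrees, and each such subtree is finite and hence contains at least one leaf, so there are at most $|u|$ nodes at depth $k$. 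Summing over the $O(\depth(Z))$ admissible depths gives $|\mathcal{R}| = O(\depth(Z)\,|u|)$, which is the claimed bound (in particular the root fan-out $N$ is $\le|u|$, since every pair expands to at least one input).

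For the sequential ``next input'' variant --- obtained by replacing \emph{return}~$a$ in line~\ref{alg:expand:line10} by an immediate emission of $a$ --- the tree $\mathcal{R}$ is never materialised; instead one runs the same depth-first search but pauses at each leaf. Producing the first input amounts to descending the left-most root-to-leaf path of $\mathcal{R}$, which has $O(\depth(Z))$ edges, each processed in $O(1)$ time. Producing the next input from the current leaf is the standard depth-first-search successor step: ascend to the nearest ancestor that still has an unvisited child and descend from there; since $\mathcal{R}$ has depth $O(\depth(Z))$ this traverses $O(\depth(Z))$ edges in $O(1)$ time each. Hence every successive input costs $O(\depth(Z))$, giving the second claim.

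I expect the only real obstacle to be bookkeeping rather than a substantive argument: one must (i) commit to a list representation under which the appends in lines~\ref{alg:expand:line4} and \ref{alg:expand:line15} are genuinely $O(1)$, and (ii) avoid overcounting the internal nodes of $\mathcal{R}$ --- a naive estimate would multiply $|u|$ by the fan-out as well as the depth, whereas the disjoint-subtree count above yields the tight $O(\depth(Z)\,|u|)$. It is also worth stating explicitly that the trajectories $z^{N_q}_a$ are supplied by the Optimal Exit Computer and are accessed by reference (not copied), so reading them does not inflate the per-node cost.
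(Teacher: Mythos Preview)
Your proof is correct and follows essentially the same approach as the paper: both view $\mathrm{Expand}$ as a depth-first traversal of a recursion tree of depth $O(\depth(Z))$ with $O(1)$ work per node, and both derive the per-input bound from the fact that moving from one leaf to the next requires at most $O(\depth(Z))$ ascend/descend steps. The only minor difference is that the paper obtains the $O(\depth(Z)\,|u|)$ total bound directly as $|u|$ times the per-input cost, whereas you give an independent node-counting argument (at most $|u|$ nodes per level, $O(\depth(Z))$ levels); both are valid and yield the same result.
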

\begin{proof}
Consider the recursive function $\mathrm{Input\_expand}$ that Algorithm \ref{alg:expand} uses. This function goes down the tree of $Z$ by recursively expanding trajectories from the Optimal Exit Computer in a depth-first manner. This depth-first recursion can happen at most $\depth(Z)$ times before one of the recursive calls of $\mathrm{Input\_expand}$ returns an input $a$ (line \ref{alg:expand:line10}). After an input $a$ (line \ref{alg:expand:line10}) has been returned, there is possibly several returns of $\mathrm{Input\_expand}$ function calls higher up in the hierarchy, maximally $\depth(Z)$ returns, before one of the $\mathrm{Input\_expand}$ function calls considers the next $(q_i,a_i)$ in its for-loop (line \ref{alg:expand:line13}), or the for-loop on line \ref{alg:expand:line2} considers the next $(q_i,a_i)$. This next $(q_i,a_i)$ is then considered by $\mathrm{Input\_expand}$, again with a depth-first recursion having at most $\depth(Z)$ function calls. Since going down recursively and going up consists of just $O(1)$ operations (e.g., appending $u_i$ to $\mathrm{optimal\_exit}$ at line \ref{alg:expand:line15} has time complexity $O(1)$), the time complexity for arriving (again) at line \ref{alg:expand:line10}, is bounded by the number of returns and calls we maximally can have before before arriving at line \ref{alg:expand:line10}. By above, this bound is $O(\depth(Z))$. Thus, the time complexity for obtaining the next optimal input is $O(\depth(Z))$. In particular, the time complexity for obtaining the whole optimal plan $u$ at once is $O(\mathrm{depth}(Z) |u|)$.
\end{proof}
We can now easily deduce Proposition \ref{efficient_time_complexity}:
\begin{proof}[Proof of Proposition \ref{efficient_time_complexity}]
Proposition \ref{efficient_time_complexity} follows by combining Lemma \ref{lemma:time_complexity} (stating the time complexity of $\mathrm{Reduce}$), with Lemma \ref{lemma:time_complexity_solve} (stating the time complexity of $\mathrm{Solve}$) and Lemma \ref{lemma:time_complexity_expand} (stating the time complexity of $\mathrm{Expand}$).
\end{proof}
%\elis{I have checked all these time complexity expressions (in all lemmas) with the ecc paper and they seem to match so hopefully correct.}

%\subsection{Optimal Planner}

%\subsubsection{Reduce: Additional details}

\subsection{Reconfigurable Hierarchical Planning: Additional Details}
In this section, we provide details concerning the marking procedure and prove Proposition \ref{prop:optimal_exit_costs_correct_reconfig} and  \ref{prop:reconfigurability} given in Section \ref{reconfig_hier_planning}, starting with the former.

\begin{algorithm}[t]
\caption{Mark}\label{alg:mark}
\begin{algorithmic}[1]
\Require Modification $m$ and (modified) HiMM $Z$
\Ensure Appropriate marking of $Z$
\State Let $M$ be the MM targeted by $m$
\State $\mathrm{mark\_itself}(M,Z)$ \label{alg:mark:line2}
%\If{$m$ not composition}
\State $\mathrm{mark\_ancestors}(M,Z)$ \label{alg:mark:line3}
%\EndIf
\end{algorithmic}
\end{algorithm}

\subsubsection{Marking Procedure}
The details of the marking procedure for a modification $m$ is given by Algorithm~\ref{alg:mark}. Here, $M$ is the MM targeted by $m$, specified in Definition \ref{def:state_addition} to \ref{def:composition} for each type of modification. To understand the marking, note that when modifying $M$, the optimal exit costs of $M$ may change and thus needs to be recomputed (line \ref{alg:mark:line2}). Moreover, if $M$ is nested within other MMs, then these ancestor MMs of $M$ might also need to recompute their optimal exit costs (line \ref{alg:mark:line3}). All other MMs are not affected by the change.
%\footnote{Exception is when $m$ is a composition since $M$ is then the root of $Z$.}

\begin{convention}\label{marking_convention}
We use the convention that an HiMM has all MMs marked when initialised since no optimal exit costs have been computed yet. In this way, we can compute the optimal exit costs in the same way whether we have never computed them or just need to update some due to a change. Moreover, we always mark an HiMM correspondingly when modifying it, thereby always maintaining correctly marked HiMMs.
\end{convention}

\subsubsection{Proofs}
We continue with the proofs.

\begin{proof}[Proof of Proposition \ref{prop:optimal_exit_costs_correct_reconfig}]
%\elis{This proof is copy-pasted from the CDC article}
Let HiMM $Z =(X,T)$ be given. The key observation for showing Proposition \ref{prop:optimal_exit_costs_correct_reconfig} is to note that the marked MMs of $Z$ will always either form a subtree of $T$ that includes the root of $T$, or be empty (where the latter is the case if all MMs in $Z$ have optimal exit costs that are up-to-date). Therefore, Algorithm \ref{alg:update_offline_step} (starting at the root of $T$) will always reach all MMs that are marked and since the computation for a specific MM is correct (Proposition \ref{th:compute_optimal_exits_correct}), we conclude that Algorithm \ref{alg:update_offline_step} indeed returns the correct optimal exit costs and corresponding trajectories. Thus, it remains to show the key observation (i.e., that the marked MMs of $Z$ will always either form a subtree of $T$ that includes the root of $T$, or be empty), which is intuitively clear, but needs a technical proof to account for the HiMMs that one adds to form $Z$. The proof is given below, where we for brevity say that a HiMM $Z$ that fulfils the key observation has the subtree-property.

We prove the key observation by induction. More precisely, note first that $Z$ is formed by starting with a set $\mathcal{Z}$ of initialised HiMMs that through modifications eventually form $Z$. Let $n$ be the number of times we add HiMMs in this procedure using either state addition or composition. We prove that $Z$ has the subtree-property by induction over $n$. 

\subsubsection*{Base Case}
Consider first the base case $n=0$. In this case, $\mathcal{Z}$ can only contain one element\footnote{Since if $\mathcal{Z}$ would contain at least two elements, then these HiMMs would eventually need to be composed or added to one another (to form $Z$ in the end), which is not possible if $n=0$.}, call it $Z_1$. Therefore, we only start with $Z_1$, which is first initialised and changed through modifications to eventually form $Z$, where we possibly also compute some of the optimal exit costs using Algorithm \ref{alg:update_offline_step} along the way (see Convention \ref{marking_convention}). Formally, this procedure is nothing but a sequence of changes on the form
\begin{equation}\label{eq:proof_sequence}
Z_1 \xrightarrow{o_1} Z^{(1)} \xrightarrow{o_2} Z^{(2)} \xrightarrow{o_3} \dots \xrightarrow{o_m} Z
\end{equation}
where $o_i$ is either a modification, or $o_i = \star$ specifying that we call Algorithm \ref{alg:update_offline_step}. More precisely, for a modification $o_i$, $Z \xrightarrow{o_i} Z'$ means that we changed $Z$ to $Z'$ through the modification $o_i$, while $Z \xrightarrow{\star} Z'$ means that we called Algorithm~\ref{alg:update_offline_step} on $Z$ with $Z'$ as a result (in the latter case, note that $Z'$ only differs by which MMs are marked). We will prove that all HiMMs in \eqref{eq:proof_sequence} fulfil the subtree-property, by induction over the sequence \eqref{eq:proof_sequence}. For the base case, note that $Z_1$ trivially fulfils the subtree-property since all MMs are marked (due to initialisation, see Convention \ref{marking_convention}). For the induction step, assume that $W$ is an HiMM in \eqref{eq:proof_sequence} that fulfils the subtree-property, and changed to $W'$ through $o$. We prove that $W'$ then also fulfils the subtree-property, separating the proof into cases:
\begin{enumerate}[(i)]
\item Consider first the case when $o = \star$. Then we call Algorithm \ref{alg:update_offline_step} on $W$ to get $W'$. Since $W$ has the subtree-property, Algorithm \ref{alg:update_offline_step} will reach all marked MMs and unmark them. Hence, $W'$ has no marked MMs, and therefore fulfils the subtree-property.
\item Consider now the case when $o$ is a modification. If $o$ is a state subtraction or arc modification, on MM $M$ in $W$ say, then we mark $M$ and all the ancestors of $M$ in $W'$ (due to Convention \ref{marking_convention}).\footnote{In particular, we mark the root MM of $W'$.} Therefore, $W'$ fulfils the subtree-property. If $o$ is a state addition, then $o$ cannot add any HiMM $Z_{\mathrm{add}}$ to an MM $M$ (since this would contract $n=0$). Thus, $o$ can only add a state to an MM $M$. In this case, $M$ and all ancestors of are marked, so $W'$ again fulfils the subtree-property. Finally, $o$ cannot be a composition since $n=0$.
\end{enumerate}
Combining the two cases, we conclude that $W'$ fulfils the subtree-property. By induction over the sequence \eqref{eq:proof_sequence}, all HiMMs in \eqref{eq:proof_sequence} fulfil the subtree-property. In particular, $Z$ fulfils the subtree-property. We have therefore proved the base case $n=0$.

\subsubsection*{Induction Step}
We continue with the induction step. Towards this, assume $n\geq1$. Let $Z_1$ be the last HiMM that is formed by either adding a HiMM using state addition, or using composition.\footnote{Note that $Z_1$ must indeed exist since we start with the HiMMs in $\mathcal{Z}$ but eventually are left with just $Z$.} Thus, from $Z_1$, $Z$ is formed by a sequence of changes as in \eqref{eq:proof_sequence}.\footnote{This follows from the fact that we cannot add another HiMM from $Z_1$ to $Z$, since $Z_1$ was the last HiMM that was formed by adding a HiMM.} Therefore, by just following the proof for the case $n=0$, we know that $Z$ fulfils the subtree-property if $Z_1$ fulfils the subtree-property. It remains to prove that $Z_1$ fulfils the subtree-property. By assumption, $Z_1$ is formed by adding a HiMM using state addition or composition. We consider the two cases separately:
\begin{enumerate}
\item Assume first that $Z_1$ is formed by state addition, adding an HiMM $Z_{\mathrm{add}}$ to an HiMM $Z_{\mathrm{to}}$. Let $n_{\mathrm{add}}$ ($n_{\mathrm{to}}$) be the number of times we add HiMMs to form $Z_{\mathrm{add}}$ ($Z_{\mathrm{to}}$) using either state addition or composition, similar to $n$. Then, $n_{\mathrm{add}} < n$ and $n_{\mathrm{to}} < n$.\footnote{This is because every time we add an HiMM to form e.g., $Z_{\mathrm{add}}$, this also counts as adding an HiMM to form $Z$, and we add at least one more HiMM to form $Z$, due to the formation of $Z_1$. Hence, $n_{\mathrm{add}} < n$. The argument for $n_{\mathrm{to}} < n$ is analogous.} By induction, $Z_{\mathrm{add}}$ and $Z_{\mathrm{to}}$ fulfils the subtree-property. Let $Z_{\mathrm{add}}$ be added to the MM $M$ in $Z_{\mathrm{to}}$. Then $M$ and its ancestors are marked\footnote{Note that this connects the marked subtree of $Z_{\mathrm{add}}$ with the marked subtree of $Z_{\mathrm{to}}$.} in $Z_1$ and since both $Z_{\mathrm{add}}$ and $Z_{\mathrm{to}}$ fulfils the subtree property, we conclude that $Z_1$ fulfils the~subtree-property.
\item Assume now that $Z_1$ is formed by composition using an MM $M$. By induction, analogous to the state addition case, all the HiMMs in the composition fulfils the subtree-property, and since $M$ (the root of $Z_1$) is marked, we conclude that $Z_1$ also fulfils the~subtree-property.
\end{enumerate}
Combining the two cases, we conclude that $Z_1$ fulfils the subtree-property, hence, $Z$ fulfils the subtree-property. 

We have showed both the base case and the induction step. Therefore, by induction over $n$, $Z$ always fulfils the subtree-property, proving the key observation. Therefore, the proof of Proposition \ref{prop:optimal_exit_costs_correct_reconfig} is~complete.
\end{proof}

We continue with the proof of Proposition \ref{prop:reconfigurability}.

\begin{proof}[Proof of Proposition \ref{prop:reconfigurability}]
%\elis{This proof, I just copy-pasted from the CDC paper}
Let HiMM $Z=(X,T)$ be given. Following the proof of Proposition \ref{th:compute_optimal_exits_time}, it is easy to conclude that if $K$ MMs are marked in $Z$, then the time complexity of Algorithm \ref{alg:update_offline_step} is $O(K [b_s |\Sigma|+(b_s+|\Sigma|) \log(b_s+|\Sigma|)])$. Since $Z$ has no marked MMs before the modification $m$, the number of MMs in $Z$ after the modification is at most $\mathrm{depth}(Z)$ if $m$ is a state addition, state subtraction, or arc modification, and at most 1 if $m$ is a composition. Hence, marking takes time $O(\mathrm{depth}(Z))$ for state addition, state subtraction, or arc modification, and $O(1)$ composition. Furthermore, using the observation above, running Algorithm \ref{alg:update_offline_step} therefore takes time $O(\mathrm{depth}(Z) [b_s |\Sigma|+(b_s+|\Sigma|) \log(b_s+|\Sigma|)])$ for state addition, state subtraction, or arc modification, and $O( 1 [b_s |\Sigma|+(b_s+|\Sigma|) \log(b_s+|\Sigma|)])$ for composition. We conclude that the time complexity for executing the Optimal Exit Computer if $m$ is a state addition, state subtraction, or arc modification is $O( \mathrm{depth}(Z) [b_s |\Sigma|+(b_s+|\Sigma|) \log(b_s+|\Sigma|)] )$, and $O([b_s |\Sigma|+(b_s+|\Sigma|) \log(b_s+|\Sigma|)] )$ if $m$ is a composition.
%However if only $K$ MMs needs to updated by Algorithm \ref{alg:mark}, then this reduces to $O(K [b_s |\Sigma|+(b_s+|\Sigma|) \log(b_s+|\Sigma|)])$ (easy to see by following the proof of Proposition 2 in \cite{stefansson2023ecc}). Putting these two observations together, we get: 
\end{proof}

\subsection{Hierarchical Planning with Identical Machines: Additional Details}
We here provide additional details concerning the Online Planner given an HiMM with identical machines, complementing Section \ref{optimal_planner_duplicate_case}. More precisely, the modified reduce step for an HiMM with identical machines is given by Algorithm \ref{alg:reduce_duplicate_case}, where the procedure for computing $\{ U_1,\dots,U_n \}$ and $\{ D_1, \dots, D_m\}$ is given by Algorithm \ref{alg:compute_alpha_and_beta_duplicate_case}. In particular, the modifications in Algorithm \ref{alg:reduce_duplicate_case} are coloured red. These modifications correspond to obtaining the state $q$ that ${U}_i$ (or ${D}_i$) corresponds to. For an HiMM without identical machines, this could be done by checking the parent of ${U}_i$ ($D_i$). For an HMM with identical machines, this is not possible due to ambiguity, and we hence get this information from $s_\mathrm{init}$ ($s_\mathrm{goal}$) instead.  Algorithm \ref{alg:compute_alpha_and_beta_duplicate_case} has been modified significantly, hence, we omit colouring. Here, the main difference is that we instead use $s_\mathrm{init}$ and $s_\mathrm{goal}$ directly to obtain $\{ U_1,\dots,U_n \}$ and $\{ D_1, \dots, D_m\}$, starting from the root of the tree and sequentially proceed~downwards.

\begin{algorithm}[t]
\caption{Reduce [for HiMM with identical machines]}\label{alg:reduce_duplicate_case}
\begin{algorithmic}[1]
\Require $(Z,s_{\mathrm{init}},s_{\mathrm{goal}})$ and $(c_a^M,z_a^M)_{a \in \Sigma, M \in X})$
\Ensure $(\bar{Z},U_{path},D_{path},\alpha,\beta)$
%\State Reduce($Z,s_{init},s_{goal}$):
\State $(U_{path}, D_{path}, \alpha, \beta) \gets$ Compute\_paths($Z,s_\mathrm{init},s_\mathrm{goal}$)
\State Let $\bar{X}$ be an empty set \Comment{To be constructed}
\State Let $\bar{T}$ be an empty tree \Comment{To be constructed}
\State Construct $\bar{U}_1$ from $U_1$ and add to $\bar{X}$
\State Add node $\bar{U}_1$ to $\bar{T}$ with arcs out as $U_1$ in $T$ but no destination nodes
%\State Save $q$ where $U_2 \xrightarrow{q} U_1 \in T$
\State \textcolor{BrickRed}{$q \gets s_\mathrm{init}[-2]$} \Comment{The second last element of $s_\mathrm{init}$.}
\For {$i = 2,\dots,\mathrm{length}(U_{path})-1$}
\State Construct $\bar{U}_i$ from $U_i$ and add to $\bar{X}$
\State Add node $\bar{U}_i$ to $\bar{T}$ with arcs out as $U_i$ in $T$ but no destination nodes
\State Add destination node $\bar{U}_{i-1}$ to $\bar{U}_{i} \xrightarrow{q} \emptyset$ so that $(\bar{U}_{i} \xrightarrow{q}\bar{U}_{i-1}) \in \bar{T}$
\If{$i <\mathrm{length}(U_{path})-1$}
\State \textcolor{BrickRed}{$q \gets s_\mathrm{init}[-(i+1)]$}
\EndIf
%\State Save $q$ where $U_{i+1} \xrightarrow{q} U_i \in T$
\EndFor
\If {$\beta = 0$} \Comment{Then nothing more to do}
\State {return} $(\bar{Z} = (\bar{X},\bar{T}), U_{path})$
\EndIf 
\State Construct $\bar{D}_1$ from $D_1$ and add to $\bar{X}$
\State Add node $\bar{D}_1$ to $\bar{T}$ with arcs out as $D_1$ in $T$ but no destination nodes
%\State Save $q$ where $D_2 \xrightarrow{q} D_1 \in T$
\State \textcolor{BrickRed}{$q \gets s_\mathrm{goal}[-2]$} \Comment{The second last element of $s_\mathrm{goal}$.}
\For {$i = 2,\dots,\beta$}
\State Construct $\bar{D}_i$ from $D_i$ and add to $\bar{X}$
\State Add node $\bar{D}_i$ to $\bar{T}$ with arcs out as $D_i$ in $T$ but no destination nodes
\State Add destination node $\bar{D}_{i-1}$ to $\bar{D}_{i} \xrightarrow{q} \emptyset$ so that $(\bar{D}_{i} \xrightarrow{q}\bar{D}_{i-1}) \in \bar{T}$
%\State Save $q$ where $D_{i+1} \xrightarrow{q} D_i \in T$
\State \textcolor{BrickRed}{$q \gets s_\mathrm{goal}[-(i+1)]$}
\EndFor
\State Add destination node $\bar{D}_{\beta}$ to $\bar{U}_{\alpha+1} \xrightarrow{q} \emptyset$ so that $(\bar{U}_{\alpha+1} \xrightarrow{q} \bar{D}_{\beta}) \in \bar{T}$
\State {return} $(\bar{Z} = (\bar{X},\bar{T}),U_{path})$
\end{algorithmic}
\end{algorithm}

\begin{algorithm}
\caption{Compute\_paths [for HiMM with identical machines]}\label{alg:compute_alpha_and_beta_duplicate_case}
\begin{algorithmic}[1]
\Require $(Z,s_\mathrm{init},s_\mathrm{goal})$
\Ensure $(U_\mathrm{path}, D_\mathrm{path}, \alpha, \beta)$
%\State Compute\_paths($Z,s_{init},s_{goal}$):
\State $U_\mathrm{current} \gets Z.\mathrm{root}$ \Comment{Compute $U_\mathrm{path}$}
\State $U_\mathrm{path} \gets [U_\mathrm{current}]$
\State $ l \gets \mathrm{length}(s_\mathrm{init})$
\For{$i=1,\dots,l-1$}
\State $q \gets s_\mathrm{init}[i]$ \Comment{$s_\mathrm{init}[1]$ is the first entry of $s_\mathrm{init}$}
\State $U_\mathrm{current} \gets U_\mathrm{current}.\mathrm{children}[q]$
\State $U_\mathrm{path}.\mathrm{append}\_\mathrm{first}(U_\mathrm{current})$ \Comment{Append first in $U_\mathrm{path}$}
\EndFor
\State $U_\mathrm{path}.\mathrm{append}\_\mathrm{first}(s_\mathrm{init}[l])$ \Comment{Append $U_0$}
%\State Compute $D_{path}$ as $U_{path}$ but with $D_0 = s_{goal}$. \Comment{Compute the path of $D$s}
\State $D_\mathrm{current} \gets Z.\mathrm{root}$ \Comment{Compute $D_\mathrm{path}$}
\State $D_\mathrm{path} \gets [D_\mathrm{current}]$
\State $ l \gets \mathrm{length}(s_\mathrm{goal})$
\For{$i=1,\dots,l-1$}
\State $q \gets s_\mathrm{goal}[i]$
\State $D_\mathrm{current} \gets D_\mathrm{current}.\mathrm{children}[q]$
\State $D_\mathrm{path}.\mathrm{append}\_\mathrm{first}(D_\mathrm{current})$
\EndFor
\State $D_\mathrm{path}.\mathrm{append}\_\mathrm{first}(s_\mathrm{goal}[l])$ \Comment{Append $D_0$}
% End of new stuff
\State $(i,j) \gets (\mathrm{length}(U_\mathrm{path})-1,\mathrm{length}(D_\mathrm{path})-1)$ \Comment{Compute $\alpha$ and $\beta$}
\State $k \gets 1$
\While {$s_\mathrm{init}[k] = s_\mathrm{goal}[k]$}
\State $(i,j) \gets (i-1,j-1)$
\State $k \gets k+1$
\EndWhile
\State $(\alpha,\beta) \gets (i-1,j-1)$
%\State $\alpha \gets i-1$
\State {return} $(U_\mathrm{path}, D_\mathrm{path}, \alpha, \beta)$
\end{algorithmic}
\end{algorithm}

\subsection{Implementation Details}\label{implementation_details}
In this section, we provide some additional details concerning the implementation. All implementations are coded in Python.

\subsubsection{Dijkstra's algorithm and Bidirectional Dijkstra}
Concerning Dijkstra's algorithm and Bidirectional Dijkstra, the lowest time complexity is theoretically achieved by having Fibonacci heaps as priority queues. However, in practice, simple priority queues, i.e., using a binary heap, are often faster, hence, we use the latter in our implementation.

\subsubsection{Contraction Hierarchies}
To discuss our implementation of Contraction Hierarchies, we start with a general summary of the algorithm, referring to \cite{geisberger2012exact} for additional details. More precisely, Contraction Hierarchies is a shortest path algorithm for weighted graphs\footnote{We obtain an equivalent weighted graph of an HiMM by flattening the HiMM and treat the resulting MM as a weighted graph.} consisting of a preprocessing step and a query step. The preprocessing step removes (contracts) nodes iteratively adding additional arcs, called shortcuts, in the graph to compensate for the contracted nodes (whenever needed). The shortcuts are added in a way to preserve the shortest-path distances. Then, in the query step, the whole original graph plus the shortcuts are searched through using a modified version of Bidirectional Dijkstra to find a shortest path between any two nodes. The Bidirectional Dijkstra is modified in a way such that the backward and forward searches are constrained by the iterative order of when the nodes were contracted. Thus, the contraction order plays an important role for achieving efficient searches. 
There are several heuristics how to pick the contraction order. In our case, we based our implementation on the lecture notes by Hannah Bast, \url{https://ad-wiki.informatik.uni-freiburg.de/teaching/EfficientRoutePlanningSS2012}, and the paper \cite{geisberger2012exact}. In particular, for the preprocessing step determining the contraction order, we used lazy evaluations, selecting the next node $n$ to contract based on the edge difference of $n$ plus the number of neighbours to $n$ already contracted (to account for sparcity). We refer to the lecture notes for details. We stress that there might exist more efficient combinations of heuristics. 

\subsubsection{Differences With Our Planning Algorithm}

Finally, we highlight differences between our planning algorithm and Contraction Hierarchies, starting with the preprocessing step followed by the query step. 

\textbf{Preprocessing step.}
As opposed to Contraction Hierarchies, the preprocessing step of our planning algorithm does not work with a flat representation of the HiMM. Instead, it actively exploits the hierarchy of the HiMM in a principled manner, recursively going through the tree structure of the HiMM to compute exit costs of each MM. These exit costs is the reason our planning algorithm can ``contract'' (i.e., reduce) certain MMs in the query step (resulting in a significant speed-up), and can be seen as the analogue of shortcuts for Contraction Hierarchies (where the latter contract nodes instead of MMs). However, the contraction order of Contraction Hierarchies is instead based on heuristics and may therefore results in creating substantial unnecessary shortcuts (i.e., shortcuts that merely increase the computational load of the preprocessing step, but is not actively used in the query step). Consequently, the preprocessing step of Contraction Hierarchies may be unnecessary slow, and time complexity guarantees seems (to the best of our knowledge) be difficult to obtain in the HiMM-setting, whereas our planning algorithm have formal time complexity guarantees. The preprocessing step of Contraction Hierarchies is also agnostic to the hierarchy which makes it non-trivial to take into account modifications and identical machines. Furthermore, to make Contraction Hierarchies work efficiently, one needs in general some time to obtain a good combination of heuristics. In contrast, our planning algorithm contains nothing that needs to be~chosen.

\textbf{Query step.}
The query step of our planning algorithm uses Bidirectional Dijkstra and Dijkstra's algorithm, while Contraction Hierarchies uses a modified version of Bidirectional Dijkstra. More precisely, our planning algorithm constrain the search space of Bidirectional Dijkstra (and Dijkstra's algorithm) by reducing MMs in the HiMM, while Contraction Hierarchies instead enforces the contraction order onto Bidirectional Dijkstra (working with the flat representation of the HiMM). Depending on the which contraction heuristics are used in Contraction Hierarchies, these two approaches result in different search orders in general.

% Below not needed due to correction:
%\subsection{Contraction Hierarchies: An Error and a Solution}
%In this section, we expand on the issue concerning Contraction Hierarchies given in Remark (??). More precisely, Contraction Hierarchies seems to deviate slightly from optimality on some graphs. One such graph is given in Figure (??). The order of the nodes is given in Table (?). 

\fi

\end{document}